\begin{document}
\title{Classical-Quantum Arbitrarily Varying Wiretap Channel:
 Common Randomness Assisted Code and
Continuity }
\author{Holger Boche \and Minglai Cai \and Christian Deppe\and Janis N\"otzel}
\institute{ Holger Boche \and Minglai Cai \and Christian Deppe \at Lehrstuhl f\"ur Theoretische
Informationstechnik,\\ Technische Universit\"at M\"unchen,\\
Munich, Germany\\
\email{\{boche, minglai.cai, christian.deppe\}@tum.de}
 \and Janis N\"otzel \at
 Universitat Aut\`{o}noma de Barcelona,\\
 Barcelona, Spain\\
\email{Janis.Notzel@uab.cat}}
\titlerunning{Classical-Quantum Arbitrarily Varying Wiretap Channel}

\maketitle
\begin{abstract}
We determine the secrecy capacities under common
randomness assisted coding of arbitrarily varying classical-quantum
wiretap  channels. 
Furthermore, we determine   the secrecy capacity of a mixed channel model which is compound 
from the sender to the legitimate receiver and varies arbitrarily from the sender to the
eavesdropper. 
We 
examine when the secrecy
capacity is a continuous function of the system parameters as an application
and show that
resources, e.g.,
having access to a perfect copy of the outcome of a random experiment,  can guarantee continuity of 
the capacity function of arbitrarily varying classical-quantum wiretap channels.
\end{abstract}

\tableofcontents
\section{Introduction}

In the last few years,
the developments in modern communication systems have produced 
many results in a short amount of time.
 Quantum communication systems, especially,
have developed into a very active
field, setting new properties and limits.
 Our goal is to deliver a  general theory considering both channel robustness against
 jamming
and security against  eavesdropping
in quantum information theory, since
many modern communication systems are
often not perfect, but are vulnerable to jamming and eavesdropping.
The transmitters have to 
solve two main problems. First, the message (a secret key
or a secure message)
has to be encoded robustly, i.e.,
 despite
channel uncertainty, it
 can be decoded correctly by the legitimate receiver.
Second, the message has to be encoded in such a way that
the wiretapper's knowledge of the 
transmitted classical message can be 
kept arbitrarily small. This work is an extension of our
previous paper 
\cite{Bo/Ca/De}.

In our earlier work \cite{Bo/Ca/De}, we  investigated the transmission of messages from a sending  party to a receiving party.
 The messages were kept secret from an eavesdropper. Communication took place over a quantum channel which was, 
in addition to noise from the environment, subjected to the action of a jammer which actively manipulated the states.
The Ahlswede Dichotomy for arbitrarily varying classical-quantum
wiretap  channels has been established, i.e. either the deterministic 
capacity of an arbitrarily varying channel was zero or equal to its shared randomness assisted capacity.
 We also analyzed the secrecy capacity of  arbitrarily
varying classical-quantum wiretap  channels when the sender and the
receiver used various resources and  studied the helpfulness of
certain resources for robust and secure information transmission. 
We found out that even using the weakest non-secure resource (the correlation),
one could achieve the same security capacity using a strong resource
as the common randomness. But, nonetheless, 
a capacity formula was not  given in \cite{Bo/Ca/De}.

In this paper, we carry on our investigation of arbitrarily varying classical-quantum
wiretap  channels and shared randomness. We  
deliver a capacity formula for secure information
transmission through an arbitrarily varying classical-quantum
wiretap    channel using correlation as a resource. Together
with the result of \cite{Bo/Ca/De}, it yields a formula for
deterministic secrecy capacity of the arbitrarily varying classical-quantum
wiretap    channel. Using this  formula, we   analyze the stability of 
secrecy capacity, i.e., we ask under which condition, it is discontinuous as a
function of channel parameters, in other words, when 
small variations in the underlying model dramatically  change the effect of the jammer's actions.

To  determine our capacity formula, we follow the idea of 
 \cite{Bj/Bo/Ja/No} and \cite{Wi/No/Bo} in the classical cases:
At first, we consider a mixed channel model that is called the
arbitrarily varying classical-quantum wiretap channel. Then, we
apply Ahlswede's robustification technique to establish
 the common randomness assisted secrecy capacity of an arbitrarily varying classical-quantum
wiretap    channel.
\vspace{0.2cm}

Quantum mechanics differs significantly from classical mechanics;
it has its own laws.
A quantum
channel is  a communication  channel which can transmit
quantum information. In this paper, we consider  the
classical-quantum channels, i.e., the sender's inputs are classical data
and the receiver's outputs are quantum systems.
The capacity of
  classical-quantum channels has been determined  in
\cite{Ho} and \cite{Sch/Wes}.

In  the model of an \emph{arbitrarily varying channel}, we consider channel
uncertainty, i.e. 
transmission over a channel  which is not stationary, but can change  with every use
of the channel. We  interpret it as a channel with a
jammer who may change his input with every channel use and is not restricted to using a repetitive probabilistic strategy. 
It is understood that the sender and the receiver have to select their
coding scheme first. After that, the jammer makes his choice of the channel state to sabotage the message  transmission.
However, due to the physical properties, we consider
that  the jammer's changes only take  place in a known set. 
The arbitrarily varying channel was first introduced
 in \cite{Bl/Br/Th2}.

As was already mentioned in our earlier work \cite{Bo/Ca/De}, we are interested in the role that 
 shared randomness plays for the arbitrarily 
varying classical-quantum wiretap channel.
This is used in  \cite{Ahl1}, \cite{Ahl2}, and \cite{Ahl3} for the determination of 
the random capacity. 
 \cite{Ahl1}  showed a surprising result
which is now known as the Ahlswede Dichotomy:
Either the 
capacity of an arbitrarily varying channel is zero, or it equals its shared randomness assisted capacity. 
After this discovery, it has remained an open question as to exactly when the deterministic capacity is positive. 
In \cite{Rei},  a sufficient condition for this has been given, and in \cite{Cs/Na} it is
 proved that this condition is also necessary.
In \cite{Ahl0} it has also been shown
  that the capacity of certain arbitrarily varying channels can be equated to the zero-error 
capacity of related discrete memoryless channels.
The  Ahlswede Dichotomy demonstrates the importance of shared randomness for communication in a very clear
 form.

 A
classical-quantum  channel with a jammer is called an arbitrarily varying
classical-quantum  channel. The arbitrarily varying classical-quantum
 channel was introduced in \cite{Ahl/Bli}.  A lower bound for its capacity
has been given. An alternative proof 
and a proof of the
strong converse are given in \cite{Bj/Bo/Ja/No}.  In \cite{Ahl/Bj/Bo/No},
the Ahlswede Dichotomy for the  arbitrarily varying classical-quantum
 channels is established, and a  sufficient and  necessary  condition
for  the  zero deterministic capacity is given.
 In \cite{Bo/No}, a simplification of this  condition
 for the  arbitrarily varying classical-quantum
 channels is given.

In  the
model of a wiretap channel,
 we consider communication
with security. This was first introduced   in \cite{Wyn}
(in this paper, we will use a stronger security criterion than  \cite{Wyn}'s
 security criterion, cf. Remark \ref{remsc}). We
 interpret the  wiretap channel as a channel with an eavesdropper.
The relation of the different security criteria is discussed, for example, in 
\cite{Bl/La} with some generality and in \cite{Wi/No/Bo} with respect to arbitrarily varying channels.

A classical-quantum  channel with an eavesdropper is called a
classical-quantum wiretap channel, its secrecy capacity has been
determined in \cite{De} and \cite{Ca/Wi/Ye}.

In  the model of an 
arbitrarily varying wiretap channel,  we consider  transmission with both a
jammer and an eavesdropper.   Its secrecy capacity has been analyzed
in \cite{Bj/Bo/So2}. A lower bound of the  randomness  assisted secrecy capacity
has been given.

 A
classical-quantum  channel with both a jammer and an eavesdropper is
called an arbitrarily varying classical-quantum  wiretap  channel. It is defined as a family of pairs of
 indexed channels $\{(W_t,V_t) :t=1,\cdots,T\}$  with a common input alphabet
and possible different output alphabets
and connects a sender with
two receivers, a legitimate one and a wiretapper, where $t$ is called a
channel state of the channel pair. The  legitimate receiver accesses
the output of the first part of the pair, i.e.,  the first channel $W_t$  in the pair, and
the wiretapper observes the output of the second part, i.e.,  
the second channel $V_t$, respectively. A channel state $t$, which
varies from symbol to symbol in an  arbitrary manner, governs both
the legitimate receiver's channel and the wiretap  channel. A code for
the channel conveys information to the legitimate receiver such that the
wiretapper knows nothing about the transmitted information in the sense of 
the stronger security criterion (cf. Remark \ref{remsc}). This is
a generalization of compound classical-quantum wiretap
channels in \cite{Bo/Ca/Ca/De},
 when the channel states
are not stationary, but can change over   time.

The secrecy
 capacity   of the  arbitrarily varying classical-quantum  wiretap  channels
has been analyzed in \cite{Bl/Ca}. A lower bound of the  randomness  assisted capacity
has been given, and it has been shown that this bound is  either a lower  bound for
the deterministic capacity, or else the
deterministic capacity is equal to zero.

References \cite{Be/Br} and \cite{Be} are two well-known examples
for
secure
 quantum
information transmission  using quantum key distributions.
Good one-shot results for
quantum channels  with a
wiretapper who is limited in his actions  have been obtained.
But our goal is   to have a more general theory for
channel   security in  quantum information
theory, i.e.,  message transmission should be secure against every
possible kind of  eavesdropping. Furthermore, we are
interested in  asymptotic behavior when we  deliver a large
volume of messages by many channel uses.
Therefore, we consider a new paradigm for the design of quantum
channel systems, which is called \it embedded security\rm. Instead of 
the standard approach in secret communication, i.e. first ensuring 
a successful transmission of messages and then implementing
a cryptographic protocol, here we
 embed protocols with a guaranteed security right from the start
into  the
 physical layer, which is the bottom layer
of the model of communications systems. The concept  covers 
both secure message transmission
and secure key generation.

In \cite{Bo/No}, a classification of various resources is given. A distinction is made between  two extremal cases:
randomness and correlation. Randomness is the strongest resource, and it
requires a perfect copy of the
outcome of a random experiment, and thus, we should assume
an additional perfect channel.
On the other hand, correlation is the weakest resource.  
The work \cite{Bo/No} also puts emphasis on the quantification of the differences between 
correlation and common randomness and used the arbitrarily varying classical-quantum channel as a method of proof.
It can be shown that common randomness is a stronger resource than  correlation in the following sense:
An example is given where not even a finite amount of common randomness can be extracted from a given  correlation.
On the contrary,  a sufficiently large amount of common randomness  allows the sender and receiver to 
asymptotically simulate the statistics of any correlation.



In view of the aforementioned importance of shared randomness for robustness,
it is clear that the shared randomness is not allowed to be known by
the jammer (In stark contrast to this, we assume the eavesdropper has access 
to the outcomes of the shared random experiment). Therefore, backward communication from the eavesdropper to the jammer 
 would render the shared randomness completely useless. Thus we concentrate our
analysis on the case without feedback, i.e.  the eavesdropper cannot
send messages toward the jammer. The communication
from the jammer to the eavesdropper is explicitly possible, i.e.  the
eavesdropper could know the jammer's strategy. It is a challenging task for
future studies when the resource is secure against eavesdropping and
two-way communication between the jammer and the eavesdropper is allowed. In
this case, we have to build a code in such a way that the transmission of both
the message and the randomization is secure.

As an application of our results, we turn to the
question: when
the secrecy capacity is a continuous function of the system parameters?
The analysis of the continuity of  capacities of  quantum channels is
raised from the question 
whether small changes in the channel system are able to cause dramatic losses in the performance.
The continuity of the message and entanglement transmission capacity of a stationary memoryless quantum channel has been
listed as an open problem in \cite{WynSite} and was solved in \cite{Le/Sm}. 
 Considering   channels with active jamming faces an especially  new difficulty.
 The reason is that
 the  capacity in this case is, in
general, not specified by entropy quantities.
In \cite{Bo/No2} it has been shown when
the message transmission capacity of an
 arbitrarily varying quantum channels is continuous.
The condition for continuity of message transmission capacity of a classical 
 arbitrarily varying  wiretap channel has been given 
in \cite{Wi/No/Bo}. 

As a direct consequence of our capacity formula,
we show in this paper that a sharing resource is very helpful for the channel
stability in the sense that it  provides continuity of secrecy capacities.
 \vspace{0.15cm}

 This
paper is organized as follows:\vspace{0.15cm}

The main definitions  are given in
Section \ref{secprem}.

In Section  \ref{CAVWCQC} we determine a capacity
formula for a mixed channel model, i.e.
the enhanced secrecy capacity of  compound-arbitrarily varying wiretap classical-quantum channels.
This formula
will be used for
our result in Section  \ref{SCoavcqwc}.

In Section  \ref{SCoavcqwc} our main result
is presented. In this section we determine the secrecy capacities under common
randomness assisted coding of arbitrarily varying classical-quantum
wiretap  channels.

As an application of our main result, in Section  \ref{iosccits}
we discuss when the secrecy capacity of an arbitrarily varying
classical-quantum wiretap channel is a  continuous quantity of the
system parameters.

\section{Preliminaries}\label{secprem}
\subsection{Basic Notations}       
For a finite set $\mathsf{A}$, we denote the
set of probability distributions on $\mathsf{A}$ by $\mathsf{P}(\mathsf{A})$.
Let $\rho_1$ and  $\rho_2$ be  Hermitian   operators on a  finite-dimensional
complex Hilbert  space $G$.
We say $\rho_1\geq\rho_2$ and $\rho_2\leq\rho_1$ if $\rho_1-\rho_2$
is positive semidefinite.
 For a finite-dimensional
complex Hilbert space  $G$, we denote
the
set of  density operators on $G$ by
\[\mathcal{S}(G):= \{\rho \in \mathcal{L}(G) :\rho  \text{ is Hermitian, } \rho \geq 0_{G} \text{ , }  \mathrm{tr}(\rho) = 1 \}\text{ ,}\]
where $\mathcal{L}(G)$ is the set  of linear  operators on $G$, and $0_{G}$ is the null
matrix on $G$. Note that any operator in $\mathcal{S}(G)$ is bounded.\vspace{0.15cm}

 For  finite-dimensional
complex Hilbert spaces  $G$ and  $G'$,  a quantum channel $N$:
$\mathcal{S}(G) \rightarrow \mathcal{S}(G')$, $\mathcal{S}(G)  \ni
\rho \rightarrow N(\rho) \in \mathcal{S}(G')$ is represented by a
completely positive trace-preserving map
 which accepts input quantum states in $\mathcal{S}(G)$ and produces output quantum
states in  $\mathcal{S}(G')$.

If the sender wants to transmit a classical message of a finite set $A$ to
the receiver using a quantum channel $N$, his encoding procedure will
include a classical-to-quantum encoder 
to prepare a quantum message state $\rho \in
\mathcal{S}(G)$ suitable as an input for the channel. If the sender's
encoding is restricted to transmitting an  indexed finite set of
 quantum states $\{\rho_{x}: x\in \mathsf{A}\}\subset
\mathcal{S}(G)$, then we can consider the choice of the signal
quantum states $\rho_{x}$ as a component of the channel. Thus, we
obtain a channel $\sigma_x := N(\rho_{x})$ with classical inputs $x\in \mathsf{A}$ and quantum outputs,
 which we call a classical-quantum
channel. This is a map $\mathbf{N}$: $\mathsf{A} \rightarrow
\mathcal{S}(G')$, $\mathsf{A} \ni x \rightarrow \mathbf{N}(x) \in
\mathcal{S}(G')$ which is represented by the set of $|\mathsf{A}|$ possible
output quantum states $\left\{\sigma_x = \mathbf{N}(x) :=
N(\rho_{x}): x\in \mathsf{A}\right\}\subset \mathcal{S}(G')$, meaning that
each classical input of $x\in \mathsf{A}$ leads to a distinct quantum output
$\sigma_x \in \mathcal{S}(G')$. In view of this, we have the following
definition.\vspace{0.15cm}

Let $\mathsf{A}$ be a  finite set and $H$ be a finite-dimensional
complex Hilbert space. A classical-quantum channel is
a linear map $W: \mathsf{P}(\mathsf{A})\rightarrow\mathcal{S}(H)$,
$\mathsf{P}(\mathsf{A})  \ni P
\rightarrow W(P) \in \mathcal{S}(H)$.
Let $a\in \mathsf{A}$. For a $P_a\in \mathsf{P}(\mathsf{A})$, 
defined by $P_a(a')= \begin{cases} 1 &\mbox{if } a'=a\\
0 &\mbox{if } a'\not=a \end{cases}$, we 
write $W(a)$ instead of $W(P_a)$.

\begin{remark} In much literature, a classical-quantum channel is
defined as a map $ \mathsf{A}\rightarrow\mathcal{S}(H)$,
$\mathsf{A}  \ni a
\rightarrow W(a) \in \mathcal{S}(H)$. This is a 
special case when the input is limited on the set $\{P_a :
a\in \mathsf{A}\}$. \end{remark}\vspace{0.15cm}

For a probability distribution $P$ on a finite set $\mathsf{A}$  and a positive constant $\delta$,
we denote the set of typical sequences by 
\[\mathsf{T}^n_{P,\delta} :=\left\{ a^n \in \mathsf{A}^n: |\frac{1}{n} N(a' \mid a^n) - P(a')|  \leq \frac{\delta}{n}\forall a'\in \mathsf{A}\right\}\text{ ,}\]
where $N(a' \mid a^n)$ is the number of occurrences of the symbol $a'$ in the sequence $a^n$.\vspace{0.15cm}


Let  $n\in\mathbb{N}$.
we define $\mathsf{A}^n:= \{(a_1,\cdots,a_n): a_i \in \mathsf{A}
\text{ } \forall i \in \{1,\cdots,n\}\}$. 
 The space which the vectors
$\{v_1\otimes \cdots \otimes v_n: v_i \in H
\text{ } \forall i \in \{1,\cdots,n\}\}$ span is denoted 
by $H^{\otimes n}$. We also write $a^n$ 
for the elements of
$\mathsf{A}^n$.

Associated to $W$ is the channel map on the n-block $W^{\otimes n}$: $P({\mathsf{A}}^n)
\rightarrow \mathcal{S}({H}^{\otimes n})$, such that 
 $W^{\otimes n}(P^n) = W(P_1)
\otimes \cdots \otimes W(P_n)$
if $P^n\in \mathsf{P}(\mathsf{A}^{ n})$ can be given by $P^n(a^n)=\prod_j P_j(a_j)$
for every
$a^n = (a_1,\cdots,a_n)$ $\in\mathsf{A}^{ n}$. 
Let $\theta$ $:=$ $\{1,\cdots,T\}$ be a finite set.
Let $\Bigl\{W_t:t\in\theta\Bigr\}$ be a set of classical-quantum channels.
For $t^n=(t_1,\cdots,t_n)$, $t_i\in\theta$ we define the n-block $W_{t^n}$ 
such that  for
 $W_{t^n}(P^n) = W_{t_1}(P_1)
\otimes \cdots \otimes W_{t_n}(P_n)$
if $P^n\in \mathsf{P}(\mathsf{A}^{ n})$ can be given by $P^n(a^n)=\prod_j P_j(a_j)$
for every
$a^n $ $\in\mathsf{A}^{ n}$. \vspace{0.15cm}

For a quantum state $\rho\in \mathcal{S}(G)$ we denote the von Neumann
entropy of $\rho$ by \[S(\rho)=- \mathrm{tr}(\rho\log\rho)\text{
.}\]

Let $\mathfrak{P}$ and $\mathfrak{Q}$ be quantum systems. We 
denote the Hilbert space of $\mathfrak{P}$ and $\mathfrak{Q}$ by 
$G^\mathfrak{P}$ and $G^\mathfrak{Q}$, respectively. Let $\phi^\mathfrak{PQ}$ be a bipartite
quantum state in $\mathcal{S}(G^\mathfrak{PQ})$. 
We denote the partial
trace over $G^\mathfrak{P}$ by
\[\mathrm{tr}_{\mathfrak{P}}(\phi^\mathfrak{PQ}):= 
\sum_{l} \langle l|_{\mathfrak{P}} \phi^\mathfrak{PQ} |  l \rangle_{\mathfrak{P}}\text{ ,}\]
where $\{ |  l \rangle_{\mathfrak{P}}: l\}$ is an orthonormal basis
of $G^\mathfrak{P}$.
We denote the conditional entropy by
\[S(\mathfrak{P}\mid\mathfrak{Q})_{\phi}:=
S(\phi^\mathfrak{PQ})-S(\phi^\mathfrak{Q})\text{ .}\]
The quantum mutual information is denoted by
\[I(\mathfrak{P};\mathfrak{Q})_{\phi}=S(\phi^\mathfrak{P})+ S(\phi^\mathfrak{Q}) -S(\phi^\mathfrak{PQ})\text{ .}\]
Here $\phi^\mathfrak{Q}=\mathrm{tr}_{\mathfrak{P}}(\phi^\mathfrak{PQ})$ and $\phi^\mathfrak{P}=\mathrm{tr}_{\mathfrak{Q}}(\phi^\mathfrak{PQ})$.
Let
$\mathtt{V}$: $\mathsf{A} \rightarrow
\mathcal{S}(G)$ be a classical-quantum
channel. Following \cite{Ahl/Win}, for $P\in P(\mathsf{A})$
the conditional entropy of the channel for $\mathtt{V}$ with input distribution $P$
is denoted by
 \[S(\mathtt{V}|P) := \sum_{x\in \mathsf{A}} P(x)S(\mathtt{V}(x))\text{
.}\]

Let 
$\Phi := \{\rho_a : a\in \mathsf{A}\}$  be a set of quantum  states labeled by
elements of $\mathsf{A}$. For a probability distribution $P$ on $\mathsf{A}$,
the    Holevo $\chi$ quantity is
defined as
\[\chi(P;\Phi):= S\left(\sum_{a\in \mathsf{A}} P(a)\rho_a\right)-
\sum_{a\in \mathsf{A}} P(a)S\left(\rho_a\right)\text{ .}\] 

For a set $\mathbf{A}$ and a  Hilbert space $G$, let
$\mathbf{V}$: $\mathsf{A} \rightarrow
\mathcal{S}(G)$ be a classical-quantum
channel. For a probability distribution $P$ on $\mathsf{A}$,
the    Holevo $\chi$ quantity  of the channel for $\mathbf{V}$ with input distribution $P$
is
defined as
\[\chi(\mathbf{V};\Phi):= S\left(\mathbf{V} (P)\right)-
S\left(\mathbf{V}|P\right)\text{ .}\] \vspace{0.15cm}

Let
$G$ be a finite-dimensional
complex Hilbert space.
Let $n \in \mathbb{N}$ and $\alpha > 0$.
We suppose $\rho \in \mathcal{S}(G)$ has
the spectral decomposition
$\rho = \sum_{x} P(x)  |x\rangle\langle x|$. Notice that by definition of
the spectral decomposition, the eigenvectors $\{|x\rangle : x\}$ form an orthonormal system
 (sometimes also called the ``computational basis'').
Its
$\alpha$-typical subspace is the subspace spanned
by $\left\{|x^n\rangle: x^n \in {\mathsf{T}}^n_{P, \alpha}\right\}$,
where  $|x^n\rangle:=\otimes_{i=1}^n|x_i\rangle$. The orthogonal subspace projector onto the
typical subspace is
\[ \Pi_{\rho ,\alpha}=\sum_{x^n \in {\mathsf{T}}^n_{P, \alpha}}|x^n\rangle\langle x^n|\text{ .}\]

Similarly let $\mathsf{A}$ be a finite set, and  $G$ be a finite-dimensional
complex Hilbert space.
Let
$\mathtt{V}$: $\mathsf{A} \rightarrow
\mathcal{S}(G)$ be a classical-quantum
channel. For $a\in\mathsf{A}$,  suppose
$\mathtt{V}(a)$ has
the spectral decomposition $\mathtt{V}(a)$
$ =$ $\sum_{j}
V(j|a) |j\rangle_a\langle j|_a$
for a stochastic matrix
$V(\cdot|\cdot)$.
In an effort to enhance readability, we will typically suppress the subscript 
$a$ in the above decomposition and typically write $\sum_j V(j|a)|j\rangle\langle j|$ 
whenever this causes no ambiguity. The same reasoning applies to the next definition.
 The $\alpha$-conditional typical
subspace of $\mathtt{V}$ for a typical sequence   $a^n$ is the
subspace spanned by
 $\left\{\bigotimes_{a\in\mathsf{A}}|j^{\mathtt{I}}\rangle_a : j^{\mathtt{I}_a} \in \mathsf{T}^{\mathtt{I}_a}_{V(\cdot|a),\delta}\right\}$.
Here $\mathtt{I}_a$ $:=$ $\{i\in\{1,\cdots,n\}: a_i = a\}$ is an indicator set that selects the indices $i$ in the sequence $a^n$
$=$ $(a_1,\cdots,a_n)$ for which the $i$th
symbol $a_i$ is equal to $a\in\mathsf{A}$.
The subspace is often referred to as the $\alpha$-conditional typical
subspace of the state  $\mathtt{V}^{\otimes n}(a^n)$.
  The orthogonal subspace projector onto it is defined as
	\[\Pi_{\mathtt{V}, \alpha}(a^n)=\bigotimes_{a\in\mathsf{A}}
\sum_{j^{\mathtt{I}_a} \in {\mathsf{T}}^{\mathtt{I}_a}_{\mathtt{V}(\cdot \mid a^n),\alpha}}|j^{\mathtt{I}_a} \rangle \langle j^{\mathtt{I}_a}|\text{ .}
\]
The typical subspace 
has following  properties:

For $\sigma \in \mathcal{S}(G^{\otimes n})$ and $\alpha > 0$,
there are positive constants $\beta(\alpha)$, $\gamma(\alpha)$, 
and $\delta(\alpha)$, depending on $\alpha$, such that
\begin{equation} \label{te1}
\mathrm{tr}\left({\sigma} \Pi_{\sigma ,\alpha}\right) > 1-2^{-n\beta(\alpha)}
\text{ ,}\end{equation}

\begin{equation} \label{te2}
2^{n(S(\sigma)-\delta(\alpha))}\le \mathrm{tr} \left(\Pi_{\sigma ,\alpha}\right)
\le 2^{n(S(\sigma)+\delta(\alpha))}
\text{ ,}\end{equation}

\begin{equation} \label{te3}
2^{-n(S(\sigma)+\gamma(\alpha))} \Pi_{\sigma ,\alpha} \le \Pi_{\sigma ,\alpha}
{\sigma} \Pi_{\sigma ,\alpha}
\le 2^{-n(S(\sigma)-\gamma(\alpha))} \Pi_{\sigma ,\alpha}
\text{ .}\end{equation}

For 
$a^n \in {\mathsf{T}}^n_{P, \alpha}$, 
there are positive constants $\beta(\alpha)'$, $\gamma(\alpha)'$, 
and $\delta(\alpha)'$, depending on $\alpha$ such that

\begin{equation} \label{te4}
\mathrm{tr}\left(\mathtt{V}^{\otimes n}(a^n) \Pi_{\mathtt{V}, \alpha}(a^n)\right)
> 1-2^{-n\beta(\alpha)'}
\text{ ,}\end{equation}

\begin{align} \label{te5}
&2^{-n(S(\mathtt{V}|P)+\gamma(\alpha)')} \Pi_{\mathtt{V}, \alpha}(a^n)
 \le \Pi_{\mathtt{V}, \alpha}(a^n)\mathtt{V}^{\otimes n}(a^n) \Pi_{\mathtt{V}, \alpha}(a^n)\notag\\
 &\le 2^{-n(S(\mathtt{V}|P)-\gamma(\alpha)')} \Pi_{\mathtt{V}, \alpha}(a^n)
\text{ ,}\end{align}
\begin{equation} \label{te6}
2^{n(S(\mathtt{V}|P)-\delta(\alpha)')}\le \mathrm{tr}\left(
\Pi_{\mathtt{V}, \alpha}(a^n) \right)\le 2^{n(S(\mathtt{V}|P)+\delta(\alpha)')}
\text{ .}\end{equation}

For the classical-quantum channel
$\mathtt{V}: \mathsf{P}(\mathsf{A}) \rightarrow \mathcal{S}(G)$ and a probability
distribution $P$ on $\mathsf{A}$,  we define
 a quantum state $P\mathtt{V}$ $:=$ $\mathtt{V}(P)$ on $\mathcal{S}(G)$.
 For $\alpha > 0$, we define an
orthogonal subspace projector $\Pi_{P\mathtt{V}, \alpha}$
 fulfilling (\ref{te1}), (\ref{te2}), and (\ref{te3}).
Let $x^n\in{\mathsf{T}}^n_{P, \alpha}$.
For $\Pi_{P\mathtt{V}, \alpha}$, there is a positive constant $\beta(\alpha)''$ such that
following inequality holds:
\begin{equation} \label{te7}  \mathrm{tr} \left(  \mathtt{V}^{\otimes n}(x^n) \cdot \Pi_{P\mathtt{V}, \alpha } \right)
 \geq 1-2^{-n\beta(\alpha)''} \text{ .}\end{equation}

We give here a sketch of the proof. For a detailed proof, please see \cite{Wil}.

\begin{proof}
(\ref{te1}) holds because 
$\mathrm{tr}\left({\sigma} \Pi_{\sigma ,\alpha}\right)$
$=$ $\mathrm{tr}\left(\Pi_{\sigma ,\alpha}{\sigma} \Pi_{\sigma ,\alpha}\right) $
$=$ $P({\mathsf{T}}^n_{P, \alpha})$.
(\ref{te2}) holds because
$\mathrm{tr} \left(\Pi_{\sigma ,\alpha}\right)$ $=$
$\left\vert {\mathsf{T}}^n_{P, \alpha} \right\vert$.
(\ref{te3}) holds because
$2^{-n(S(\sigma)+\gamma(\alpha))}$ $\le$
$P^n(x^n)$ $\le$ $2^{-n(S(\sigma)-\gamma(\alpha))}$
for $x\in {\mathsf{T}}^n_{P, \alpha}$ and a positive $\gamma(\alpha)$.
(\ref{te4}), (\ref{te5}), and (\ref{te6})
can be obtained in similar way.
(\ref{te7}) follows from the permutation-invariance of $\Pi_{P\mathtt{V}, \alpha}$.

 \qed \end{proof}

\subsection{Communication
Scenarios and Code Concepts}

\begin{definition}
Let $\mathsf{A}$ be a finite set, let
 $H$  be a finite-dimensional
complex Hilbert space, and
  $\theta$ $:=$ $\{1,\cdots,T\}$ be an index set.
    For every $t \in \theta$,  let $W_{t}$   be a classical-quantum channel
$\mathsf{P}(\mathsf{A}) \rightarrow \mathcal{S}(H)$.
We call the set of the  classical-quantum
channels  $\{W_t : t \in \theta\}$ an \bf arbitrarily varying
classical-quantum  channel \rm
when the channel state $t$ varies from
symbol to symbol in an  arbitrary manner.\end{definition}

When the sender inputs a sequence $a^n \in \mathsf{A}^n$ into the channel, the receiver
receives the output $W_t^{ n}(a^n)
 \in \mathcal{S}(H^{\otimes n})$, where $t^n = (t_1, t_2, \cdots ,
t_n)\in\theta^n$ is the channel state of $W_t^{ n}$.

\begin{definition}\label{symmet}
We say that the arbitrarily varying classical-quantum  channel
$\{W_t : t \in \theta\}$ is \bf symmetrizable \rm if
 there exists a
parametrized set of distributions $\{\tau(\cdot\mid a):
 a\in \mathsf{A}\}$ on $\theta$ such that for all $a$, ${a'}\in \mathsf{A}$,
\[\sum_{t\in\theta}\tau(t\mid a)W_{t}({a'})=\sum_{t\in\theta}\tau(t\mid {a'})W_{t}(a)\text{ .}\]
\end{definition}

When the sender inputs  a sequence $a^n \in \mathsf{A}^n$  into the channel, the receiver
receives the output $W_t^{\otimes n}(a^n)
 \in \mathcal{S}(H^{\otimes n})$, where $t^n = (t_1, t_2, \cdots ,
t_n)\in\theta^n$ is the channel state, while the wiretapper  receives an output quantum  state ${V}_t^{\otimes n}(a^n)
 \in \mathcal{S}({H}'^{\otimes n})$.\vspace{0.2cm}

\begin{definition}
Let $\mathsf{A}$ be a finite set. Let
 $H$ and $H'$ be finite-dimensional
complex Hilbert spaces. Let
  $\theta$ $:=$ $\{1,2, \cdots\}$ be an index set. For every $t \in \theta$,  let $W_{t}$   be a classical-quantum  channel
$\mathsf{P}(\mathsf{A}) \rightarrow \mathcal{S}(H)$ and ${V}_t$ be a classical-quantum  channel $\mathsf{P}(\mathsf{A})
\rightarrow \mathcal{S}(H')$. We call the set of the  classical-quantum 
channel pairs  $\{(W_t,{V}_t): t \in \theta\}$ an \bf arbitrarily
varying classical-quantum wiretap channel \rm when the
state $t$ varies from symbol to symbol in an  arbitrary
manner, while  the legitimate
receiver accesses the output of the first channel, i.e.,  $W_t$  in the pair
$(W_t,{V}_t)$, and the wiretapper observes the output of the second
 channel, i.e.,   ${V}_t$ in the pair $(W_t,{V}_t)$, respectively.\end{definition}

\begin{definition}
Let $\mathsf{A}$ be a finite set. Let
 $H$ and $H'$ be finite-dimensional
complex Hilbert spaces. Let  $\overline{\theta}$ $:=$ $\{1, 2, \cdots\}$ and
  $\theta$ $:=$ $\{1,2, \cdots\}$ be index sets. For every $s \in \overline{\theta}$  let $\overline{W}_{s}$   be a classical-quantum channel
$\mathsf{P}(\mathsf{A}) \rightarrow \mathcal{S}(H)$. For every $t\in\theta$ let ${V}_t$ be a classical-quantum channel $\mathsf{P}(\mathsf{A}) 
\rightarrow \mathcal{S}(H')$. We call the set of the  classical-quantum
channel pairs  $\{(\overline{W}_s,{V}_t) :s \in \overline{\theta}, t \in \theta\}$ a \bf compound-arbitrarily varying wiretap classical-quantum channel\rm,
when the channel state $s$ remains
constant over time, but the legitimate users can not  control  which  $s$  in the set  $\overline{\theta}$ will be  used and the
state $t$ varies from symbol to symbol in an  arbitrary
manner, while the legitimate
receiver accesses the output of the first channel, i.e.,  $\overline{W}_s$  in the pair
$(\overline{W}_s,{V}_t)$ and the wiretapper observes the output of the second
 channel, i.e.,   ${V}_t$ in the pair $(\overline{W}_s,{V}_t)$, respectively.\end{definition}

\begin{definition}
 An  \bf $(n, J_n)$   (deterministic)  code \rm $\mathcal{C}$ for a
 classical-quantum channel 
consists of a stochastic encoder $E$ : $\{
1,\cdots ,J_n\} \rightarrow \mathsf{P}(\mathsf{A}^n)$,  specified by
a matrix of conditional probabilities $E(\cdot|\cdot)$ and
 a collection of positive-semidefinite operators $\left\{D_j: j\in \{ 1,\cdots ,J_n\}\right\}
 \subset \mathcal{S}({H}^{\otimes n})$,
which is a partition of the identity, i.e., $\sum_{j=1}^{J_n} D_j =
\mathrm{id}_{{H}^{\otimes n}}$. We call these  operators the decoder operators.
\end{definition}
A code is created by the sender and the legitimate receiver before the
message transmission starts. The sender uses the encoder to encode the
message that he wants to send, while the legitimate receiver uses the
decoder operators on the channel output to decode the message.

 \begin{definition}
An \bf $(n, J_n)$  randomness  assisted quantum code \rm for the
arbitrarily varying classical-quantum wiretap channel
$\{(W_t,{V}_t): t \in \theta\}$ is a distribution $G$ on
$\left(\Lambda,\sigma\right)$, where we denote  the set of $(n, J_n)$ deterministic  codes by $\Lambda$
and $\sigma$ is a sigma-algebra so
chosen such that the functions $\gamma \rightarrow P_e(\mathcal{C}^{\gamma},
t^n)$ and  $\gamma \rightarrow \chi
\left(R_{uni};Z_{\mathcal{C}^{\gamma},t^n}\right)$ are both $G$-measurable
with respect to $\sigma$ for every $t^n\in\theta^n$, here
$Z_{\mathcal{C}^{\gamma},t^n} :=\{{V}_{t^{n}}(E^{\gamma}(\cdot\mid 1)),
{V}_{t^{n}}(E^{\gamma}(\cdot\mid 2)),\cdots,
{V}_{t^{n}}(E^{\gamma}(\cdot\mid J_n))\}$.
\label{randef}\end{definition}

\begin{definition}
A non-negative number $R$ is an achievable  (deterministic) \bf secrecy
rate \rm for the arbitrarily varying classical-quantum wiretap channel
$\{(W_t,{V}_t): t \in \theta\}$ if for every $\epsilon>0$, $\delta>0$,
$\zeta>0$ and sufficiently large $n$ there exists an  $(n, J_n)$
code $\mathcal{C} = \bigl(E, \{D_j^n : j = 1,\cdots J_n\}\bigr)$  such that $\frac{\log
J_n}{n}
> R-\delta$, and
\begin{equation} \max_{t^n \in \theta^n} P_e(\mathcal{C}, t^n) < \epsilon\text{ ,}\label{annian1}\end{equation}
\begin{equation}\max_{t^n\in\theta^n}
\chi\left(R_{uni};Z_{t^n}\right) < \zeta\text{
,}\label{b40}\end{equation} where $R_{uni}$ is the uniform
distribution on $\{1,\cdots J_n\}$. Here
$P_e(\mathcal{C}, t^n)$ (the average probability of the decoding error of a
deterministic code $\mathcal{C}$, when the channel state  of the
arbitrarily varying classical-quantum wiretap channel $\{(W_t,{V}_t): t \in \theta\}$  is $t^n =
(t_1, t_2, \cdots , t_n)$), is defined as \[ P_e(\mathcal{C}, t^n) := 1- \frac{1}{J_n} \sum_{j=1}^{J_n}
\mathrm{tr}(W_{t^n}(E(~|j))D_j)\text{ ,}\] and
$Z_{t^n}=\Bigl\{{V}_{t^n}(E(~|i)):$ 
$i\in\{1,\cdots,J_n\}\Bigr\}$ 
 is the set of the resulting
quantum state at the output of the wiretap channel when the channel
state of $\{(W_t,{V}_t): t \in \theta\}$ is $t^n$.
\end{definition}

\begin{remark} A weaker and widely used
 security criterion is obtained if we replace
(\ref{b40})  with $\max_{t \in \theta}\frac{1}{n}
\chi\left(R_{uni};Z_{t^n}\right)$ $<$ $\zeta$. In this paper,  we
will follow
  \cite{Bj/Bo/So} and use  (\ref{b40}).
\label{remsc}\end{remark}

\begin{definition}
 An \bf $(n, J_n)$  common randomness assisted
quantum code \rm for  the
arbitrarily varying classical-quantum wiretap
channel $\{(W_t,{V}_t): t \in \theta\}$ is a 
 finite subset $\Bigl\{C^{\gamma}=\{(E^{\gamma},D_j^{\gamma}):j=1,\cdots,J_n\}:\gamma\in\Gamma\Bigr\}$
of the set of $(n, J_n)$ deterministic  codes, labeled by a  finite set $\Gamma$.

\end{definition}

\begin{definition}
A non-negative number $R$ is an achievable \bf  enhanced secrecy
rate \rm for the compound-arbitrarily varying wiretap classical-quantum channel
$\{(\overline{W}_s,{V}_t) :s \in \overline{\theta}, t \in \theta\}$ if for every $\epsilon>0$, $\delta>0$,
$\zeta>0$ and sufficiently large $n$ there exists an  $(n, J_n)$
code $\mathcal{C} = \bigl(E^n, \{D_j^n : j = 1,\cdots J_n\}\bigr)$  such that $\frac{\log
J_n}{n}
> R-\delta$, and
\begin{equation} \max_{s \in \overline{\theta}} P_e(\mathcal{C},s,n) < \epsilon\text{ ,}\end{equation}
\begin{equation}\max_{t^n\in\theta^n}\max_{\pi\in\Pi_n}
\chi\left(R_{uni};Z_{t^n,\pi}\right) < \zeta\text{
,}\label{b40a}\end{equation} where $R_{uni}$ is the uniform
distribution on $\{1,\cdots J_n\}$. Here
$P_e(\mathcal{C},s,n)$  is defined as follows
\[ P_e(\mathcal{C}, s,n) := 1- \frac{1}{J_n} \sum_{j=1}^{J_n}  
\mathrm{tr}(\overline{W}_{s}^{\otimes n}(E^n(~|j))D_j^n)\text{ ,}\] and
$Z_{t^n,\pi}=\Bigl\{\sum_{a^n\in
\mathsf{A}^n}E^n(\pi(a^n)|1){V}^{t^n}(\pi(a^n)),\sum_{a^n\in
\mathsf{A}^n}E^n(\pi(a^n)|2){V}^{t^n}(\pi(a^n)),$ $\cdots,$ 
$\sum_{a^n\in
\mathsf{A}^n}E^n(\pi(a^n)|J_n){V}^{t^n}(\pi(a^n))\Bigr\}$.
\end{definition}

\begin{definition}
A non-negative number $R$ is an  achievable \bf secrecy
rate \rm for
$\{(\overline{W}_s,{V}_t) :s \in \overline{\theta}, t \in \theta\}$ if for every $\epsilon>0$, $\delta>0$,
$\zeta>0$ and sufficiently large $n$ there exists an  $(n, J_n)$
code $\mathcal{C} = \bigl(E^n, \{D_j^n : j = 1,\cdots J_n\}\bigr)$  such that $\frac{\log
J_n}{n}
> R-\delta$, and
\[ \max_{s \in \overline{\theta}} P_e(\mathcal{C},s,n) < \epsilon\text{ ,}\]
\[\max_{t^n\in\theta^n}\max_{\pi\in\Pi_n}
\chi\left(R_{uni};Z_{t^n}\right) < \zeta\text{
.}\] 
\end{definition}

\begin{definition}A non-negative number $R$ is an achievable \bf secrecy rate \rm for the
arbitrarily varying classical-quantum wiretap channel
$\{(W_t,{V}_t): t \in \theta\}$ \bf under common randomness assisted  quantum coding \rm if  for
every
 $\delta>0$, $\zeta>0$, and  $\epsilon>0$, if $n$ is sufficiently large
there is an $(n, J_n)$  common randomness assisted
quantum code $(\{\mathcal{C}^{\gamma}:\gamma\in \Gamma\})$  such that
$\frac{\log J_n}{n} > R-\delta$, and
\[ \max_{t^n\in\theta^n} \frac{1}{\left|\Gamma\right|} \sum_{\gamma=1}^{\left|\Gamma\right|}P_{e}(\mathcal{C}^{\gamma},t^n) < \epsilon\text{ ,}\]
\[\max_{t^n\in\theta^n} \frac{1}{\left|\Gamma\right|} \sum_{\gamma=1}^{\left|\Gamma\right|}
\chi\left(R_{uni},Z_{\mathcal{C}^{\gamma},t^n}\right) < \zeta\text{ .}\]This
means that we do not require  common randomness  to be secure
against eavesdropping.\end{definition}

\begin{definition}Let $\mathsf{X}$ and $\mathsf{Y}$ be finite sets.
 We denote  the
sets of joint probability distributions on $\mathsf{X}$ and
$\mathsf{Y}$ by $P(\mathsf{X},\mathsf{Y})$. Let  $(X,Y)$  be a
random variable distributed to a joint probability distribution
$p\in P(\mathsf{X},\mathsf{Y})$.
An \bf $(X,Y)$-correlation assisted
  $(n, J_n)$  code \rm $C(X,Y)$ for the
arbitrarily varying classical-quantum wiretap channel $(W_t,{V}_t)_{t \in \theta}$
consists of a set of stochastic encoders $\left\{E_{\mathsf{x}^n}:
\{ 1,\cdots ,J_n\} \rightarrow \mathsf{P}(\mathsf{A}^n):
\mathsf{x}^n\in\mathsf{X}^n\right\}$, and
 a set of collections of positive semidefinite operators $\left\{\{D_j^{(\mathsf{y}^n)}:
  j = 1,\cdots ,J_n\}  :\mathsf{y}^n\in\mathsf{Y}^n\right\}
$ on $\mathcal{S}({H}^{\otimes n})$ which fulfills $\sum_{j=1}^{J_n} D_j^{(\mathsf{y}^n)} =
\mathrm{id}_{{H}^{\otimes
n}}$ for every $\mathsf{y}^n\in\mathsf{Y}^n$.

$R$ is an achievable \bf $(X,Y)$
   secrecy rate \rm for  $(W_t,{V}_t)_{t \in \theta}$ if for every positive
$\epsilon$, $\delta$, $\zeta$ and sufficiently large $n$ there
exist an $(X,Y)$-correlation assisted   $(n, J_n)$  code
$C(X,Y) =
\biggl\{\left(E_{\mathsf{x}^n},D_j^{(\mathsf{y}^n)}\right): j\in\{
1,\cdots ,J_n\},\text{ }  \mathsf{x}^n\in\mathsf{X}^n,\text{ }
\mathsf{y}^n\in\mathsf{Y}^n\biggr\}$  such that $\frac{\log J_n}{n}
> R-\delta$, and
\[\max_{t^n \in \theta^n} \sum_{\mathsf{x}^n\in\mathsf{X}^n}
 \sum_{\mathsf{y}^n\in\mathsf{Y}^n}p(\mathsf{x}^n,\mathsf{y}^n) P_e(C(\mathsf{x}^n,\mathsf{y}^n), t^n) < \epsilon\text{ ,}\]
\[\max_{t^n\in\theta^n} \sum_{x^n\in\mathbf X^n}{p_\mathsf{X}}^{\otimes n}(x^n)
\chi\left(R_{uni};Z_{t^n,\mathsf{x}^n}\right) < \zeta\text{ ,}\]
where $P_e(C(\mathsf{x}^n,\mathsf{y}^n), t^n)$  
$ :=$ $1-$ $\frac{1}{J_n} \sum_{j=1}^{J_n}  \sum_{a^n \in \mathsf{A}^n}$
$E_{\mathsf{x}^n}(a^n|j)\mathrm{tr}(W_{t^n}(a^n)D_j^{(\mathsf{y}^n)})$.
\end{definition} 

\begin{definition}
The supremum of all achievable  (deterministic) secrecy rates of
$\{(W_t,{V}_t): t \in \theta\}$ is called the (deterministic)  secrecy
capacity of $\{(W_t,{V}_t): t \in \theta\}$, denoted by
$C_s(\{(W_t,{V}_t): t \in \theta\})$.

The supremum of  all  achievable secrecy rates
under common randomness assisted  quantum coding  of
$\{(W_t,{V}_t): t \in \theta\}$ is called the common randomness
assisted   secrecy capacity of $\{(W_t,{V}_t): t \in \theta\}$,
denoted by $C_s(\{(W_t,{V}_t): t \in \theta\};cr)$.

The supremum of  all achievable enhanced secrecy rates of
$\{(\overline{W}_s,{V}_t) :s \in \overline{\theta}, t \in \theta\}$  is called the 
enhanced   secrecy
capacity of $\{(\overline{W}_s,{V}_t) :s \in \overline{\theta}, t \in \theta\}$, denoted by
$\hat{C}_s(\{(\overline{W}_s,{V}_t) :s \in \overline{\theta}, t \in \theta\})$.

The  supremum of  all  achievable $(X,Y)$
   secrecy rate  of $\{(W_t,{V}_t): t \in \theta\}$
 is called the
 $(X,Y)$
      secrecy capacity of $\{(W_t,{V}_t): t \in \theta\}$.
\end{definition}

\section{Compound-Arbitrarily Varying Wiretap Classical-Quantum Channel}\label{CAVWCQC}

Let $\mathsf{A}$,  $H$, $H'$, $\theta$, and 
 $(W_t,V_t)_{t \in \theta}$ be defined as in Section \ref{secprem}.

Following the idea of \cite{Wi/No/Bo}, we first prove the following Theorem.

\begin{theorem}
Let  $\overline{\theta}$ $:=$ $\{1,\cdots,\overline{T}\}$ and
  $\theta$ $:=$ $\{1,\cdots,T\}$ be finite index sets.
	Let $\{(\overline{W}_s,{V}_t) :s \in \overline{\theta}, t \in \theta\}$ 
	be a compound-arbitrarily varying wiretap classical-quantum channel.
	We have
\begin{align}&
	\hat{C}_s(\{(\overline{W}_s,{V}_t) :s \in \overline{\theta}, t \in \theta\})\notag\\
	&= \lim_{n\rightarrow \infty} \frac{1}{n}\max_{\Lambda_n}
	\Bigl(\min_{s \in \overline{\theta}}\chi(p_U;B_s^{\otimes n})- \max_{t^n\in \theta^n}\chi(p_U;Z_{t^n})\Bigr)
	\text{ ,}\end{align}
	where $B_s$ are the resulting  quantum states at the output of the
legitimate receiver's channels. $Z_{t^n}$ are the resulting  quantum states  at
the output of wiretap channels. By $\max_{\Lambda_n}$, we mean that the maximum is taken 
over all ensembles that arise from taking an arbitrary finite set $\mathsf{U}$
and defining ensembles 
$\{p_U(u),\sum_{a^n\in\mathsf{A}^n}
p_{A^n|U}(a^n|u)W_s^{\otimes n}(a^n)\}_{u\in\mathsf{U}}$ 
and $\{p_U(u),\sum_{a^n\in\mathsf{A}^n}
p_{A^n|\mathsf{U}}(a^n|u)V_{t^n}(a^n)\}_{u\in\mathsf{U}}$
 for every $s \in \overline{\theta}$, and
$ t^n \in \theta^n$ to calculate the respective Holevo quantities.  $A^n$ is here a random
variable taking values on $\mathsf{A}^n$, $U$ a random
variable taking values on  $\mathsf{U}$
with probability  distribution $p_U$, and 
$p_{A^n\mid U} \in P(\mathsf{A}^n)$ 
the   conditional distribution of
$A^n$ given $U$.
	\label{loatbfis}
\end{theorem}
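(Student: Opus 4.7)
The plan is to prove Theorem \ref{loatbfis} via the usual two-step direct/converse strategy, with the compound structure on the legitimate side handled by standard compound-channel techniques and the arbitrarily varying structure on the wiretapper's side handled by exploiting the permutation invariance built into the enhanced secrecy criterion (\ref{b40a}), following the classical scheme of \cite{Wi/No/Bo}.

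For the \emph{direct part} I would fix $n$ and a finite set $\mathsf{U}$ with $p_U$ and $p_{A^n\mid U}$ nearly attaining the maximum inside the formula. Following the Devetak/Wyner wiretap construction, I would build a random code of size $J_n\cdot K_n$, where each message $j$ is associated with a ``cloud'' of $K_n$ auxiliary codewords $u^{(j,k)}$ drawn i.i.d.\ from $p_U$; the stochastic encoder samples $k$ uniformly and outputs $a^n$ according to $p_{A^n\mid U}(\cdot\mid u^{(j,k)})$. The rates are chosen so that $\tfrac{1}{n}\log(J_nK_n) < \tfrac{1}{n}\min_s \chi(p_U;B_s^{\otimes n}) - \delta$ while $\tfrac{1}{n}\log K_n > \tfrac{1}{n}\max_{t^n}\chi(p_U;Z_{t^n}) + \delta$. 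Reliability for the unknown $s\in\overline{\theta}$ is handled by a compound typical-projector decoder: because $\overline{\theta}$ is finite, one can adapt the square-root/pretty-good measurement to decode correctly for all $s$ simultaneously, using the typical-subspace inequalities (\ref{te1})--(\ref{te7}).

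For the secrecy side, the crucial observation is that the enhanced criterion (\ref{b40a}) requires $\chi(R_{uni};Z_{t^n,\pi})<\zeta$ for \emph{every} $t^n$ and every permutation $\pi\in\Pi_n$. Since the Holevo quantity of $Z_{t^n,\pi}$ depends on $t^n$ only through its empirical type, the doubly-exponential family $\theta^n$ effectively collapses to the polynomial family of types of $\theta$-sequences, so a union bound over types combined with the standard random-coding secrecy lemma (of Devetak / Hayashi type) applied separately to each averaged ``type'' wiretap channel $\bar V_{\tau}:=\frac1{n!}\sum_{\pi} V_{\pi(t^n)}$ shows that, with high probability over the random code, secrecy holds simultaneously for all types; by permutation invariance of the ensemble this then gives secrecy for all $t^n$ and all $\pi$.

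For the \emph{converse}, I would use Fano's inequality and Holevo's bound on the legitimate side to get $\tfrac{1}{n}\log J_n \le \tfrac{1}{n}\chi(p_U;B_s^{\otimes n})+o(1)$ for \emph{every} $s\in\overline{\theta}$ (so the bound survives taking the min), combined with the standard wiretap-converse manipulation: the enhanced secrecy condition gives, for every $t^n$, $\chi(R_{uni};Z_{t^n})\le n\zeta$, which subtracted from the Fano/Holevo bound produces the difference $\chi(p_U;B_s^{\otimes n})-\chi(p_U;Z_{t^n})$; taking the worst $t^n$ and best ensemble closes the gap to the formula after normalization and the limit.

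The main obstacle I expect is the secrecy step of the direct part. The standard random-coding secrecy bound is comfortable for a single channel, but here the adversary chooses $t^n$ from an exponentially large family, so a naive union bound defeats the exponential deviation probability. The reason the argument works at all is the permutation-invariance structure hard-wired into the enhanced criterion (\ref{b40a}): it is exactly what lets the exponential family $\theta^n$ be replaced by the polynomial family of types, so the subtle point is making the reduction to averaged channels $\bar V_\tau$ rigorous and checking that the resulting $\chi$ upper bound actually matches the worst-case $\max_{t^n}\chi(p_U;Z_{t^n})$ appearing in the formula, rather than some strictly larger quantity.
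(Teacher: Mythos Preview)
Your converse sketch matches the paper's, and the overall random-coding architecture for the direct part is right. The gap is in your secrecy step.

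The claim that ``the Holevo quantity of $Z_{t^n,\pi}$ depends on $t^n$ only through its empirical type'' is false for a fixed code. What is true is the unitary equivalence $Z_{\sigma(t^n),\pi}\cong Z_{t^n,\sigma^{-1}\pi}$ via $P_\sigma$; so replacing $t^n$ by another sequence of the same type while holding $\pi$ fixed is the same as keeping $t^n$ and changing $\pi$. Your reduction to types therefore does not shrink the union at all: per type you still face all of $\Pi_n$. More importantly, applying the secrecy lemma to the \emph{averaged} channel $\bar V_\tau=\frac{1}{n!}\sum_\pi V_{\pi(t^n)}$ only controls $\chi$ of the permutation-averaged output ensemble, and the enhanced criterion (\ref{b40a}) asks for $\max_\pi\chi(R_{uni};Z_{t^n,\pi})$, i.e.\ each $\pi$ separately. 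Secrecy for a mixture does not imply secrecy for each component, and ``permutation invariance of the ensemble'' (meaning the i.i.d.\ codebook law is permutation-symmetric) only tells you the failure \emph{probabilities} coincide across $\pi$, not that the failure \emph{events} coincide; a union bound over $n!$ permutations is still needed.

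The paper's route avoids this entirely. It applies a covering lemma directly for every $t^n\in\theta^n$ with $\pi=\mathrm{id}$: the failure probability from covering is doubly exponentially small, of order $\exp\bigl(-c\,2^{n\zeta/2}\bigr)$, so the brute-force union over the $|\theta|^n$ sequences (and $J_n$ messages) succeeds --- see (\ref{eq_5b}) and the paragraph after it. Once secrecy holds for all $t^n$ at $\pi=\mathrm{id}$, the identity $V_{t^n}(\pi(x^n))=P_\pi V_{\pi^{-1}(t^n)}(x^n)P_\pi^{\dagger}$ reduces $(t^n,\pi)$ to $(\pi^{-1}(t^n),\mathrm{id})$, so all permutations come for free; this is (\ref{eq_8}). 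A further technical point you would run into: the conditional-typical-projector estimates (\ref{te4})--(\ref{te6}) needed to feed the covering lemma hold for $V_{t^n}(x^n)$ only when $x^n$ lies in a $t^n$-dependent set $\mathsf{M}_{t^n}$ of high $p'$-probability, not for all $x^n$. The standard covering lemma (Lemma~\ref{cov}) is therefore not directly applicable; the paper states and proves a variant (Lemma~\ref{cov3}) in which the hypotheses are only required on a subset $\mathsf{M}'\subset\mathsf{M}$, at the cost of an extra $1-p(\mathsf{M}')$ term in the bound. That lemma, together with the inequality (\ref{stqtclpztr}) obtained from strong subadditivity to relate $\sum_t q(t)\chi(p;Z_t)$ to $\max_{t^n}\chi(p;Z_{t^n})$, is where most of the work in the proof lies.
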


\begin{proof}
We fix a probability distribution
$p\in \mathsf{P}(\mathsf{A})$.
Let
\[J_n = \lfloor 2^{n\min_{s \in \overline{\theta}}\chi(p;B_s)-\log
L_{n}-2n\mu} \rfloor\text{ .}\]

Let $p' (x^n):= \begin{cases} \frac{p^{
n}(x^n)}{p^{ n}
(\mathsf{T}^n_{p,\delta})} \text{ ,}& \text{if } x^n \in \mathsf{T}^n_{p,\delta}\text{ ;}\\
0 \text{ ,}& \text{ else .}  \end{cases}$

Let $X^{n} := \{X_{j,l}\}_{j \in \{1, \dots, J_n\}, l \in
\{1, \dots, L_{n}\}}$ be a family of random variables taking value
according to $p'$, i.e., 
with
the uniform distribution over $\mathsf{T}^n_{P,\delta}$.
Here
 $L_{n}$ is a natural
number which will be specified later.\vspace{0.2cm}

We fix a $t^n$ $\in\theta^n$ and
 define a map  $\mathsf{V}:$ $\mathsf{P}(\theta) \times \mathsf{P}(\mathsf{A})$
$\rightarrow$ $\mathcal{S}(H)$ by
\[\mathsf{V}(t,p):= V_{t}(p)\text{ .}\]
For  $t\in \theta$ we define $q(t)$ $:=$ $\frac{N(t\mid t^n)}{n}$.  $t^n$ 
is trivially a typical sequence of $q$.
For  $p\in  \mathsf{P}(\mathsf{A})$,  $\mathsf{V}$
defines a map
$\mathsf{V}(\cdot,p):$
$\mathsf{P}(\theta) $
$\rightarrow$ $\mathcal{S}(H)$.

 Let
\[Q_{t^n}(x^n) := \Pi_{\mathsf{V}(\cdot,p), \alpha }(t^n)\Pi_{\mathsf{V},\alpha}(t^n, x^n)
 \cdot V_{{t^n}}(x^n) \cdot \Pi_{\mathsf{V},\alpha}(t^n,x^n)\Pi_{\mathsf{V}(\cdot,p), \alpha }(t^n)\text{ .}\]\vspace{0.2cm}

\begin{lemma} [Gentle Operator, cf.  \cite{Win} and \cite{Og/Na}] \label{eq_4a}  
Let $\rho$ be a  quantum state and $X$ be a positive operator with $X  \leq
\mathrm{I}$  and $1 - \mathrm{tr}(\rho X)  \leq
\lambda \leq1$. Then
\begin{equation} \| \rho -\sqrt{X}\rho \sqrt{X}\|_1 \leq \sqrt{2\lambda}\text{ .} \label{tenderoper}
\end{equation}\end{lemma}

The Gentle Operator was first introduced in \cite{Win}, where it has been
shown that $\| \rho -\sqrt{X}\rho \sqrt{X}\|_1 \leq
\sqrt{8\lambda}$. In \cite{Og/Na}, the result of \cite{Win}
has been improved, and (\ref{tenderoper}) has been proved.

In view of the fact
that $\Pi_{\mathsf{V}(\cdot,p), \alpha }(t^n)$
 and $\Pi_{\mathsf{V},\alpha}(t^n, x^n)$ are
both projection matrices,
by   (\ref{te1}), (\ref{te7}),  and   Lemma \ref{eq_4a}
for any $t$ and $x^n$, it holds that
\begin{equation} \label{eq_4}\|Q_{{t^n}}(x^n)-V_{{t^n}}(x^n)\|_1 \leq
\sqrt{2^{-n\beta(\alpha)}+2^{-n\beta(\alpha)''} }\text{ .}\end{equation}\vspace{0.2cm}

The following  Lemma was first given in \cite{Ahl/Win}.
Here we cite the lemma as it was formulated in \cite{Wil}.

\begin{lemma} [Covering Lemma]\label{cov}
Let  $\mathcal{V}$ be a finite-dimensional Hilbert space. 
Let $\mathsf{M}$ be a finite set. Suppose we have
an ensemble  $\{\rho_m:m\in\mathsf{M} \} \subset \mathcal{S}(\mathcal{V})$   of quantum states.
Let  $p$ be
 a   probability distribution on $\mathsf{M}$.

Suppose a total subspace projector $\Pi$ and codeword subspace projectors $\{\Pi_{m}: m\in \mathsf{M}\}$  exist which project
onto subspaces of the Hilbert space in which the states exist, and  for all $ m\in \mathsf{M}$ there are positive constants
$\epsilon\in
]0,1[$, $D$, $d$ such that the following conditions hold:
\[\mathrm{tr}(\rho_m\Pi)\geq 1-\epsilon\text{ ,}\] \[\mathrm{tr}(\rho_m\Pi_m)\geq 1-\epsilon\text{ ,}\]
\[\mathrm{tr}(\Pi)\leq D\text{ ,}\] \[\Pi_m\rho_m\Pi_m\leq\frac{1}{d}\Pi_m\text{ .}\]

We denote $\rho:=\sum_{m} p(m) \rho_m$.
We define a  sequence of  i.i.d.~random variables  $X_1,
\dots ,X_L$, taking values  
in $\{\rho_m:m\in\mathsf{M} \}$. If $L\gg \frac{d}{D}$, then

\begin{align}&
  Pr \biggl( \lVert L^{-1}
\sum_{l=1}^{L}\Pi\cdot\Pi_{X_l}\cdot X_l\cdot\Pi_{X_l}\cdot\Pi- \rho\rVert_1 \leq  \epsilon +4 \sqrt{\epsilon} +24 \sqrt[4]{\epsilon}\biggr)  \allowdisplaybreaks\notag\\
& \geq 1- 2D \exp
 \left( -\frac{\epsilon^3Ld}{2\ln 2D} \right) \text{ .}\label{eq_5}\end{align}
\end{lemma}\vspace{0.15cm}

For our result we use an alternative Covering Lemma.

\begin{lemma} \label{cov3}
Let  $\mathcal{V}$ be a finite-dimensional Hilbert space. 
Let $\mathsf{M}$ and  $\mathsf{M}'\subset \mathsf{M}$ be  finite sets. Suppose we have
an ensemble  $\{\rho_m:m\in\mathsf{M} \} \subset \mathcal{S}(\mathcal{V})$   of quantum states.
Let  $p$ be
 a   probability distribution on $\mathsf{M}$.

Suppose a total subspace projector $\Pi$ and codeword subspace projectors $\{\Pi_{m}: m\in \mathsf{M}\}$  exist  which project
onto subspaces of the Hilbert space in which the states exist, and  for all $ m\in \mathsf{M}'$ there are positive constants
$\epsilon\in
]0,1[$, $D$, $d$ such that  the following conditions hold:
\[\mathrm{tr}(\rho_m\Pi)\geq 1-\epsilon\text{ ,}\] \[\mathrm{tr}(\rho_m\Pi_m)\geq 1-\epsilon\text{ ,}\]
\[\mathrm{tr}(\Pi)\leq D\text{ ,}\] 
and 
\[\Pi_m\rho_m\Pi_m\leq\frac{1}{d}\Pi_m\text{ .}\]

We denote $\omega:=\sum_{m\in\mathsf{M}'} p(m) \rho_m$.
Notice that $\omega$ is not a
density operator in general.
We define a  sequence of  i.i.d.~random variables  $X_1,
\dots ,X_L$, taking values  
in $\{\rho_m:m\in\mathsf{M} \}$. If $L\gg \frac{d}{D}$ then

\begin{align}&
Pr \biggl( \lVert L^{-1}
\sum_{i=1}^{L}\Pi\cdot\Pi_{X_i}\cdot X_i \cdot\Pi_{X_i} \cdot\Pi
- \omega \rVert_1\allowdisplaybreaks\notag\\
&\leq    1-p(\mathsf{M}')+  4\sqrt{1-p(\mathsf{M}')}+ 
42 \sqrt[8]{\epsilon} 
 \biggr)\allowdisplaybreaks\notag\\
 & \geq 1- 2D \exp
\left( -p(\mathsf{M}')\frac{\epsilon^3Ld}{2\ln 2D} \right) \text{ .}\label{pbllsilpcpx3}
\end{align}
\end{lemma}

\begin{proof}
We define a function $\mathbbm{1}_{\mathsf{M}'}$ $:\mathsf{M}\rightarrow  \mathsf{M}'\cup \{0^{\mathcal{V}}\}$
by
\[\mathbbm{1}_{\mathsf{M}'}(\rho_m):= \begin{cases} \rho_m \text{ ,}& \text{if } m \in \mathsf{M}'\\
0^{\mathcal{V}} \text{ ,}& \text{if } m \in \mathsf{M}' \text{ ,}\end{cases}\]
where $0^{\mathcal{V}}$ is the zero operator on $\mathcal{V}$, i.e., 
$\langle j|0^{\mathcal{V}}|j\rangle=0$ for all $j\in \mathcal{V}$.
Notice that $0^{\mathcal{V}}$ is not a
density operator.\vspace{0.2cm}

We have
\begin{align}&
\mathrm{tr}\left(\sum_{m\in \mathsf{M}}p(m) \mathbbm{1}_{\mathsf{M}'}(\rho_m)\right)\allowdisplaybreaks\notag\\
&=\mathrm{tr}\left(\sum_{m\in \mathsf{M}'}p(m) \rho_m\right)\allowdisplaybreaks\notag\\
&=\sum_{m\in \mathsf{M}'}p(m) \mathrm{tr}\left(\rho_m\right)\allowdisplaybreaks\notag\\
&= p(\mathsf{M}')\text{ .}\label{mtlsmimmpm}\end{align}

Let $\hat{\Pi}$ be the  projector onto the subspace spanned by the eigenvectors of
$\sum_{m\in \mathsf{M}'}p(m) \Pi\Pi_{m} \rho_m \Pi_{m}\Pi$  whose corresponding
eigenvalues are greater than $p(\mathsf{M}')\frac{\epsilon}{D}$.

The following three
inequalities can be shown
by the same arguments as in the proof of Lemma \ref{cov} in \cite{Wil}:

\begin{equation} \sum_{m\in \mathsf{M}}p(m)  d\cdot \hat{\Pi} \Pi\Pi_{m} \mathbbm{1}_{\mathsf{M}'}(\rho_m) \Pi_{m}\Pi\hat{\Pi} 
\geq p(\mathsf{M}')\frac{d\epsilon}{D}\hat{\Pi}\text{ .}\label{wilderie2}\end{equation}

\begin{align}&
\|\sum_{m\in \mathsf{M}}p(m)\Pi\cdot\Pi_{m}
\cdot \mathbbm{1}_{\mathsf{M}'}(\rho_m) \cdot\Pi_{m}\cdot\Pi - \sum_{m\in \mathsf{M}}p(m)
\cdot \mathbbm{1}_{\mathsf{M}'}(\rho_m)\|_1\allowdisplaybreaks\notag\\
&\leq  \sum_{m\in \mathsf{M}'}p(m) 
\|\Pi\cdot\Pi_{m}
\rho_m \cdot\Pi_{m}\cdot\Pi- \rho_m  \|_1 \allowdisplaybreaks\notag\\
&\leq  \sum_{m\in \mathsf{M}'}p(m) \left(2\sqrt{\epsilon}+2\sqrt{\epsilon+2\sqrt{\epsilon}}\right)\allowdisplaybreaks\notag\\
&= p(\mathsf{M}')\left(2\sqrt{\epsilon}+2\sqrt{\epsilon+2\sqrt{\epsilon}}\right)\allowdisplaybreaks\notag\\
&\leq 2\sqrt{\epsilon}+2\sqrt{\epsilon+2\sqrt{\epsilon}}\allowdisplaybreaks\notag\\
&\leq 6\sqrt[4]{\epsilon}
 \text{ .}\label{tnaresesesl}
\end{align}
The last inequality holds because
$\sqrt{\epsilon+2\sqrt{\epsilon}}$
$\leq  2\sqrt[4]{\epsilon}$ for $0\leq \epsilon \leq 1$.

When $\{\rho_1,\cdots,\rho_L\}$ fulfills 
\begin{align*}&
(1-\epsilon)  \sum_{m\in \mathsf{M}}p(m) \hat{\Pi}\Pi\cdot\Pi_{m}\cdot \mathbbm{1}_{\mathsf{M}'}(\rho_m) \cdot\Pi_{m}\cdot\Pi\hat{\Pi}
\allowdisplaybreaks\notag\\
&\leq  L^{-1}\sum_{i=1}^{L}\hat{\Pi}\Pi\cdot\Pi_{\rho_i}\cdot (\mathbbm{1}_{\mathsf{M}'}(\rho_i))\cdot\Pi_{\rho_i}\cdot\Pi\hat{\Pi}
\allowdisplaybreaks\notag\\
&\leq (1+\epsilon) \sum_{m\in \mathsf{M}}p(m) \hat{\Pi}\Pi\cdot\Pi_{m}\cdot \mathbbm{1}_{\mathsf{M}'}(\rho_m) \cdot\Pi_{m}\cdot\Pi\hat{\Pi}
\text{ ,}
\end{align*}
(i.e.  we assume the event considered in (\ref{useocbound}) below),

then 

\begin{align}& \lVert L^{-1}
\sum_{i=1}^{L}\hat{\Pi}\Pi\cdot\Pi_{\rho_i}\cdot (\mathbbm{1}_{\mathsf{M}'}(\rho_i))\cdot\Pi_{\rho_i}\cdot\Pi\hat{\Pi} \allowdisplaybreaks\notag\\
&- \sum_{m\in \mathsf{M}}p(m) \hat{\Pi}\Pi\cdot\Pi_{m}\cdot \mathbbm{1}_{\mathsf{M}'}(\rho_m) \cdot\Pi_{m}\cdot\Pi\hat{\Pi}\rVert_1\notag\\
& \leq  
\epsilon \text{ .}\label{wilderie3}\end{align} 
\vspace{0.2cm}

\it i)  Application of the Operator Chernoff Bound \rm\vspace{0.2cm}

For all $m\in\mathsf{M}'$ we have
\begin{align}& d\cdot \hat{\Pi} \Pi\Pi_{m} \mathbbm{1}_{\mathsf{M}'}(\rho_m) \Pi_{m}\Pi\hat{\Pi}\allowdisplaybreaks\notag\\
&=  d\cdot \hat{\Pi} \Pi\Pi_{m} \rho_m \Pi_{m}\Pi\hat{\Pi}\allowdisplaybreaks\notag\\
&\leq \hat{\Pi}\label{wilderie1}\end{align}
as a consequence of the inequality 
$A^\dag B A\leq A^\dag A$ which is valid whenever $B\leq id$.

By (\ref{wilderie1}) and the fact that $d\cdot0^{\mathcal{V}}\leq \hat{\Pi}$, we have for all $m\in\mathsf{M}$.
\[0^{\mathcal{V}}\leq d\cdot \hat{\Pi} \Pi\Pi_{m} \mathbbm{1}_{\mathsf{M}'}(\rho_m) \Pi_{m}\Pi\hat{\Pi}\leq \hat{\Pi}\text{ .}\]

Now we apply the Operator Chernoff Bound (cf. \cite{Wil})
on the set of operator $\{d\mathbbm{1}_{\mathsf{M}'}(\rho_m):m\in\mathsf{M}\}\}$ and
the subspace 
spanned by the eigenvectors of $\sum_{m\in \mathsf{M}'}p(m) \Pi\Pi_{m} \rho_m \Pi\Pi_{m}$  whose corresponding
eigenvalues are greater than $p(\mathsf{M}')\frac{\epsilon}{D}$; here $\hat{\Pi}$
acts as the identity on the subspace. 

By (\ref{wilderie2})
we obtain
\begin{align}&
Pr \biggl((1-\epsilon)  \sum_{m\in \mathsf{M}}p(m) \hat{\Pi}\Pi\cdot\Pi_{m}\cdot \mathbbm{1}_{\mathsf{M}'}(\rho_m) \cdot\Pi_{m}\cdot\Pi\hat{\Pi}
\allowdisplaybreaks\notag\\
&~~~~~~\leq L^{-1}\sum_{i=1}^{L}\hat{\Pi}\Pi\cdot\Pi_{X_i}\cdot (\mathbbm{1}_{\mathsf{M}'}(X_i))\cdot\Pi_{X_i}\cdot\Pi\hat{\Pi} 
\allowdisplaybreaks\notag\\
&~~~~~~\leq (1+\epsilon)  L^{-1}\sum_{i=1}^{L}\hat{\Pi}\Pi\cdot\Pi_{X_i}\cdot (\mathbbm{1}_{\mathsf{M}'}(X_i))\cdot\Pi_{X_i}\cdot\Pi\hat{\Pi}
\allowdisplaybreaks\notag\\
  &=Pr \biggl(d(1-\epsilon)  \sum_{m\in \mathsf{M}}p(m) \hat{\Pi}\Pi\cdot\Pi_{m}\cdot \mathbbm{1}_{\mathsf{M}'}(\rho_m) \cdot\Pi_{m}\cdot\Pi\hat{\Pi}
\allowdisplaybreaks\notag\\
&~~~~~~~~~~\leq  dL^{-1}\sum_{i=1}^{L}\hat{\Pi}\Pi\cdot\Pi_{X_i}\cdot (\mathbbm{1}_{\mathsf{M}'}(X_i))\cdot\Pi_{X_i}\cdot\Pi\hat{\Pi} \allowdisplaybreaks\notag\\
&~~~~~~~~~~\leq d(1+\epsilon) \sum_{m\in \mathsf{M}}p(m) \hat{\Pi}\Pi\cdot\Pi_{m}\cdot \mathbbm{1}_{\mathsf{M}'}(\rho_m) \cdot\Pi_{m}\cdot\Pi\hat{\Pi}
\biggr) \allowdisplaybreaks\notag\\
& \geq 1- 2D \exp
\left( -p(\mathsf{M}')\frac{\epsilon^3Ld}{2\ln 2D} \right) \text{ .}\label{useocbound}
\end{align}\vspace{0.2cm}

\it ii)  Upper Bound for  $\lVert\sum_{m\in \mathsf{M}}p(m)\mathbbm{1}_{\mathsf{M}'}(\rho_m)
-  \sum_{m\in \mathsf{M}}p(m) \hat{\Pi}\Pi\Pi_{m} \mathbbm{1}_{\mathsf{M}'}(\rho_m) \Pi_{m}\Pi\hat{\Pi}
\rVert_1$  \rm\vspace{0.2cm}

Let $\sum_{i}\lambda_i | i\rangle\langle i|$ be a spectral decomposition of
$ \sum_{m\in \mathsf{M}'}\frac{p(m)}{p(\mathsf{M}') } \Pi\Pi_{m} \rho_m \Pi_{m}\Pi$.
In view of the fact that
$\hat{\Pi}$ is the  projector onto the subspace spanned by the eigenvectors of the density operator
$ \sum_{m\in \mathsf{M}'}\frac{p(m)}{p(\mathsf{M}') } \Pi\Pi_{m} \rho_m \Pi_{m}\Pi$  whose corresponding
eigenvalues are greater than $\frac{\epsilon}{D}$,
we have
\begin{align*}&\mathrm{tr}\left( \sum_{m\in \mathsf{M}}\frac{p(m)}{p(\mathsf{M}') } \Pi\cdot\Pi_{m}\cdot 
\mathbbm{1}_{\mathsf{M}'}(\rho_m) \cdot\Pi_{m}\cdot\Pi\right)\allowdisplaybreaks\notag\\
 &-\mathrm{tr}\left( \sum_{m\in \mathsf{M}}\frac{p(m)}{p(\mathsf{M}') } \hat{\Pi}\Pi\cdot\Pi_{m}\cdot 
\mathbbm{1}_{\mathsf{M}'}(\rho_m) \cdot\Pi_{m}\cdot\Pi\hat{\Pi}\right)\allowdisplaybreaks\notag\\
&=\sum_{\lambda_i \geq \frac{\epsilon}{D}}\lambda_i \allowdisplaybreaks\notag\\
 &\leq  \epsilon  \text{ .}
\end{align*}

We apply the gentle operator lemma (cf.  \cite{Wil}) to obtain
\begin{align}&\lVert \sum_{m\in \mathsf{M}}p(m) \Pi\cdot\Pi_{m}\cdot 
\mathbbm{1}_{\mathsf{M}'}(\rho_m) \cdot\Pi_{m}\cdot\Pi -\sum_{m\in \mathsf{M}}p(m)\hat{\Pi}\Pi\cdot\Pi_{m}\cdot 
\mathbbm{1}_{\mathsf{M}'}(\rho_m) \cdot\Pi_{m}\cdot\Pi\hat{\Pi}\rVert_1\allowdisplaybreaks\notag\\
&=p(\mathsf{M}') \lVert \sum_{m\in \mathsf{M}}\frac{p(m)}{p(\mathsf{M}') } \Pi\cdot\Pi_{m}\cdot 
\mathbbm{1}_{\mathsf{M}'}(\rho_m) \cdot\Pi_{m}\cdot\Pi -\sum_{m\in \mathsf{M}}\frac{p(m)}{p(\mathsf{M}') } \hat{\Pi}\Pi\cdot\Pi_{m}\cdot 
\mathbbm{1}_{\mathsf{M}'}(\rho_m) \cdot\Pi_{m}\cdot\Pi\hat{\Pi}\rVert_1\allowdisplaybreaks\notag\\
 &\leq  2\sqrt{\epsilon+2\sqrt{\epsilon}} \allowdisplaybreaks\notag\\
&\leq 4\sqrt[4]{\epsilon}\text{ .}\label{wilderie5}
\end{align}\vspace{0.2cm}

When  $\{\rho_1,\cdots,\rho_L\}$ fulfills 
\begin{align*}&\lVert L^{-1}
\sum_{i=1}^{L}\hat{\Pi}\Pi\cdot\Pi_{i}\cdot (\mathbbm{1}_{\mathsf{M}'}(\rho_i))\cdot\Pi_{i}\cdot\Pi\hat{\Pi} \allowdisplaybreaks\notag\\
&- \sum_{m\in \mathsf{M}}p(m) \hat{\Pi}\Pi\cdot\Pi_{m}\cdot 
\mathbbm{1}_{\mathsf{M}'}(\rho_m) \cdot\Pi_{m}\cdot\Pi\hat{\Pi}\rVert_1\allowdisplaybreaks\notag\\
 &\leq  \epsilon  
\end{align*}(i.e.  we assume the event considered in (\ref{useocbound}) occurs and thus (\ref{wilderie3}) holds), then
by (\ref{tnaresesesl}) and  (\ref{wilderie5})
it holds that
\begin{align}&\lVert L^{-1}
\sum_{i=1}^{L}\hat{\Pi}\Pi\cdot\Pi_{i}\cdot (\mathbbm{1}_{\mathsf{M}'}(\rho_i))\cdot\Pi_{i}\cdot\Pi\hat{\Pi}  - \sum_{m\in \mathsf{M}}p(m) 
\mathbbm{1}_{\mathsf{M}'}(\rho_m) \rVert_1\allowdisplaybreaks\notag\\
& \leq  \epsilon + 10 \sqrt[4]{\epsilon}\notag\\
&\leq 11 \sqrt[4]{\epsilon}\text{ .}\label{wilderie6}
\end{align}\vspace{0.2cm}

\it iii)  Upper Bound for  $\lVert L^{-1}
\sum_{i=1}^{L}\Pi\Pi_{i}(\mathbbm{1}_{\mathsf{M}'}(\rho_i))\Pi_{i}\Pi
- L^{-1}
\sum_{i=1}^{L}\hat{\Pi}\Pi\Pi_{i}(\mathbbm{1}_{\mathsf{M}'}(\rho_i))\Pi_{i}\Pi\hat{\Pi} 
\rVert_1$  \rm\vspace{0.2cm}

When  the event considered in (\ref{useocbound}) is true,
i.e., when (\ref{wilderie6})
holds, then
by (\ref{mtlsmimmpm})
\begin{align*}&
\mathrm{tr}\left(L^{-1}
\sum_{i=1}^{L}\hat{\Pi}\Pi\cdot\Pi_{i}\cdot (\mathbbm{1}_{\mathsf{M}'}(\rho_i))
\cdot\Pi_{i}\cdot\Pi\hat{\Pi} \right)\allowdisplaybreaks\notag\\
&\geq p(\mathsf{M}')- 11 \sqrt[4]{\epsilon}
\text{ .}
\end{align*}

We apply the gentle operator lemma (cf.  \cite{Wil}) to obtain
\begin{align}&
\lVert L^{-1}
\sum_{i=1}^{L} \hat{\Pi} \Pi\cdot\Pi_{i}\cdot (\mathbbm{1}_{\mathsf{M}'}(\rho_i))
\cdot\Pi_{i}\cdot\Pi\hat{\Pi} - L^{-1}
\sum_{i=1}^{L}\Pi\cdot\Pi_{i}\cdot (\mathbbm{1}_{\mathsf{M}'}(\rho_i))
\cdot\Pi_{i}\cdot\Pi \rVert_1\allowdisplaybreaks\notag\\
&\leq 2\sqrt{1-p(\mathsf{M}')+ 11 \sqrt[4]{\epsilon}}\notag\\
&\leq 2\sqrt{1-p(\mathsf{M}')}+ 22 \sqrt[8]{\epsilon}
\text{ .}
\end{align} The last inequality holds because
$\sqrt{a+b}\leq \sqrt{a}+\sqrt{b}$ for positive $a$ and $b$.
\vspace{0.2cm}

\it iv) Upper Bound for $\lVert L^{-1}
\sum_{i=1}^{L}\Pi\Pi_{i}\rho_i\Pi_{i}\Pi
- L^{-1}
\sum_{i=1}^{L}\Pi\Pi_{i}(\mathbbm{1}_{\mathsf{M}'}(\rho_i))\Pi_{i}\Pi
\rVert_1$  \rm\vspace{0.2cm}

In view of the fact that $\Pi$ and
$\Pi_{i}$ are projection matrices
for every $\rho_i$ $\in\{\rho_1,\cdots,\rho_L\}$
it holds that
\[\mathrm{tr}(\Pi_{i}\rho_l\Pi_{i})\leq \mathrm{tr}(\rho_l)=1\] and
\begin{align*}&\mathrm{tr}(L^{-1}\sum_{i=1}^{L} \Pi\Pi_{i}\rho_l\Pi_{i}\Pi)\\
&\leq \mathrm{tr}(L^{-1}\sum_{i=1}^{L} \Pi_{i}\rho_l\Pi_{i})\\
&\leq 1\text{ .}\end{align*}

When  $\{\rho_1,\cdots,\rho_L\}$ fulfills 
\begin{align*}&\lVert L^{-1}
\sum_{i=1}^{L}\Pi\cdot\Pi_{i}\cdot (\mathbbm{1}_{\mathsf{M}'}(\rho_i))\cdot\Pi_{i}\cdot\Pi - \sum_{m\in \mathsf{M}}p(m) 
\mathbbm{1}_{\mathsf{M}'}(\rho_m) \rVert_1\allowdisplaybreaks\notag\\
 &\leq   2\sqrt{1-p(\mathsf{M}')}+ 
20 \sqrt[8]{\epsilon} \text{ ,}
\end{align*} i.e., we assume that the event considered in (\ref{useocbound}) is true, and
then
by (\ref{mtlsmimmpm}) and the triangle inequality, 
we have
\begin{align}&\mathrm{tr}\left(\Pi\cdot\Pi_{i}\cdot (\mathbbm{1}_{\mathsf{M}'}(\rho_i))\cdot\Pi_{i}\cdot\Pi\right)\allowdisplaybreaks\notag\\
&\geq p(\mathsf{M}')-
 2\sqrt{1-p(\mathsf{M}')}- 
20 \sqrt[8]{\epsilon}
\text{ .}\end{align}
Since 
\begin{align}& L^{-1}
\sum_{i=1}^{L} \Pi\cdot\Pi_{i}\cdot \rho_i\cdot\Pi_{i}\cdot\Pi\allowdisplaybreaks\notag\\
&= L^{-1}
\sum_{i=1}^{L} \Pi\cdot\Pi_{i}\cdot \mathbbm{1}_{\mathsf{M}'}(\rho_i)\cdot\Pi_{i}\cdot\Pi\allowdisplaybreaks\notag\\
 &+ L^{-1}\sum_{i\notin \mathsf{M}'}
\Pi\cdot\Pi_{i}\cdot \rho_i\cdot\Pi_{i}\cdot\Pi
\text{ ,}\label{mtlsmimmpm1}\end{align} 
we have 
\begin{align}&\lVert L^{-1}\sum_{i\notin \mathsf{M}'}
\Pi\cdot\Pi_{i}\cdot \rho_i\cdot\Pi_{i}\cdot\Pi\rVert_1\allowdisplaybreaks\notag\\
&=\mathrm{tr}\left( L^{-1}\sum_{i\notin \mathsf{M}'}
\Pi\cdot\Pi_{i}\cdot \rho_i\cdot\Pi_{i}\cdot\Pi\right)\allowdisplaybreaks\notag\\
&\leq 1-p(\mathsf{M}')+
 2\sqrt{1-p(\mathsf{M}')}+ 
20 \sqrt[8]{\epsilon}
\text{ ,}\label{mtlsmimmpm2}\end{align}

which implies
\begin{align}&\lVert L^{-1}
\sum_{i=1}^{L}\Pi\cdot\Pi_{i}\cdot \rho_i\cdot\Pi_{i}\cdot\Pi  - \sum_{m\in \mathsf{M}}p(m) 
\mathbbm{1}_{\mathsf{M}'}(\rho_m) \rVert_1\allowdisplaybreaks\notag\\
 &\leq 1-p(\mathsf{M}')+ 4\sqrt{1-p(\mathsf{M}')}+ 
42 \sqrt[8]{\epsilon} \text{ .}\label{pbllsilpcpx5}
\end{align}

By (\ref{pbllsilpcpx5}), we have
\begin{align}&
Pr \biggl( \lVert L^{-1}
\sum_{i=1}^{L}\Pi\cdot\Pi_{X_i}\cdot X_i \cdot\Pi_{X_i} \cdot\Pi
- \sum_{m\in \mathsf{M}}p(m) \cdot \mathbbm{1}_{\mathsf{M}'}(\rho_m) \rVert_1\allowdisplaybreaks\notag\\
&~~~~~\leq   1-p(\mathsf{M}')+  4\sqrt{1-p(\mathsf{M}')}+ 
42 \sqrt[8]{\epsilon} 
 \biggr)\allowdisplaybreaks\notag\\
 & \geq 1- 2D \exp
\left( -p(\mathsf{M}')\frac{\epsilon^3Ld}{2\ln 2D} \right) \text{ .}\label{pbllsilpcpx2}
\end{align} \qed\end{proof}\vspace{0.2cm}

By
(\ref{te2}),  we have
\begin{align}&\mathrm{tr}(\Pi_{\mathsf{V}(\cdot,p), \alpha }(t^n))\allowdisplaybreaks\notag\\
&\leq 2^{n (S(\mathsf{V}(\cdot,p)\mid q) +\delta(\alpha))}\allowdisplaybreaks\notag\\
&= 2^{n (\sum_{t}q(t)\mathsf{V}(t,p) +\delta(\alpha))}\allowdisplaybreaks\notag\\
&= 2^{n (\sum_{t}q(t)S(V_{t}(p)) +\delta(\alpha))}\text{ .}\label{eq_5a2}\end{align}

Furthermore, for all $x^n$ it holds that

\begin{align}&
\Pi_{\mathsf{V},\alpha}(t^n, x^n)
 V_{{t^n}}(x^n)  \Pi_{\mathsf{V},\alpha}(t^n,x^n)\allowdisplaybreaks\notag\\
&\leq 2^{-n(S(\mathsf{V}|r) + \delta(\alpha)')}\Pi_{\mathsf{V},\alpha}(t^n,x^n)\allowdisplaybreaks\notag\\
&= 2^{-n(\sum_{t,x}r(t,x)S(\mathsf{V}(t,x)) + \delta(\alpha)')}\Pi_{\mathsf{V},\alpha}(t^n,x^n)
\text{ .}\label{eq_5a}
\end{align} 
\vspace{0.15cm}

We define 
\[\theta':=\left\{t\in\theta:nq(t)\geq \sqrt{n}\right\}\text{ .}\]
By properties of classical typical set (cf. \cite{Win}). there is a positive $\hat{\beta}(\alpha)$
such that
\begin{equation} \underset{p'}{Pr} \biggl(x^n\in \biggl\{x^n\in{\mathsf{A}}^n:
(x_{\mathtt{I}_t})\in \mathsf{T}^{nq(t)}_{p,\delta} ~ \forall 
t\in\theta'\biggr\}\biggr)\geq \left(1-2^{-\sqrt{n}\hat{\beta}(\alpha)}\right)^{|\theta|}\geq 1-2^{-\sqrt{n}
\frac{1}{2}\hat{\beta}(\alpha)}\text{ ,}\label{uppbxm}\end{equation} 
where $\mathtt{I}_t$ $:=$ $\{i\in\{1,\cdots,n\}: t_i = t\}$ 
is an indicator set that selects the indices $i$ in the sequence $t^n$
$=$ $(t_1,\cdots,t_n)$.

We denote the set $\{x^n: (x_{\mathtt{I}_t})\in \mathsf{T}^{nq(t)}_{p,\delta} ~ \forall 
t\in\theta'\}$ $\subset {\mathsf{A}}^n$ by $\mathsf{M}_{t^n}$.
For all $x^n\in\mathsf{M}_{t^n}$, if $n$ is sufficiently
large,
we have
\begin{align}&\left\vert\sum_{t,x}r(t,x)S(\mathsf{V}(t,x))-
\sum_{t}q(t)S(V_{t}|p)\right\vert\allowdisplaybreaks\notag\\
&\leq\left\vert\sum_{t\in\theta',x}r(t,x)S(\mathsf{V}(t,x))-
\sum_{t\in\theta'}q(t)S(V_{t}|p)\right\vert\allowdisplaybreaks\notag\\ 
&+ \left\vert\sum_{t\notin\theta',x}r(t,x)S(\mathsf{V}(t,x))-
\sum_{t\notin\theta'}q(t)S(V_{t}|p)\right\vert\allowdisplaybreaks\notag\\
&\leq\sum_{t\in\theta'}\left\vert\sum_{x}r(t,x)S(\mathsf{V}(t,x))-
q(t)S(V_{t}|p)\right\vert+2|\theta|\frac{1}{\sqrt{n}}C \notag\\ 
&\leq 2|\theta|\frac{\delta}{n}C  +2|\theta|\frac{1}{\sqrt{n}}C
\text{ ,}\label{uppbxmn2} \end{align} where $C:=\max_{t\in\theta}\max_{x\in \mathsf{A}}(S(\mathsf{V}(t,x))+
S(V_{t}|p))$.




We set $\Theta_{t^n}:=  \sum_{x^n \in \mathsf{M}_{t^n}}
{p}(x^n) Q_{{t^n}}(x^n)$. For given $z^n\in\mathsf{M}_{t^n}$ and ${t^n}\in\theta^n$, $\langle
z^n|\Theta_{t^n}|z^n\rangle$  is the expected  value of $\langle z^n|
Q_{{t^n}}(x^n) |z^n\rangle$ under the condition $x^n \in
\mathsf{M}_{t^n}$.\vspace{0.15cm}

We choose a positive $\bar{\beta}(\alpha)$
such that  $\bar{\beta}(\alpha)\leq \min(2^{-n\beta(\alpha)},
2^{-n\beta(\alpha)'})$, and 
set
$\epsilon := 2^{-n\bar{\beta}(\alpha)}$.
In view of (\ref{eq_5a}), 
we now apply Lemma \ref{cov3}, where
we consider the set $\mathsf{M}_{t^n}$ $\subset {\mathsf{A}}^n$:
If  $n$ is sufficiently large, for all $j$ 
 we have

\begin{align}&Pr \left( \lVert \sum_{l=1}^{L_{n}} \frac{1}{L_{n}} Q_{t^n}(X_{j,l}) -
\Theta_{t^n} \rVert_1 >  2^{-\sqrt{n}
\frac{1}{8}\hat{\beta}(\alpha)}  +40 \sqrt[8]{\epsilon} 
\right)\allowdisplaybreaks\notag\\
&\leq 2^{n (\sum_{t,x}r(t,x)S(\mathsf{V}(t,x)) +\delta(\alpha))} \notag\\
&\cdot \exp \left( -L_{n}\frac{\epsilon^3}{2\ln  2}  (1-2^{-\sqrt{n}\frac{1}{2}\hat{\beta}(\alpha)})
 \cdot 2^{n(\sum_{t}q(t)S(V_{t}(p))-\sum_{t}q(t)S(V_{t}|p)) +\delta(\alpha)+ \delta(\alpha)' + 2|\theta|\frac{\delta}{n}C
+ 2|\theta|\frac{1}{\sqrt{n}}C}
 \right)\allowdisplaybreaks\notag\\
&= 2^{n (\sum_{t,x}r(t,x)S(\mathsf{V}(t,x)) +\delta(\alpha)} \notag\\
&\cdot \exp \left( -L_{n}\frac{\epsilon^3}{2\ln  2}  \cdot (1-2^{-\sqrt{n}\frac{1}{2}\hat{\beta}(\alpha)})
2^{n (-\sum_{t}q(t)\chi(p;Z_{t}) + \delta(\alpha)+\delta(\alpha)'+ 2|\theta|\frac{\delta}{n}C
+ 2|\theta|\frac{1}{\sqrt{n}}C )} \right)\text{
.}\label{eq_5b+}\end{align}  The  equality  holds since
$S(V_{t}(p))-S(V_{t}|p)  =\chi(p;Z_{t})$. 

Furthermore,
 \begin{align}&Pr \left( \lVert \sum_{l=1}^{L_{n}} \frac{1}{L_{n}} Q_{t^n}(X_{j,l}) -
\Theta_{t^n} \rVert_1 >   2^{-\sqrt{n}
\frac{1}{8}\hat{\beta}(\alpha)}  +40 \sqrt[8]{\epsilon} ~ \forall t^n ~ \forall j \right)\allowdisplaybreaks\notag\\
&\leq J_n {|\theta|}^n 2^{n (\sum_{t,x}r(t,x)S(\mathsf{V}(t,x)) +\delta(\alpha)} \notag\\
&\cdot \exp \left( -L_{n}\frac{\epsilon^3}{2\ln  2} (1-2^{-\sqrt{n}\frac{1}{2}\hat{\beta}(\alpha)})
2^{n (-\sum_{t}q(t)\chi(p;Z_{t}) + \delta(\alpha)+\delta(\alpha)'+ 2|\theta|\frac{\delta}{n}C
+ 2|\theta|\frac{1}{\sqrt{n}}C )} \right)\text{
.}\label{eq_5b}\end{align} 
\vspace{0.2cm}

Let $\phi_{t}^{j}$ be the quantum state at the output of wiretapper's channel when
the channel state is $t$ and $j$ has
been sent.
We have
\begin{align*}&\sum_{t\in \theta}q(t)\chi\left(p;Z_{t}\right)- \chi\left(p;\sum_{t}q(t)Z_{t}\right)\\
&= \sum_{t\in \theta}q(t)S\left(\sum_{j=1}^{J_n}\frac{1}{J_n} \phi_{t}^{j}\right)
-\sum_{t\in \theta}\sum_{j=1}^{J_n}q(t)\frac{1}{J_n}S\left( \phi_{t}^{j}\right)\\
&-S\left(\frac{1}{J_n}\sum_{t\in \theta}\sum_{j=1}^{J_n} q(t)\phi_{t}^{j}\right)
+\sum_{j=1}^{J_n}\frac{1}{J_n}S\left(\sum_{t\in \theta}q(t) \phi_{t}^{j}\right)\text{ .}
\end{align*}
 Let $H^{\mathfrak{T}}$ be a $\left|\theta\right|$-dimensional Hilbert space
spanned by an orthonormal basis $\{|t\rangle : t = 1, \cdots, \left|\theta\right|\}$. 
Let $H^{\mathfrak{J}}$ be a $J_n$ dimensional Hilbert space spanned by an orthonormal basis 
$\{|j\rangle : j = 1, \cdots, J_n\}$. 
We define
\[\varphi^{\mathfrak{J}\mathfrak{T}H^{n}}:=\frac{1}{J_n}\sum_{j=1}^{J_n}\sum_{t\in \theta}q(t)
|j\rangle\langle j|\otimes|t\rangle\langle t|\otimes
 \phi_{t}^{j}\text{ .}\]
We have
\[\varphi^{\mathfrak{J}H^{n}}=\mathrm{tr}_{\mathfrak{T}}\left(\varphi^{\mathfrak{J}\mathfrak{T}H^{n}}\right)=
\frac{1}{J_n}\sum_{j=1}^{J_n}\sum_{t\in \theta}q(t)
|j\rangle\langle j|\otimes
 \phi_{t}^{j}\text{ ;}\]
\[\varphi^{\mathfrak{T}H^{n}}=\mathrm{tr}_{\mathfrak{J}}\left(\varphi^{\mathfrak{J}\mathfrak{T}H^{n}}\right)=
\frac{1}{J_n}\sum_{j=1}^{J_n}\sum_{t\in \theta}q(t)|t\rangle\langle t|\otimes
 \phi_{t}^{j}\text{ ;}\]
\[\varphi^{H^{n}}=\mathrm{tr}_\mathfrak{J}{\mathfrak{T}}\left(\varphi^{\mathfrak{J}\mathfrak{T}H^{n}}\right)=
\frac{1}{J_n}\sum_{j=1}^{J_n}\sum_{t\in \theta}q(t)
 \phi_{t}^{j}\text{ .}\]
Thus,
\[S(\varphi^{\mathfrak{J}H^{n}}) = H(R_{uni})+ \frac{1}{J_n}\sum_{j=1}^{J_n}S\left(\sum_{t\in \theta}q(t)
\phi_{t}^{j}\right)\text{ ;}\]
\[S(\varphi^{\mathfrak{T}H^{n}})=H(Y_q)+  \sum_{t\in \theta}q(t)S\left(\frac{1}{J_n}\sum_{j=1}^{J_n}
 \phi_{t}^{j}\right)\text{ ;}\]
\[S(\varphi^{\mathfrak{J}\mathfrak{T}H^{n}})= H(R_{uni})+H(Y_q)+ \frac{1}{J_n}\sum_{j=1}^{J_n}\sum_{t\in \theta}q(t)S\left(
 \phi_{t}^{j}\right)\text{ ,}\]
where $Y_q$ is a random variable on $\theta$ with distribution $q(t)$.

 By strong subadditivity of von Neumann entropy, it holds that $S(\varphi^{\mathfrak{J}H^{n}}) + S(\varphi^{\mathfrak{T}H^{n}})$
$\geq$ $S(\varphi^{H^{n}})+S(\varphi^{\mathfrak{j}\mathfrak{T}H^{n}})$, therefore
\begin{equation}\sum_{t}q(t)\chi\left(p;Z_{t}\right)- \chi\left(p;\sum_{t}q(t)Z_{t}\right)
\geq 0\text{ .}\label{stqtclpztr}\end{equation}\vspace{0.2cm}

For an arbitrary $\zeta$,
we define $L_{n} = \lceil 2^{\max_{t^n}\chi(p;Z_{t^n})+n\zeta} \rceil$,
and choose a suitable $\alpha$, $\bar{\beta}(\alpha)$, and 
sufficiently large $n$ such that
$6\bar{\beta}(\alpha)$ + $2\delta(\alpha)$ $+2\delta(\alpha)'$ $+ 2|\theta|\frac{\delta}{n}C$
$+ 2|\theta|\frac{1}{\sqrt{n}}C$ $\leq\zeta$.
By (\ref{stqtclpztr}), if $n$ is  sufficiently large, we have
$L_{n} \geq \lceil 2^{n(\sum_{t}q(t)\chi(p;Z_{t})+\zeta)} \rceil$
and
\begin{align*}&
L_{n}\frac{\epsilon^3}{2\ln  2} (1-2^{-\sqrt{n}\frac{1}{2}\hat{\beta}(\alpha)})
2^{n (-\sum_{t}q(t)\chi(p;Z_{t}) + \delta(\alpha)+\delta(\alpha)'+ 2|\theta|\frac{\delta}{n}C
+ 2|\theta|\frac{1}{\sqrt{n}}C )} > 2^{\frac{1}{2}n\zeta}\text{ .}
\end{align*} 

When $n$ is sufficiently large 
for any positive $\vartheta$ it holds that
\begin{align*}&J_n{|\theta|}^n 2^{n (\sum_{t,x}r(t,x)S(\mathsf{V}(t,x)) +\delta(\alpha)} \exp(- 2^{\frac{1}{4}n\zeta})\\
&\leq 2^{-n\vartheta}
\end{align*}
and 
\[ 2^{-\sqrt{n}
\frac{1}{8}\hat{\beta}(\alpha)}  +40 \sqrt[8]{\epsilon}  \leq 2^{-\sqrt{n}
\frac{1}{16}\hat{\beta}(\alpha)} \text{ .}\]

Thus for sufficiently large $n$ 
we have
\begin{align} &Pr \biggl( \lVert \sum_{l=1}^{L_{n}} \frac{1}{L_{n}} Q_{t^n}(X_{j,l}) -\Theta_{t^n}
\rVert_1 \leq  2^{-\sqrt{n}
\frac{1}{16}\hat{\beta}(\alpha)} \text{ }\forall t^n  ~ \forall j 
 \biggr)\allowdisplaybreaks\notag\\
&\geq 1- 2^{n\vartheta}
\label{eq_6blslln}
\end{align} for any positive $\varrho $.
\vspace{0.3cm}

Now we have $J_n\cdot L_{n} < 2^{n(\min_{s}\chi(p;B_{s})-\mu)}$.

In 
\cite{Bj/Bo} and \cite{Bj/Bo/No}, the following was
 shown (using results of \cite{Ha/Na}).  Let $\{{\dot{X}}_{j,l}\}_{j \in \{1, \dots, J_n\}, l \in
\{1, \dots, L_{n}\}}$ be a family of random variables taking value
according to $\dot{p} \in \mathsf{P}(\mathsf{A}^n)$.
If $n$ is sufficiently large,
and if $J_n\cdot L_{n} \leq 2^{\min_{s}n(\chi(\dot{p};B_{s})-\mu)}$
for an  arbitrary positive $\mu$
there exists a 
projection $q_{x^n}$ on $H$ for every $x^n\in \mathsf{A}^n$
and positive constants
 $\beta$ and $\gamma$, such that  for any $(s,j,l)\in \theta \times
\{1,\dots,J_n\} \times \{1,\dots,L_n\}$, it holds that
\begin{equation}\underset{\dot{p}}{Pr} \left[\mathrm{tr}\left(\overline{W}_s^{\otimes n}({\dot{X}}_{j,l})D_{{\dot{X}}_{j,l}}\right) \geq 1-|\overline{\theta}|2^{-{n}^{1/16}\beta}\right] > 1-2^{-n\gamma } \text{
,}\label{qnocsig4'}\end{equation}
where for $ j \in  \{1,\dots,J_n\}, l \in \{1,\dots,L_n\}$,
we have
\[D_{{\dot{X}}_{j,l}}:= \left(\sum_{j',l'} q_{{\dot{X}}_{j',l'}}\right)^{-\frac{1}{2}} q_{{\dot{X}}_{j,l}}  \left(\sum_{j',l'} q_{{\dot{X}}_{j',l'}}\right)^{-\frac{1}{2}}\text{ .}\]
Notice that by this definition, for any realization $\{{\dot{x}}_{j,l}:j,l\}$ of $\{{\dot{X}}_{j,l}:j,l\}$ it holds that
$\sum_{j=1}^{J_n}\sum_{l=1}^{L_n}  D_{{\dot{x}}_{j,l}} \leq \mathrm{id}_{H^{\otimes n}}$.

(Actually in \cite{tobepublished},
it was shown that
there exists
 a collection of positive semidefinite operators
$\{D_{s,{\dot{X}}_{j,l}}: s \in \overline{\theta}, j \in  \{1,\dots,J_n\}, l \in \{1,\dots,L_n\}    \}$ such that  for any $s$, $j$, and $l$ it holds that
\[Pr \left[\mathrm{tr}\left(\overline{W}_s^{\otimes n}({\dot{X}}_{j,l})D_{s,{\dot{X}}_{j,l}}\right) \geq 1-2^{|\overline{\theta}|}2^{-n\beta}\right] > 1-2^{-n\gamma } \text{
,}\]
and for any realization $\{{\dot{x}}_{j,l}:j,l\}$ of $\{{\dot{X}}_{j,l}:j,l\}$ it holds that
$\sum_{s\in \overline{\theta}}\sum_{j=1}^{J_n}\sum_{l=1}^{L_n}  D_{s,{\dot{x}}_{j,l}} \leq \mathrm{id}_{H^{\otimes n}}$.)\vspace{0.2cm}

For any given $s\in \theta$, it holds  that
\begin{align*} & \overline{W}_{s}^{\otimes n}({p}^n)- \overline{W}_{s}^{\otimes n}({p'}^n)\\
& =\left(1-\frac{1}{P
(\mathsf{T}^n_{p,\delta})}\right)\sum_{a^n\in \mathsf{T}^n_{p,\delta}} p^n(a^n)\overline{W}_{s}^{\otimes n}(a^n)
+\sum_{a^n\notin \mathsf{T}^n_{p,\delta}} p^n(a^n)\overline{W}_s^{\otimes n}(a^n)\text{ .}\end{align*}
Thus, we have $\left\vert\mathrm{tr}\left(\overline{W}_{s}^{\otimes n}(p^n)
-\overline{W}_{s}^{\otimes n}({p'}^n)\right)\right\vert$ $\leq$
$2P(\mathsf{T}^n_{p,\delta})$ $\leq$ $2^{-n\eta(\delta)}$ for a positive $\eta(\delta)$.

\begin{lemma}[Fannes-Audenaert  Ineq.,
 cf. \cite{Fa}, \cite{Au}]\label{eq_9}  
Let $\Phi$ and $\Psi$ be two  quantum states in a
$d$-dimensional complex Hilbert space and
$\|\Phi-\Psi\|_1 \leq \mu < \frac{1}{e}$, then
\begin{equation} |S(\Phi)-S(\Psi)| \leq \mu \log (d-1)
+ h(\mu) \text{
,}\label{faaudin}\end{equation}
where $h(\nu) := -\nu \log \nu - (1- \nu) \log (1-\nu)$
for $\nu\in [0,1]$.
\end{lemma}\vspace{0.15cm}

 The Fannes  Inequality was first introduced in \cite{Fa} where it has been
shown that $|S(\Phi)-S(\Psi)| \leq \mu \log d - \mu
\log \mu $. In \cite{Au}, the result of \cite{Fa} has been
improved, and (\ref{faaudin}) has been proved.\vspace{0.15cm}

By Lemma \ref{eq_9} for any positive $\omega$, if $n$ is 
sufficiently large, we have 

\begin{align*} & \left\vert S\left(\overline{W}_{s}^{\otimes n}({p}^n)
\right) -S\left(\overline{W}_{s}^{\otimes n}({p'}^n)\right)\right\vert\\
& \leq 2^{-n\eta(\delta)}\log (d^n-1)
+ h(2^{-n\eta(\delta)})\\
&\leq \omega\text{ .}\end{align*}\vspace{0.15cm}

Furthermore, we have
\begin{align*} & \Bigl\vert \sum_{a^n\in \mathsf{T}^n_{p,\delta}} {p'}^n(a^n)S\left(\overline{W}_{s}^{\otimes n}(a^n)\right)
-\sum_{a^n\in \mathsf{T}^n_{p,\delta}} {p'}^n(a^n)S\left(\overline{W}_{s}^{\otimes n}(a^n)\right) \Bigr\vert\\
&=\Bigl\vert\left(1-\frac{1}{P
(\mathsf{T}^n_{p,\delta})}\right)\sum_{a^n\in \mathsf{T}^n_{p,\delta}} p^n(a^n)S\left(\overline{W}_{s}^{\otimes n}(a^n)\right)
+\sum_{a^n\notin \mathsf{T}^n_{p,\delta}} p^n(a^n)S\left(\overline{W}_s^{\otimes n}(a^n)\right)\Bigr\vert\\
&\leq 2P(\mathsf{T}^n_{p,\delta})\max_{a^n\in \mathsf{A}^n}S\left(\overline{W}_s^{\otimes n}(a^n)\right)\\
&\leq \omega\end{align*} for any positive $\omega$, if $n$ is 
sufficiently large.\vspace{0.15cm}

We now have \begin{align*} &\left\vert \chi(p;B_{s}^{\otimes n}) - \chi(p';B_{s}^{\otimes n}) \right\vert\\
&\leq \left\vert S\left(\overline{W}_{s}^{\otimes n}({p}^n)
\right) -S\left(\overline{W}_{s}^{\otimes n}({p'}^n)\right)\right\vert\\
&+ \Bigl\vert \sum_{a^n\in \mathsf{T}^n_{p,\delta}} {p'}^n(a^n)S\left(\overline{W}_{s}^{\otimes n}(a^n)\right)
-\sum_{a^n\in \mathsf{T}^n_{p,\delta}} {p'}^n(a^n)S\left(\overline{W}_{s}^{\otimes n}(a^n)\right) \Bigr\vert\\
&\leq 2\omega\end{align*} for any positive $\omega$, if $n$ is 
sufficiently large.

Thus, when $J_n\cdot L_{n} < 2^{\min_{s}n\chi(p;B_{s})-\mu}$ holds, we also have
\begin{equation} J_n\cdot L_{n} < 2^{\min_{s}n\chi(p';B_{s})-\mu}\label{twjncln}\end{equation}
if $n$ is 
sufficiently large. \vspace{0.2cm}


By (\ref{twjncln}), we can apply 
(\ref{qnocsig4'}) to  $X_{j,l}$. 
We have: If
$n$ is  sufficiently large,  the event
\begin{align*}&\left(\bigcap_{s}  \left\{ \max_{j\in \{ 1,\dots ,J_n\}} \max_{l\in \{ 1,\dots ,L_n\}}
\mathrm{tr}\left(\overline{W}_s^{\otimes n}(X_{j,l})D_{X_{j,l}}\right) \geq
1-|\overline{\theta}|2^{-{n}^{1/16}\beta}\right\}\right)\allowdisplaybreaks\\
&\cap \biggl( \lVert \sum_{l=1}^{L_{n}} \frac{1}{L_{n}} Q_{t^n}(X_{j,l}) -\Theta_{t^n}
\rVert_1 \leq  2^{-\sqrt{n}
\frac{1}{16}\hat{\beta}(\alpha)} \text{ }\forall t^n ~ \forall j  \biggr)
\end{align*} has a positive
probability with respect to $p'$. 

 This means that for any $\epsilon >0$, if
$n$ is  sufficiently large we can find a realization $x_{j,l}$ of
$X_{j,l}$
 with a positive probability such that for all $s\in \overline{\theta}$, $t^n \in \theta^n$, $\pi\in\Pi_n$,
 and $j \in \{1,\dots,J_n\}$,   we have
\begin{equation}
 \sum_{l=1}^{L_{n}} \mathrm{tr}\left(\overline{W}_s^{\otimes n}(x_{j,l})D_{x_{j,l}}\right) \geq
1-\epsilon\text{ ,}\label{wappstfasi}\end{equation}  and
\begin{equation} \lVert \sum_{l=1}^{L_{n}} \frac{1}{L_{n}} Q_{t^n}(x_{j,l}) -\Theta_{t^n}
\rVert_1  \leq 2^{-\sqrt{n}
\frac{1}{16}\hat{\beta}(\alpha)} \text{ .}\label{absc1}\end{equation}\vspace{0.2cm}

We define
for $\pi\in\Pi_n$
 its permutation matrix on ${H}^{\otimes n}$ by $P_{\pi}$.
We have $V_{t^n}(\pi(x^n)) $ $=$ $P_{\pi}V_{\pi^{-1}(t^n)}(x^n)P_{\pi}^{\dagger}$.
For $\pi\in\Pi_n$, we define
 $\Theta_{t^n,\pi}:=  \sum_{x^n \in \mathsf{T}_{p,\delta}}
{p'}(x^n) Q_{{t^n}}(\pi(x^n))$.
We have 
$\Theta_{t^n,\pi}$ $=$ $  P_{\pi} \left(\sum_{x^n \in \mathsf{T}_{p,\delta}}
{p'}(x^n)  Q_{\pi^{-1}(t^n)}(x^n)\right)P_{\pi}^{\dagger}$ $=$ $P_{\pi} \Theta_{\pi(t^n)}P_{\pi}^{\dagger}$.


We choose a suitable positive  $\alpha$.
For any given $j' \in \{1, \dots, J_n\}$,
 we have
\begin{align}
& \left\|\sum_{l=1}^{L_n}\frac{1}{L_n}V_{t^n}(\pi(x_{j',l}))-\Theta_{t^n,\pi}\right\|_1\notag\\
 &\leq\|\sum_{l=1}^{L_n}\frac{1}{L_n}V_{t^n}(\pi(x_{j',l}))-\sum_{l=1}^{L_n}\frac{1}{L_n}Q_{t^n}(\pi(x_{j',l}))\|_1\notag\\
 &\qquad+\|\sum_{l=1}^{L_n}\frac{1}{L_n}Q_{t^n}(\pi(x_{j',l}))-\Theta_{t^n,\pi}\|_1\notag\\
 &\leq \sum_{l=1}^{L_n}2^{-\sqrt{n}
\frac{1}{16}\hat{\beta}(\alpha)}+\|P_\pi Q_{\pi^{-1}(t^n)}(x_{j',l})P_{\pi}^{\dagger}-P_{\pi} \Theta_{\pi(t^n)}P_{\pi}^{\dagger}\|_1\notag\\
 &=2^{-\sqrt{n}
\frac{1}{16}\hat{\beta}(\alpha)}+\|\sum_{l=1}^{L_n}\frac{1}{L_n} Q_{\pi^{-1}(t^n)}(x_{j',l})-\Theta_{\pi^{-1}(t^n)}\|_1\notag\\
 &\leq 2^{-\sqrt{n}
\frac{1}{16}\hat{\beta}(\alpha)}+\sqrt{2^{-\frac{1}{2}n\beta(\alpha)}+2^{-\frac{1}{2}n\beta(\alpha)''}}\notag\\
 &\leq 2^{-\sqrt{n}
\frac{1}{32}\hat{\beta}(\alpha)}\text{ ,}\label{eq_8}
 \end{align}
where the first inequality is an application of the triangle inequality and the second is again the triangle inequality combined
 with (\ref{eq_4}). The following equality follows because $\|U\cdot A\cdot U^\dag\|_1=\|A\|_1$ for all $A\in\mathcal B(H^{\otimes n})$ and unitary matrices $U\in\mathcal B(H^{\otimes n})$.
At last, we use  (\ref{absc1}).

By (\ref{eq_8}), we have
\begin{align*}
&\|\frac{1}{J_n\cdot L_n}\sum_{j=1}^{J_n}\sum_{l=1}^{L_n}V_{t^n}(\pi(x_{j,l}))-\Theta_{t^n,\pi}\|_1\\
&\leq  2^{-\sqrt{n}
\frac{1}{32}\hat{\beta}(\alpha)}
\text{ .}\end{align*}

By Lemma \ref{eq_9}  and  the inequality (\ref{eq_8}),  for a uniformly distributed
 random variable $R_{uni}$ with values  in
$\{1,\dots,J_n\}$ and all $\pi\in\Pi_{n}$ and $t^n\in\theta^n$, we have
\begin{align}& \chi(R_{uni};Z_{t^n,\pi}) \allowdisplaybreaks\notag\\
&=S\left( \sum_{j=1}^{J_n} \frac{1}{J_{n}} \sum_{l=1}^{L_{n}}
\frac{1}{L_{n}} V_{t^n}(\pi(x_{j,l}))\right) \allowdisplaybreaks\notag\\
&- \sum_{j=1}^{J_n}
\frac{1}{J_{n}}S\left(\sum_{l=1}^{L_{n}}
 \frac{1}{L_{n}}V_{t^n}(\pi(x_{j,l}))\right)\allowdisplaybreaks\notag\\
  &\leq \left\vert  S\left( \sum_{j=1}^{J_n} \frac{1}{J_{n}} \sum_{l=1}^{L_{n}}
\frac{1}{L_{n}} V_{t^n}(\pi(x_{j,l}))\right)-S\left( \Theta_{t^n,\pi} \right) \right\vert \allowdisplaybreaks\notag\\
&+\left\vert  S(\Theta_{t^n,\pi} )- \sum_{j=1}^{J_n} \frac{1}{J_{n}}S\left( \sum_{l=1}^{L_{n}}
\frac{1}{L_{n}} V_{t^n}(\pi(x_{j,l}))\right)\right\vert \allowdisplaybreaks\notag\\
&\leq 2\cdot  2^{-\sqrt{n}
\frac{1}{32}\hat{\beta}(\alpha)} \log (nd-1) + 
2h( 2^{-\sqrt{n}
\frac{1}{32}\hat{\beta}(\alpha)}) 
\text{ .}\label{wehaveb}\end{align}

By (\ref{wehaveb}), for any positive $\lambda$ if $n$ is sufficiently
large, we have
\begin{equation}\label{e10a}
\max_{t^n \in \theta^n} \chi(R_{uni};Z_{t^n,\pi}) \leq \lambda\text{
.}\end{equation}

For an arbitrary positive $\delta$, let
\[J_n:=2^{n\min_{s \in \overline{\theta}}\chi(p;B_s)-\max_{t^n\in \theta^n}\chi(p;Z_{t^n})-n\delta}\text{ .}\]
Now we define a code $(E, \{D_{j}: j =1,\dots,J_n\})$,
by $E(x^n\mid j)=\frac{1}{L_{n}}$ if $x^n\in \{x_{j,l} :l\in\{1,\dots,L_{n}\}$,
and $E(x^n\mid j)=0$ if $x\not\in \{x_{j,l} :l\in\{1,\dots,L_{n}\}$,
and $D_{j}:=\frac{1}{L_{n}}\sum_{l=1}^{L_{n}}D_{x_{j,l}}$.
For any positive $\lambda$ and $\epsilon$ if $n$ is sufficiently large,
by (\ref{wappstfasi}) and (\ref{e10a}),
it holds that
\[\max_{s \in \overline{\theta}}\frac{1}{J_n}  
\sum_{j=1}^{J_n}  \sum_{a^n \in \mathsf{A}^n}
E^n(a^n|j)
\mathrm{tr}\left(\overline{W}_s^{\otimes n}(a^n)D_{j}\right) \geq
1-\epsilon \text{ ,}\]
\[\max_{t^n\in\theta^n}\max_{\pi\in\Pi_n}
\chi\left(R_{uni};Z_{t^n,\pi}\right) \leq \epsilon\text{ .}\]

We obtain

	\begin{equation}
	\hat{C}_s(\{(\overline{W}_s,{V}_t) :s \in \overline{\theta}, t \in \theta\})
	\geq \min_{s \in \overline{\theta}}\chi(p;B_s)-\lim_{n\rightarrow \infty} \frac{1}{n}\max_{t^n\in \theta^n}\chi(p;Z_{t^n})
	\text{ .}\label{hatceque}\end{equation}

The achievability of
$\lim_{n\rightarrow \infty} \frac{1}{n}\Bigl(\min_{s \in \overline{\theta}}\chi(p_U;B_s^{\otimes n})$ $-$ $ \max_{t^n\in \theta^n}\chi(p_U;Z_{t^n})\Bigr)$
is then shown via standard arguments  (cf. \cite{De}).\vspace{0.3cm}

Now we are going to prove the converse.\vspace{0.15cm}

Let  $(\mathcal{C}_n)=(E^{(n)},\{D_{j}^{(n)}:j\})$ be a sequence of $(n,J_n)$ code such
that
\[
\max_{s \in \overline{\theta}} P_e(\mathcal{C}_n,s,n)
\leq\lambda_n\text{ ,}
\]
\[
\max_{t^n\in\theta^n}\max_{\pi\in\Pi_n}
\chi\left(R_{uni};Z_{t^n,\pi}\right) \leq \epsilon_{n}\text{ ,}
\] where $ \lim_{n\to\infty}\lambda_n=0$ and
$\lim_{n\to\infty}\epsilon_{n}=0$, where $R_{uni}$ is the uniform
distribution on $\{1,\cdots J_n\}$.

It is known (cf. (\cite{Win})) that the  capacity of a 
classical-quantum channel $\overline{W}$
cannot exceed $I(R_{uni};B)$. Since  the  capacity of a 
compound
classical-quantum channel $(\overline{W}_s)_{s \in \overline{\theta}}$ 
cannot exceed the worst channel in $\{\overline{W}_s: s \in \overline{\theta}\}$,
its capacity is
bounded by $\frac{1}{n}(\min_{s \in \overline{\theta}}\chi(p_U;B_s)$.
The 
 enhanced achievable secrecy
rate for the compound-arbitrarily varying wiretap classical-quantum channel cannot exceed the  capacity 
without a wiretapper; thus for any $\xi > 0$ let us choose
$\epsilon_{n}=\frac{1}{2}\xi$, if $n$ is sufficiently large, the  secrecy rate
of  $(\mathcal{C}_n)$ cannot be greater than

\begin{align}&\min_{s \in \overline{\theta}}I(R_{uni};B_s)
-\xi\allowdisplaybreaks\notag\\
&\leq\min_{s \in \overline{\theta}}I(R_{uni};B_s)-\frac{1}{n}\max_{t^n\in \theta^n}\chi(R_{uni};Z_{t^n}) -\xi
+\frac{1}{n}\epsilon_{n}\allowdisplaybreaks\notag\\
&\leq  \min_{s \in \overline{\theta}}I(R_{uni};B_s)-\frac{1}{n}\max_{t^n\in \theta^n}\chi(R_{uni};Z_{t^n})-\frac{1}{2}\xi\allowdisplaybreaks\notag\\
&=\min_{s \in \overline{\theta}} H(R_{uni}) + H(B_s)- H(R_{uni}B_s)- \frac{1}{n}\max_{t^n\in \theta^n}\chi(R_{uni};Z_{t^n})-\frac{1}{2}\xi\allowdisplaybreaks\notag\\
&\leq  \min_{s \in \overline{\theta}}\chi(R_{uni};B_s)- \frac{1}{n}\max_{t^n\in \theta^n}\chi(R_{uni};Z_{t^n})-\frac{1}{2}\xi\allowdisplaybreaks\notag\\
&\leq\frac{1}{n}\max_{\Lambda_n}( \min_{s \in \overline{\theta}}\chi(p_U;B_s^n)
-\max_{t^n\in \theta^n}\chi(p_U;Z_{t^n})-\frac{1}{2}\xi\text{  .}
\end{align}The  third inequality holds because $R_{uni}\rightarrow A \rightarrow \{B_s^{\otimes n},Z_{t^n}:s,t_n\}$ is
always a Markov chain.

This and (\ref{hatceque}) prove Theorem \ref{loatbfis}. \qed\end{proof}
\vspace{0.15cm}

\begin{corollary}
Let  $\theta$ $:=$ $\{1,\cdots,T\}$ be a finite index set.
Let $\overline{\theta}$ $:=$ $\{1,2\cdots\}$ be an infinite index set.
	Let $\{(\overline{W}_s,{V}_t) :s \in \overline{\theta}, t \in \theta\}$ 
	be a compound-arbitrarily varying wiretap classical-quantum channel.
	We have
\[
	\hat{C}_s(\{(\overline{W}_s,{V}_t) :s \in \overline{\theta}, t \in \theta\})
	=\lim_{n\rightarrow \infty} \frac{1}{n} \max_{\Lambda_n}
	\Bigl(\inf_{s \in \overline{\theta}}\chi(p_U;B_s^{\otimes n})-\max_{t^n\in \theta^n}\chi(p_U;Z_{t^n})\Bigr)
	\text{ .}\]\label{infcompunt}
\end{corollary}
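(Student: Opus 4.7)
The plan is to reduce Corollary \ref{infcompunt} to Theorem \ref{loatbfis} via a finite-net approximation of $\overline{\theta}$, exploiting the finite dimension of the legitimate receiver's output space together with the uniform continuity of the Holevo quantity (Lemma \ref{eq_9}).

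For the converse, I would argue monotonically: for every finite subset $\overline{\theta}_0 \subset \overline{\theta}$, a code that is good for $\{(\overline{W}_s,V_t):s\in\overline{\theta},t\in\theta\}$ is \emph{a fortiori} good for $\{(\overline{W}_s,V_t):s\in\overline{\theta}_0,t\in\theta\}$, because the secrecy constraint is unchanged and the decoding requirement on $\overline{\theta}_0$ is weaker. Applying the converse part of Theorem \ref{loatbfis} to each $\overline{\theta}_0$ thus bounds $\hat{C}_s$ above by $\lim_n \frac{1}{n}\max_{\Lambda_n}(\min_{s\in\overline{\theta}_0}\chi(p_U;B_s^{\otimes n}) - \max_{t^n}\chi(p_U;Z_{t^n}))$, and taking the infimum over all finite $\overline{\theta}_0\subset\overline{\theta}$ yields the $\inf_{s\in\overline{\theta}}$ expression.

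For achievability, I would fix $\tau>0$ and cover the family $\{\overline{W}_s:s\in\overline{\theta}\}$ by a finite $\tau$-net: since $|\mathsf{A}|<\infty$ and $\mathcal{S}(H)$ is compact in trace norm, the set of all classical-quantum channels $\mathsf{P}(\mathsf{A})\to\mathcal{S}(H)$ is totally bounded in the channel norm $\|W-W'\|:=\max_{a\in\mathsf{A}}\|W(a)-W'(a)\|_1$, so there exists a finite $\overline{\theta}_\tau\subset\overline{\theta}$ such that every $\overline{W}_s$ lies within $\tau$ of some $\overline{W}_{s'}$ with $s'\in\overline{\theta}_\tau$. Applying Theorem \ref{loatbfis} to the finite compound-arbitrarily varying wiretap channel indexed by $\overline{\theta}_\tau\times\theta$ produces, for any $\eta>0$, a code achieving rate $\frac{1}{n}(\min_{s\in\overline{\theta}_\tau}\chi(p_U;B_s^{\otimes n}) - \max_{t^n}\chi(p_U;Z_{t^n}))-\eta$ with vanishing error and leakage. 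For every true $s\in\overline{\theta}$, the tensorised channel $\overline{W}_s^{\otimes n}$ differs from its $\tau$-net representative $\overline{W}_{s'}^{\otimes n}$ by at most $n\tau$ in trace norm by the triangle inequality, so by the gentle operator lemma (Lemma \ref{eq_4a}) the average error probability changes by at most $O(\sqrt{n\tau})$, which is controlled by choosing $\tau=\tau(n)\to 0$ slowly enough. The secrecy requirement refers only to the $V_t$, so it is untouched.

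The main obstacle is to show that the rate does not degrade when passing from $\min_{s\in\overline{\theta}_\tau}$ to $\inf_{s\in\overline{\theta}}$; here Fannes--Audenaert (Lemma \ref{eq_9}) is applied to $\chi(p_U;B_s^{\otimes n})$ versus $\chi(p_U;B_{s'}^{\otimes n})$ when $\|\overline{W}_s^{\otimes n}-\overline{W}_{s'}^{\otimes n}\|_\diamond \le n\tau$, which yields
\begin{equation*}
\bigl|\chi(p_U;B_s^{\otimes n})-\chi(p_U;B_{s'}^{\otimes n})\bigr|\leq 2 n\tau\log(\dim H^{\otimes n}-1)+2h(n\tau)\text{ ,}
\end{equation*}
and hence $|\min_{s\in\overline{\theta}_\tau}\chi - \inf_{s\in\overline{\theta}}\chi|/n$ can be made arbitrarily small by choosing $\tau$ small relative to $1/n$. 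Letting first $\eta,\tau\to 0$ and then $n\to\infty$ in the established inequalities gives the claim, matching the converse bound and proving the corollary.
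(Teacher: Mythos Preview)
Your proposal is correct and follows essentially the same $\tau$-net approximation strategy as the paper, which uses the diamond norm together with the Milman--Schechtman net lemma (Lemma \ref{e2}) in place of your trace-norm compactness argument but is otherwise identical, and whose converse is handled exactly as you suggest by restriction to finite subsets. One small correction: the gentle operator lemma is not the relevant tool for bounding the change in decoding error---once $\|\overline{W}_s^{\otimes n}(a^n)-\overline{W}_{s'}^{\otimes n}(a^n)\|_1\le n\tau$, the bound $\bigl|\mathrm{tr}\bigl((\overline{W}_s^{\otimes n}-\overline{W}_{s'}^{\otimes n})(a^n)\,D_j\bigr)\bigr|\le n\tau$ follows directly from $0\le D_j\le\mathrm{id}$, giving an additive error of order $n\tau$ rather than $\sqrt{n\tau}$, which only strengthens your conclusion.
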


\begin{proof}

For a linear map $W: \mathcal{S}(H') \rightarrow  \mathcal{S}(H'')$
let
\begin{equation}\|W \|_{\lozenge}:=\sup_{n\in \mathbb{N}}\max_{a\in S(\mathbb{C}^n
\otimes H'), \|a\|_1=1}\| (\mathrm{I}_n \otimes W)(a)\|_1\text{ .}\end{equation}

It is  known \cite{Pa} that this norm is multiplicative, i.e.,
$\|W  \otimes W' \|_{\lozenge} = \|W\|_{\lozenge}\cdot \|W'
\|_{\lozenge}$.

A $\tau$-net in the space of the completely positive trace-preserving maps  $\mathcal{S}(H') \rightarrow  \mathcal{S}(H'')$ is
a finite  set $\left({W^{(k)}}\right)_{k=1}^{K}$ of completely positive trace-preserving maps  $\mathcal{S}(H') \rightarrow  \mathcal{S}(H'')$
 with the property that for each completely positive trace-preserving map $W: \mathcal{S}(H') \rightarrow  \mathcal{S}(H'')$,
there is at least one $k \in \{1, \dots , K\}$
 with $ \| W-W^{(k)} \|_{\lozenge} < \tau$.
\begin{lemma}[$\tau-$net \cite{Mi/Sch}]\label{e2} Let $H'$ and $H''$ be  finite-dimensional
complex Hilbert spaces.
For any $\tau \in (0,1]$, there is a $\tau$-net of quantum channels
$\left(W^{(k)}\right)_{k=1}^{K}$ in  the space of the completely
positive trace-preserving maps $\mathcal{S}(H') \rightarrow
\mathcal{S}(H'')$  with $K \leq (\frac{3}{\tau})^{2{d'}^4}$, where
$d' = \dim H'$.
\end{lemma}\vspace{0.15cm}


We now consider a $\overline{\theta}$ such that $|\overline{\theta}|$ is not finite.
For $n\in\mathbb{N}$  we define $\tau_n :=n^2$. $\{\tau_n: n\in\mathbb{N}\}$ 
 is a series of positive constants such that
$ (\frac{3}{\tau_n})^{2{d'}^4} < 2^{\frac{1}{2}n^{1/16}\beta}$ and $\lim_{n\rightarrow\infty} n\tau_n =0$. 
By Lemma \ref{e2},  there exists a finite set
$\overline{\theta_{\tau_n}}'$ with $|\overline{\theta_{\tau_n}}'|\leq (\frac{3}{\tau_n})^{2{d'}^4}$ and
$\tau_n$-nets $\left(\overline{W}_{s'}\right)_{s' \in \overline{\theta_{\tau_n}}'}$,
$\left(V_{s'}\right)_{s' \in \overline{\theta_{\tau_n}}'}$ such that for every $
t\in\overline{\theta} $ we can find a $s' \in \overline{\theta_{\tau_n}}'$  with $\left\| \overline{W}_s -
\overline{W}_{s'}\right\|_{\lozenge}\leq \tau_n$.

We assume that the sender's encoding is restricted to transmitting an  indexed finite set of  quantum  states
$\{\rho_{x}: x\in \mathsf{A}\}\subset \mathcal{S}({H'}^{\otimes n})$.

By Theorem \ref{loatbfis}, the
legitimate transmitters are able to build a code $C_2
=\{E,\{D_{j}:j\}\}$
 such that for all $s''\in \overline{\theta_{\tau_n}}'$,
$t\in\theta$, and $\pi\in\Pi_n$, it holds that
\begin{equation}
 \frac{1}{J_n} \sum_{j=1}^{J_n} \sum_{x^n\in \mathsf{A}^n} E(x^n\mid j) \mathrm{tr}\left(
\overline{W}_{s''}^{\otimes n}\left( \rho_{x^n} \right)D_{j}^n\right)\geq 1 -  (\frac{3}{\tau_n})^{2{d'}^4} 2^{-n^{1/16}\beta}
 \geq 1 - 2^{-\frac{1}{2}n^{1/16}\beta} \text{ ,}\label{con1}\end{equation}
\begin{equation} \chi(R_{uni};Z_{t,\pi}^{ n})\leq
2^{-n\upsilon} \label{con2}\text { .}\end{equation}

Let $|\psi_{x^n}\rangle
 \langle \psi_{x^n} | \in \mathcal{S}({H'}^{\otimes n}\otimes{H'}^{\otimes n})$
 be an arbitrary purification of the  quantum state
$\rho_{x^n}$, then $\mathrm{tr}\left[ \left(\overline{W}_s^{\otimes n} -
\overline{W}_{s'}^{\otimes n}\right)(\rho_{x^n})\right] =\mathrm{tr}
\left(\mathrm{tr}_{{H'}^{\otimes n}} \left[\mathrm{I}_{H'}^{\otimes n}
\otimes (\overline{W}_s^{\otimes n}-\overline{W}_{s'}^{\otimes n}) \left(
|\psi_{x^n}\rangle \langle \psi_{x^n} |\right)\right]\right)$. We
have
\begin{align*}& \mathrm{tr}\left|\sum_{x^n\in \mathsf{A}^n} E(x^n\mid j) \left(\overline{W}_s^{\otimes n} - \overline{W}_{s'}^{\otimes n}\right)(\rho_{x^n})\right|\\
&= \mathrm{tr} \left(\sum_{x^n\in \mathsf{A}^n} E(x^n\mid j)\mathrm{tr}_{{H'}^{\otimes n}} \left|\mathrm{I}_{H'}^{\otimes
N} \otimes (\overline{W}_s^{\otimes n}-\overline{W}_{s'}^{\otimes n})
\left( |\psi_{x^n}\rangle \langle \psi_{x^n} |\right)\right|\right)\\
& =  \mathrm{tr} \left|\sum_{x^n\in \mathsf{A}^n} E(x^n\mid j)\mathrm{I}_{H'}^{\otimes n} \otimes (\overline{W}_s^{\otimes
n}-\overline{W}_{s'}^{\otimes n})
\left( |\psi_{x^n}\rangle \langle \psi_{x^n} |\right)\right|\\
&=\sum_{x^n\in \mathsf{A}^n} E(x^n\mid j)  \left\|\mathrm{I}_{H'}^{\otimes n} \otimes (\overline{W}_s^{\otimes n}-\overline{W}_{s'}^{\otimes
N})\left( |\psi_{x^n}\rangle \langle \psi_{x^n} |\right)\right\|_1\\
& \leq\sum_{x^n\in \mathsf{A}^n} E(x^n\mid j)   \| \overline{W}_s^{\otimes n}-\overline{W}_{s'}^{\otimes n}\|_{\lozenge}
\cdot\left\|\left( |\psi_{x^n}\rangle \langle \psi_{x^n} |\right)\right\|_1\\
&\leq N\tau_n\text{ .}
\end{align*}
 The
second equality follows from the definition of trace. The
third inequality follows by the definition of
${\|\cdot\|_{\lozenge}}$. The second inequality follows from the facts
that $\|\left( |\psi_{x^n}\rangle \langle \psi_{x^n}
|\right)\|_1=1$ and $\left\| \overline{W}_s ^{\otimes n}-\overline{W}_{s'}^{\otimes
N}\right\|_{\lozenge}= \left\| \left(\overline{W}_s-\overline{W}_{s'}\right)^{\otimes
N}\right\|_{\lozenge} = N\cdot\left\| \overline{W}_s -
\overline{W}_{s'}\right\|_{\lozenge}$, since $\|\cdot\|_{\lozenge}$ is
multiplicative.\vspace{0.15cm}

It follows that \begin{align}  &\biggl|\frac{1}{J_n} \sum_{j=1}^{J_n}\sum_{x^n\in \mathsf{A}^n} E(x^n\mid j)
\mathrm{tr}\left( \overline{W}_s^{\otimes n}\left( \rho_{x^n}
\right)D_{j}^n\right)\allowdisplaybreaks\notag\\
&-\frac{1}{J_n} \sum_{j=1}^{J_n}\sum_{x^n\in \mathsf{A}^n} E(x^n\mid j)
\mathrm{tr}\left(
\overline{W}_{s'}^{\otimes n}\left( \rho_{x^n} \right)D_{j}^n\right) \biggr| \allowdisplaybreaks\notag\\
&\leq\frac{1}{J_n} \sum_{j=1}^{J_n} \sum_{x^n\in \mathsf{A}^n} E(x^n\mid j)\left|\mathrm{tr}\left[ \left(
\overline{W}_s^{\otimes n}
-\overline{W}_{s'}^{\otimes n}\right)\left( \rho_{x^n} \right)D_{j}^n \right]\right| \allowdisplaybreaks\notag\\
&\leq\frac{1}{J_n} \sum_{j=1}^{J_n}\sum_{x^n\in \mathsf{A}^n} E(x^n\mid j) \mathrm{tr}\left[ \left(
\overline{W}_s^{\otimes n}
-\overline{W}_{s'}^{\otimes n}\right)\left( \rho_{x^n} \right)D_{j}^n\right]  \allowdisplaybreaks\notag\\
&\leq \frac{1}{J_n} \sum_{j=1}^{J_n} \sum_{x^n\in \mathsf{A}^n} E(x^n\mid j)\mathrm{tr}\left[ \left(
\overline{W}_s^{\otimes n}
-\overline{W}_{s'}^{\otimes n}\right)\left( \rho_{x^n} \right)\right]\allowdisplaybreaks\notag\\
&\leq \frac{1}{J_n}J_n  n\tau_n \allowdisplaybreaks\notag\\
&=  n\tau_n \text{ .} \label{iea}\end{align} 

By (\ref{iea}),  we have
\[\sup_{s\in\overline{\theta}}
 \frac{1}{J_n} \sum_{j=1}^{J_n} \sum_{x^n\in \mathsf{A}^n} E(x^n\mid j)\mathrm{tr}\left(
\overline{W}_s^{\otimes n}\left( \rho_{x^n} \right)D_{j}^n\right)\geq 1
-\lambda_{\tau_n}-n\tau_n\text{ .}\]
Thus,
	\begin{equation}
	\hat{C}_s(\{(\overline{W}_s,{V}_t) :s \in \overline{\theta}, t \in \theta\})
	\geq\lim_{n\rightarrow \infty} \frac{1}{n}(\inf_{s \in \overline{\theta}}\chi(p;B_s^{\otimes n})-\max_{t^n\in \theta^n}\chi(p;Z_{t^n}))
	\text{ .}\label{hatceque2}\end{equation}

The achievability of
$\lim_{n\rightarrow \infty} \frac{1}{n}\Bigl(\min_{s \in \overline{\theta}}\chi(p_U;B_s)$ $-$ $ \max_{t^n\in \theta^n}\chi(p_U;Z_{t^n})\Bigr)$
is then shown via standard arguments.

The proof of the converse is similar to those given in the proof of Theorem \ref{loatbfis}. \qed
\end{proof}

\begin{corollary}
Let  $\overline{\theta}$  and
  $\theta$  be finite index sets.
	Let $\{(\overline{W}_s,{V}_t) :s \in \overline{\theta}, t \in \theta\}$ 
	be a compound-arbitrarily varying wiretap classical-quantum channel.
The secrecy capacity of $\{(\overline{W}_s,{V}_t) :s \in \overline{\theta}, t \in \theta\}$ is
equal to 
	\[ \lim_{n\rightarrow \infty} \frac{1}{n}\max_{\Lambda_n}
	 \Bigl(\min_{s \in \overline{\theta}}\chi(p_U;B_s^{\otimes n})-\max_{t^n\in \theta^n}\chi(p_U;Z_{t^n})\Bigr)
	\text{ .}\]
\end{corollary}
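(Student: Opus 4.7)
The plan is to deduce this corollary directly from Theorem~\ref{loatbfis} by comparing the two security criteria. The enhanced secrecy criterion requires $\max_{t^n \in \theta^n}\max_{\pi \in \Pi_n} \chi(R_{uni}; Z_{t^n,\pi}) < \zeta$, whereas the plain secrecy criterion requires only $\max_{t^n \in \theta^n} \chi(R_{uni}; Z_{t^n}) < \zeta$. Since $Z_{t^n,\pi}$ reduces to $Z_{t^n}$ when $\pi$ is the identity permutation, every code satisfying the enhanced criterion automatically satisfies the plain criterion at the same rate. Consequently the enhanced secrecy capacity is a lower bound for the plain secrecy capacity of the compound-arbitrarily varying wiretap classical-quantum channel.

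For the direct part, I would simply invoke Theorem~\ref{loatbfis}: for every $\delta>0$ and every sufficiently large $n$, there exists an $(n,J_n)$ code whose enhanced secrecy rate is within $\delta$ of $\frac{1}{n}\max_{\Lambda_n}\bigl(\min_{s \in \overline{\theta}}\chi(p_U; B_s^{\otimes n}) - \max_{t^n \in \theta^n}\chi(p_U; Z_{t^n})\bigr)$. By the observation above, the same code is a valid plain-secrecy code at the same rate, so the displayed formula is a lower bound on the secrecy capacity.

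For the converse, I would re-run the converse argument already presented in the proof of Theorem~\ref{loatbfis} and observe that it is insensitive to the distinction between the two criteria. Concretely, the chain of inequalities there only ever uses the quantity $\chi(R_{uni}; Z_{t^n})$, which is exactly what the plain secrecy criterion controls (it is the $\pi = \mathrm{id}$ case of $\chi(R_{uni}; Z_{t^n,\pi})$). Thus the same chain of inequalities yields the matching upper bound under the weaker hypothesis, completing the proof. No substantive obstacle is expected here: the content of the corollary is essentially the remark that the construction of Theorem~\ref{loatbfis} already meets the formally weaker criterion and that the converse argument never actually exploited permutation-invariance of the eavesdropper's output state. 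The only mild check is that every step of the converse chain of inequalities goes through with the weaker security hypothesis, which is immediate by specialization to $\pi = \mathrm{id}$.
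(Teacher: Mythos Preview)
Your proposal is correct and follows essentially the same route as the paper. The paper's own proof is a single sentence observing that the enhanced secrecy capacity is less than or equal to the plain secrecy capacity, which immediately yields the achievability direction; your treatment is more explicit in also spelling out that the converse chain of inequalities in Theorem~\ref{loatbfis} only ever invokes $\chi(R_{uni};Z_{t^n})$ (the $\pi=\mathrm{id}$ case), so the same upper bound applies under the weaker hypothesis.
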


\begin{proof} The corollary follows  immediately from the fact that
the enhanced secrecy capacity of a compound-arbitrarily varying wiretap classical-quantum channel
is less or equal to its secrecy capacity. \qed
\end{proof}

\section{Secrecy
Capacity of Arbitrarily
Varying Classical-Quantum Wiretap Channel}\label{SCoavcqwc}

In this section, we use the
  results of Section \ref{CAVWCQC} to prove our main result: the formula for
the secrecy capacities under common
randomness assisted coding of arbitrarily varying classical-quantum
wiretap  channels. 

\begin{theorem}
Let   $\theta$ $:=$ $\{1,\cdots,T\}$ be a finite index set.
	Let $(W_t,{V}_t)_{ t \in \theta}$ 
	be an arbitrarily varying classical-quantum wiretap channel.
	We have
\begin{align}&
	C_s(\{(W_t,{V}_t): t \in \theta\};cr)\notag\\
	&= \lim_{n\rightarrow \infty} \frac{1}{n}\max_{\Lambda_n}\Bigl(\inf_{B_q \in Conv((B_t)_{t\in \theta})}
	\chi(p_U;B_q^{\otimes n})-\max_{t^n\in \theta^n}\chi(p_U;Z_{t^n})\Bigr)
	\text{ .}\end{align}
	Here $Conv((B_t)_{t\in \theta})$ is  the  convex hull of $\{B_t :t\in \theta\}$.\label{commperm}
\end{theorem}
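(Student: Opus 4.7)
My plan is to deduce the theorem directly from Corollary \ref{infcompunt} combined with Ahlswede's robustification technique. For each $q\in\mathsf{P}(\theta)$ define the averaged channel $W_q:=\sum_{t\in\theta}q(t)W_t$, and consider the compound-arbitrarily varying wiretap classical-quantum channel
\[
\{(W_q,V_t):q\in\mathsf{P}(\theta),\,t\in\theta\}.
\]
Since $\mathsf{P}(\theta)$ is an infinite (but compact) index set on the legitimate-receiver side, Corollary \ref{infcompunt} (whose proof uses the $\tau$-net approximation of Lemma \ref{e2}) applies and gives a sequence of deterministic $(n,J_n)$ codes with rate arbitrarily close to
\[
\lim_{n\to\infty}\frac{1}{n}\max_{\Lambda_n}\!\Bigl(\inf_{q\in\mathsf{P}(\theta)}\chi(p_U;B_q^{\otimes n})-\max_{t^n\in\theta^n}\chi(p_U;Z_{t^n})\Bigr),
\]
which is exactly the right-hand side of the claim since $\{B_q:q\in\mathsf{P}(\theta)\}=\mathrm{Conv}((B_t)_{t\in\theta})$.

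Next I would apply Ahlswede's robustification to pass from the compound-AVWC code $\mathcal{C}$ to a common randomness assisted code for $\{(W_t,V_t):t\in\theta\}$. Concretely, I would randomize $\mathcal{C}$ by a uniformly chosen permutation $\pi\in\Pi_n$, which the sender and receiver share as common randomness, yielding a code family $\{\mathcal{C}^\pi:\pi\in\Pi_n\}$. Because $\mathcal{C}$ has small error $\epsilon$ for every i.i.d.\ jammer $q^{\otimes n}$, the classical robustification estimate
\[
\max_{t^n\in\theta^n}\frac{1}{n!}\sum_{\pi\in\Pi_n}P_e(\mathcal{C}^\pi,t^n)\;\leq\;(n+1)^{|\theta|}\max_{q\in\mathsf{P}(\theta)}\sum_{t^n}q^{\otimes n}(t^n)P_e(\mathcal{C},t^n)
\]
ensures the average error condition of Definition of the CR-assisted code is satisfied (since $(n+1)^{|\theta|}\epsilon\to 0$ if $\epsilon$ decays faster than polynomially, which the compound result indeed provides). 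For secrecy, I would use the key observation that applying $\pi$ to the codewords and subsequently using the channel $V_{t^n}$ produces, up to a unitary reordering of the output factors, the same joint state as the original code through $V_{\pi^{-1}(t^n)}$; thus $\chi(R_{uni};Z_{\mathcal{C}^\pi,t^n})=\chi(R_{uni};Z_{\mathcal{C},\pi^{-1}(t^n)})$, and the compound-AVWC secrecy guarantee $\max_{s^n}\chi(R_{uni};Z_{\mathcal{C},s^n})\leq\zeta$ immediately implies $\max_{t^n}\frac{1}{n!}\sum_\pi\chi(R_{uni};Z_{\mathcal{C}^\pi,t^n})\leq\zeta$.

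For the converse, given any CR-assisted code $\{\mathcal{C}^\gamma\}_{\gamma\in\Gamma}$ of rate $R$, I would average its reliability condition over $t^n\sim q^{\otimes n}$ for an arbitrary $q\in\mathsf{P}(\theta)$: this shows that $\{\mathcal{C}^\gamma\}$ is a common randomness assisted code for the compound channel $\{W_q^{\otimes n}:q\in\mathsf{P}(\theta)\}$ from sender to Bob. Since common randomness does not increase the compound-channel capacity, $\frac{1}{n}\log J_n$ cannot exceed $\frac{1}{n}\inf_{q}\chi(p_U;B_q^{\otimes n})$ up to vanishing terms; combined with the secrecy constraint $\max_{t^n}\frac{1}{|\Gamma|}\sum_\gamma\chi(R_{uni};Z_{\mathcal{C}^\gamma,t^n})\leq\zeta$, the Devetak-style chain of inequalities used in the converse part of Theorem \ref{loatbfis} (applied after fixing a $\gamma$ or after deterministically choosing the best one) yields the matching upper bound. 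The main technical obstacle is the robustification step: making the error estimate quantitative enough that, after paying the factor $(n+1)^{|\theta|}$, the resulting error still tends to zero, and simultaneously verifying that the permutation-covariance of the Holevo quantity preserves the strong (non-averaged-over-$n$) secrecy criterion of Remark \ref{remsc}; both of these, however, follow cleanly from the estimates already established inside the proof of Theorem \ref{loatbfis}.
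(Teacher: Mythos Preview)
Your achievability argument is essentially the paper's: define the compound-arbitrarily varying wiretap channel with legitimate-receiver index set $\mathsf{P}(\theta)$, invoke Corollary~\ref{infcompunt}, and then robustify via uniform permutations to obtain a common randomness assisted code. Your covariance remark $\chi(R_{uni};Z_{\mathcal{C}^\pi,t^n})=\chi(R_{uni};Z_{\mathcal{C},\pi^{-1}(t^n)})$ is correct and is precisely why the paper's \emph{enhanced} secrecy criterion (the extra $\max_{\pi}$ in Corollary~\ref{infcompunt}) is automatically satisfied; the paper sets this up explicitly, you observe it as a consequence of unitary invariance---either route works.

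Your converse, however, has a real gap at the phrase ``after fixing a $\gamma$ or after deterministically choosing the best one.'' From the reliability and secrecy conditions you correctly obtain, for every $q\in\mathsf{P}(\theta)$ and $t^n\in\theta^n$,
\[
\frac{1}{n}\log J_n \;\leq\; \frac{1}{|\Gamma|}\sum_{\gamma\in\Gamma}\frac{1}{n}\Bigl(\chi(R_{uni};B_q^{\gamma\otimes n})-\chi(R_{uni};Z_{t^n}^\gamma)\Bigr)+\epsilon_n.
\]
Bounding the average by the best $\gamma$ gives a $\gamma^*$ that \emph{depends on} $(q,t^n)$, so you cannot conclude $\frac{1}{n}\log J_n\leq \max_{\Lambda_n}\bigl[\inf_q\chi(p_U;B_q^{\otimes n})-\max_{t^n}\chi(p_U;Z_{t^n})\bigr]$: that requires one ensemble to beat all $(q,t^n)$ simultaneously, and you only have $\inf_{q,t^n}\max_{\gamma}[\cdots]\geq\max_{\gamma}\inf_{q,t^n}[\cdots]$ in the wrong direction. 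Absorbing $\gamma$ into the auxiliary $U$ directly also fails, since the correction term $I(\Gamma;B_q^{\otimes n})-I(\Gamma;Z_{t^n})$ can be as negative as $-\log|\Gamma|$, which need not be $o(n)$.

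The paper closes exactly this gap with Lemma~\ref{lcfeqicsit}: a Chernoff-plus-net argument shows one may replace $\Gamma$ by a subset $\Gamma'$ of size $n^2$ while preserving the averaged bound uniformly in $(q,t^n)$. After that reduction, folding $\gamma$ into $U$ costs at most $H(G_{uni})=2\log n$ on the Bob side (inequality~(\ref{slgutbqnrn})), and on Eve's side strong subadditivity gives the right sign (inequality~(\ref{slgutbqnrn2})); dividing by $n$ makes both corrections vanish. This CR-cardinality reduction is the missing idea in your converse---without it, no single deterministic $\gamma$ can be extracted, and the Devetak-style chain from Theorem~\ref{loatbfis} does not apply.
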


\begin{proof}

\it i) Achievement \rm \vspace{0.2cm}

Our idea is similar to the results for  classical arbitrarily varying wiretap  channel
in \cite{Wi/No/Bo}: Applying Ahlswede's robustification technique (cf. \cite{Bj/Bo/Ja/No}),
we use the results of Section \ref{CAVWCQC} to show the existence of a 
common randomness assisted quantum code. 
Additionally, we have to consider the security.\vspace{0.2cm}

We denote the set of distribution function  on $\theta$ by $ \mathsf{P}(\theta)$.
For every $q\in \mathsf{P}(\theta)$, we define a classical-quantum channel
 $\overline{W}_q:= \sum_{s\in \theta} q(s)W_s$.
We now define a
compound-arbitrarily varying wiretap classical-quantum channel
by
\[\{(\overline{W}_q,{V}_t); q\in \mathsf{P}(\theta), t \in \theta\}\text{ .}\]

We fix a probability distribution
$p\in \mathsf{A}$.
 We choose arbitrarily  $\epsilon>0$, $\delta>0$, and
$\zeta>0$.
Let 
\[J_n =  \lfloor  2^{n\inf_{B_q \in Conv((B_s)_{s\in \theta})}\chi(p;B_q)-\max_{t^n\in \theta^n}\chi(p;Z_{t^n})-n\delta} \rfloor  \text{ .}\]
By Corollary \ref{infcompunt},  if   $n$ is sufficiently large, 
there exists an  $(n, J_n)$
code $C = \bigl(E^n, \{D_j^n : j = 1,\cdots J_n\}\bigr)$  such that 
\[ \max_{q\in \mathsf{P}(\theta)}  1- \frac{1}{J_n} \sum_{j=1}^{J_n}  
\mathrm{tr}(\overline{W}_{q}(E^n(~|j))D_j^n)
< \epsilon\text{ ,}\]
\[\max_{t^n\in\theta^n}\max_{\pi\in\Pi_n}
\chi\left(R_{uni};Z_{t^n,\pi}\right) < \zeta\text{ .}\]

Similar to the proofs in \cite{Bj/Bo/Ja/No},
 we now apply Ahlswede's robustification technique.

\begin{lemma}[cf. \cite{Ahl2}, \cite{Ahl3}, and \cite{Ahl/Bj/Bo/No}]
Let $S$ be a finite set and $n\in \mathbb{N}$. If a function $f$ $:$ 
$S^n \rightarrow [0, 1]$ satisfies
\[\sum_{s^n\in S^n} f(s^n)q(s_1)q(s_2)\cdots q(s_n)\geq 1-\epsilon \text{ ,}\]
for all $q\in \mathsf{P}(\theta)$ and a positive $\epsilon\in [0,1]$,
then
\begin{equation}\frac{1}{n!}\sum_{\pi\in\Pi_n} f(\pi(s^n))\geq 1-3(n+1)^{\vert S\vert}\epsilon
\text{ .}\end{equation}\label{roboustpermu}
\end{lemma}

We define a function $f$ $:$ $\theta^n\rightarrow [0,1]$ by
\[f(t^n) := \frac{1}{J_n} \sum_{j=1}^{J_n} 
\mathrm{tr}(W_{t^n}(E^n(~|j))D_j^n)\text{ .}\]
For every $q\in \mathsf{P}(\theta)$ we have
\begin{align*}&\sum_{t^n\in \theta^n} f(t^n)q(t_1)\cdots q(t_n)\\
&=\sum_{t^n\in \theta^n}\frac{1}{J_n} \sum_{j=1}^{J_n}  
\mathrm{tr}(W_{t^n}(E^n(~|j))D_j^n)q(t_1)\cdots q(t_n)\\
&=\frac{1}{J_n} \sum_{j=1}^{J_n}  
\mathrm{tr}\left(\sum_{t^n\in \theta^n}q(t_1)\cdots q(t_n)W_{t^n}(E^n(~|j))D_j^n\right)\\
&=\frac{1}{J_n} \sum_{j=1}^{J_n}  
\mathrm{tr}(\overline{W}_{q}(E^n(~|j))D_j^n)\\
&> 1- 2^{-n\beta/2}
\text{ .}
\end{align*}

Applying Lemma \ref{roboustpermu}, we have
\begin{align}&1-3(n+1)^{\vert \theta\vert}2^{-n\beta/2}\notag\\
&\leq \frac{1}{n!}\sum_{\pi\in\Pi_n} f(\pi(t^n))\allowdisplaybreaks\notag\\
&=\frac{1}{n!}\sum_{\pi\in\Pi_n}\frac{1}{J_n} \sum_{j=1}^{J_n}  
\mathrm{tr}(W_{\pi(t^n)}(E^n(~|j))D_j^n)\allowdisplaybreaks\notag\\
&=\frac{1}{n!}\sum_{\pi\in\Pi_n}\frac{1}{J_n} \sum_{j=1}^{J_n}  \sum_{a^n \in \mathsf{A}^n}
E^n(a^n|j)\mathrm{tr}(W_{\pi(t^n)}(a^n)D_j^n)\allowdisplaybreaks\notag\\
&=\frac{1}{n!}\sum_{\pi\in\Pi_n}\frac{1}{J_n} \sum_{j=1}^{J_n}  \sum_{a^n \in \mathsf{A}^n}
E^n(a^n|j)\mathrm{tr}(W_{t^n}(\pi^{-1}(a^n))P_{\pi}^{\dagger}D_j^n P_{\pi})
\text{ ,}\label{inqupermu}
\end{align}
where for $\pi\in\Pi_n$, $P_{\pi}$
is its permutation matrix on ${H}^{\otimes n}$.

We now define our
common randomness assisted quantum code
by
\[\left\{\left(\pi \circ E^n, \{P_{\pi} D_j^n P_{\pi}^{\dagger},j\in\{1,\cdots,J_n\}\}\right): \pi\in\Pi_n\right\}\text{ .}\]


$P_{\pi} D_j^n P_{\pi}^{\dagger}$ is Hermitian and
positive semidefinite. Furthermore, it holds that
$\sum_{j=1}^{J_n} P_{\pi}D_j^n P_{\pi}^{\dagger} $ $=$ $\sum_{j=1}^{J_n} P_{\pi} id_{{H}^{\otimes n}} P_{\pi}^{\dagger}$
$=$ $id_{{H}^{\otimes n}}$.\vspace{0.2cm}

By (\ref{inqupermu}), and by the fact that

\begin{align*}&\frac{1}{n!}\sum_{\pi\in\Pi_n}\max_{t^n\in\theta^n}
\chi\left(R_{uni};Z_{t^n,\pi}\right)\\
&\leq \max_{t^n\in\theta^n}\max_{\pi\in\Pi_n}
\chi\left(R_{uni};Z_{t^n,\pi}\right)\\
&< \zeta
\text{ ,}\end{align*}
for any positive $\varepsilon$ when $n$ is sufficiently large, it holds that:
\begin{equation}
	C_s(\{(W_t,{V}_t): t \in \theta\};cr)
	\geq \inf_{B_q \in Conv((B_s)_{s\in \theta})}\chi(p;B_{q})-\lim_{n\rightarrow \infty} \frac{1}{n}\max_{t^n\in \theta^n}\chi(p;Z_{t^n})-\varepsilon
	\text{ .}\label{hatceque3}\end{equation}

The achievability of
$\lim_{n\rightarrow \infty} \frac{1}{n}\Bigl(\min_{B_q \in Conv((B_s)_{s\in \theta})}\chi(p_U;B_q^{\otimes n})$ $-$ $ \max_{t^n\in \theta^n}\chi(p_U;Z_{t^n})\Bigr)$
is then shown via standard arguments  (cf. \cite{De}).\vspace{0.2cm}

\it ii) Converse \rm \vspace{0.2cm}

Now we are going to prove the converse.
Similar to the results for  classical arbitrarily varying wiretap  channel
in \cite{Wi/No/Bo}, we limit the amount of common randomness.
\vspace{0.2cm}

Let   $(\{\mathcal{C}^{\gamma}_n:\gamma\in \Gamma\})$ be a sequence of $(n, J_n)$  common randomness assisted
 codes such
that
\begin{equation}\max_{s \in \theta}\frac{1}{\left|\Gamma\right|} \sum_{\gamma=1}^{\left|\Gamma\right|} P_e(\mathcal{C}_n^{\gamma},t^n)
\leq\lambda_n\text{ ,}\label{tnaqntaclp}\end{equation}
\begin{equation}\max_{t^n\in\theta^n}\frac{1}{\left|\Gamma\right|} \sum_{\gamma=1}^{\left|\Gamma\right|}
\chi\left(R_{uni};Z_{\mathcal{C}^{\gamma},t^n}\right) \leq \epsilon_{n}\text{ ,}\label{tnaqntaclp2}\end{equation} where $ \lim_{n\to\infty}\lambda_n=0$ and
$\lim_{n\to\infty}\epsilon_{n}=0$. 

We consider a $\left|\Gamma\right|$-long  sequence of outputs $(1,\cdots,\left|\Gamma\right|)$ has been given
by the common randomness and a $n\left|\Gamma\right|$-long block has been sent. 
The  legitimate receiver obtains the quantum states $\{B_{q}^{\gamma}:\gamma\in \Gamma\}$.
By (\ref{tnaqntaclp}), 
he is able to decode $2^{n\left|\Gamma\right|\log J_n}$ messages. By \cite{Bj/Bo/Ja/No},
for every $B_q \in Conv((B_s)_{s\in \theta})$ we have
\[\log J_n\leq\frac{1}{\left|\Gamma\right|}\frac{1}{n} \sum_{\gamma=1}^{\left|\Gamma\right|}\chi(R_{uni};B_{q}^{\gamma\otimes n})\text{  ,}\]

%

and by (\ref{tnaqntaclp2}),
for  and every $t^n\in \theta^n$, we have
\[\frac{1}{n}\log J_n\leq\frac{1}{\left|\Gamma\right|} \frac{1}{n}\sum_{\gamma=1}^{\left|\Gamma\right|}
 (\chi(R_{uni};B_{q}^{\gamma\otimes n})-\chi(R_{uni};Z_{t^n}^{\gamma})) +\epsilon_{n}\text{  .}\]

\begin{lemma} Let $c > 0$.
 For every $q \in \mathsf{P}(\theta)$
 and $s^n\in \theta^n$, let a function $I_{q,s^n} : \Gamma \rightarrow
[0,c]$ be given. We assume that
these functions satisfy the following:
for every $\gamma\in\Gamma$
 and  $s^n\in \theta^n$
\[|I_{q,s^n} (\gamma) - I_{q',s^n}(\gamma)| \leq f(\delta)\text{  ,}\]
if $q,q' \in \mathsf{P}(\theta)$ satisfy $\|q - q'\|_1 \leq \delta$ for some $f(\delta)$
 which tends to $0$ as $\delta$ tends to $0$. We write $\mu(I_{q,s^n} ):=\sum_{\gamma\in\Gamma}\mu(\gamma)I_{q,s^n} (\gamma)$,
where $\mu(\gamma)$ is the probability of $\gamma$.
 Then, for every $\varepsilon > 0$ and sufficiently large $n$, there are $L = n^2$
realizations $\gamma_1,\cdots, \gamma_L$ such that
\[\frac{1}{L}\sum_{l=1}^{L}I_{q,s^n} (\gamma_l)\geq(1-\varepsilon)\mu(I_{q,s^n} )-\varepsilon\]
for every $q \in \mathsf{P}(\theta)$
 and $s^n\in \theta^n$.
\label{lcfeqicsit}\end{lemma}

\begin{proof}
Let $0 < \delta < \frac{1}{2}$ and
$K$ be a positive integer. 
We denote the set
of possible types of sequences of length $K$  by
$P_0^{K}(\theta)$.
As in the approximation argument in \cite{Bl/Br/Th2}, 
one can show that every $q \in \mathsf{P}(\theta)$ is at most a
distance $\delta$ away from some $q' \in P_0^{K}(\theta)$
if $K \geq 2\frac{|\theta| - 1}{\delta}$.

Let $K:= \lceil 2\frac{|\theta|-1}{\delta}\rceil$. Then, 
$| P_0^{K}(\theta)| \leq  \left(2\frac{|\theta|}{\delta}\right)^{|\theta|}$. This approximating set is used to handle
the infinite set $\mathsf{P}(\theta)$.

Now let $G_1, \cdots , G_L$ be i.i.d. random variables with values in $\Gamma$ and distributed according to $\mu$. 
Set $\mu_{*}:= \min_{q\in \mathsf{P}(\theta)} \min_{ s^n\in \theta^n} \mu(I_{q,s^n})$.
 Using the union bound and the Chernoff bound (cf. \cite{Du/Pa}), we obtain
\begin{align*}&Pr\left\{ \frac{1}{L}\sum_{l=1}^{L}I_{q,s^n} (G_l) < \mu(I_{q,s^n}) ~
\forall q \in P_0^{K}(\theta) ~ \forall  s^n\in \theta^n\right\}\\
&\leq \exp\left(|\theta| \log \left(\frac{2|\theta|}{\delta}\right)+
n\log|\theta|-\frac{L\epsilon^2\mu_{*}}{3c}\right)\text{ .}\end{align*}
This, probability is smaller than $1$ if $L$ 
tends to infinity faster than $n$, e.g., if $L=n^2$.

Thus we have proved the existence of $\gamma_1, \cdots , \gamma_L$ which satisfies
\[ \frac{1}{L}\sum_{l=1}^{L}I_{q,s^n} (\gamma_l) \geq (1-\epsilon) \mu(I_{q,s^n})\]
for every $q\in  P_0^{K}(\theta)$ and  $s^n\in \theta^n$. 
Now let $q\in \mathsf{P}(\theta)$ be arbitrary and let
$ q' \in P_0^{K}(\theta)$
satisfy $\|q - q'\|_1 \leq \delta$. Then

\begin{align*}&\frac{1}{L}\sum_{l=1}^{L}I_{q,s^n} (\gamma_l) \\
&\geq \frac{1}{L}\sum_{l=1}^{L}I_{{q'},s^n} (\gamma_l) 
- f(\delta)\\
&\geq (1-\epsilon) \mu(I_{{q'},s^n})- f(\delta)\\
&\geq (1-\epsilon) \mu(I_{q,s^n}) -(2-\epsilon) f(\delta)
\text{  .}\end{align*}
Choosing $\delta$ sufficiently small proves the claim of the lemma. \qed
\end{proof}

For $q \in Conv(\{s: s\in \theta\})$,
we define
\[I_{q,s^n}(\gamma) := \frac{1}{n}\left(\chi(R_{uni};B_{q}^{\gamma\otimes n})-\chi(R_{uni};Z_{t^n}^{\gamma})\right)\text{  .}\]

In \cite{Bo/No2}, the continuity of $q\rightarrow \frac{1}{n}\chi(R_{uni};B_{q}^{\gamma\otimes n})$
 has been shown; thus, there is a 
$f(\delta)$ such that $|I_{q,s^n} (\gamma) - I_{q',s^n}(\gamma)|$
$\frac{1}{n}\frac{1}{\left|\Gamma\right|} \sum_{\gamma=1}^{\left|\Gamma\right|}(\chi(R_{uni};B_{q}^{\gamma\otimes n})$ $-$
$\frac{1}{n}\frac{1}{\left|\Gamma\right|} \sum_{\gamma=1}^{\left|\Gamma\right|}(\chi(R_{uni};B_{q'}^{\gamma\otimes n})$
$\leq f(\delta)$ for a  $f(\delta)$ that fulfills $f(\delta) \rightarrow 0$ when $\|q - q'\|_1 =\delta\rightarrow 0$. By Lemma \ref{lcfeqicsit},
there is a set $\Gamma'\subset\Gamma$ such that $\left|\Gamma'\right| = n^2$ and
\begin{align*}&
\frac{1}{\left|\Gamma'\right|}\frac{1}{n}\sum_{\gamma'\in \Gamma'}
\left(\chi(R_{uni};B_{q}^{\gamma'\otimes n})-\chi(R_{uni};Z_{t^n}^{\gamma'})\right)\\
&\geq (1-\varepsilon)\frac{1}{n}
\frac{1}{\left|\Gamma\right|}\sum_{\gamma\in \Gamma}
\left(\chi(R_{uni};B_{q}^{\gamma\otimes n})-\chi(R_{uni};Z_{t^n}^{\gamma})\right)  \text{  ,}\end{align*}
where $B_{q}^{\gamma'}$ and $Z_{t^n}^{\gamma}$ are the quantum states at
 the output of legitimate receiver channel and the wiretapper's channel, respectively,
when the output of the common randomness
is $\gamma'$.

Thus,
\begin{equation} \frac{1}{n}\log J_n\leq \frac{1}{1-\varepsilon}
\frac{1}{n}\frac{1}{\left|\Gamma'\right|}\sum_{\gamma\in \Gamma'}
\left(\chi(R_{uni};B_{q}^{\gamma\otimes n})-\chi(R_{uni};Z_{t^n}^{\gamma})+\epsilon_{n}\right)\text{  .}\end{equation}\vspace{0.2cm}

To prove the converse, we now consider 
\begin{align*}\allowdisplaybreaks[2]& \frac{1}{\left|\Gamma'\right|}\sum_{\gamma\in \Gamma'}
\frac{1}{n}\left(\chi(R_{uni};B_{q}^{\gamma\otimes n})-\chi(R_{uni};Z_{t^n}^{\gamma})\right)
-\frac{1}{n}\left(\chi(R_{uni};B_{q}^{\otimes n})-\chi(R_{uni};Z_{t^n})\right)\\
&=\frac{1}{\left|\Gamma'\right|}\sum_{\gamma\in \Gamma'}\frac{1}{n}
\left(\chi(R_{uni};B_{q}^{\gamma\otimes n})-\chi(R_{uni};Z_{t^n}^{\gamma})\right)\\
&-\frac{1}{n}\left(\chi(R_{uni};\frac{1}{\left|\Gamma'\right|}\sum_{\gamma\in \Gamma'}B_{q}^{\gamma\otimes n})
-\chi(R_{uni};\frac{1}{\left|\Gamma'\right|}\sum_{\gamma\in \Gamma'}Z_{t^n}^{\gamma})\right)\\
&=  \frac{1}{n}\frac{1}{\left|\Gamma'\right|}\sum_{\gamma\in \Gamma'}\left(\chi(R_{uni};B_{q}^{\gamma\otimes n})
-\chi(R_{uni};\frac{1}{\left|\Gamma'\right|}\sum_{\gamma\in \Gamma'}B_{q}^{\gamma\otimes n})\right)\\
&-\frac{1}{n}\frac{1}{\left|\Gamma'\right|}\sum_{\gamma\in \Gamma'}\left(\chi(R_{uni};Z_{t^n}^{\gamma})
+ \frac{1}{n}\chi(R_{uni};\frac{1}{\left|\Gamma'\right|}\sum_{\gamma\in \Gamma'}Z_{t^n}^{\gamma})\right)
\text{  .}\end{align*}\vspace{0.2cm}

Let $G_{uni}$ be the uniformly distributed random variable with value in
$\Gamma'$.
We have
\begin{align}\allowdisplaybreaks[2]&
 \frac{1}{n}\frac{1}{\left|\Gamma'\right|}\sum_{\gamma\in\Gamma'}\chi(R_{uni};B_{q}^{\gamma\otimes n})\allowdisplaybreaks\notag\\
&=\frac{1}{n}\frac{1}{\left|\Gamma'\right|}\sum_{\gamma\in\Gamma'}I(R_{uni};B_{q}^{\gamma\otimes n})\allowdisplaybreaks\notag\\
 &=\frac{1}{n}\frac{1}{\left|\Gamma'\right|}\sum_{\gamma\in\Gamma'}(H(R_{uni})-H(R_{uni}|B_{q}^{\gamma\otimes n}))\allowdisplaybreaks\notag\\
 &=\frac{1}{n}H(R_{uni})-\frac{1}{n}H(R_{uni}|B_{q}^{\gamma\otimes n},\Gamma')\allowdisplaybreaks\notag\\
 &\leq \frac{1}{n}H(R_{uni})-\frac{1}{n}H(R_{uni}|B_{q})+H(G_{uni})\allowdisplaybreaks\notag\\
 &=\frac{1}{n}I(R_{uni};B_{q}^{\otimes n})+H(G_{uni})\allowdisplaybreaks\notag\\
 &=\frac{1}{n}\chi\left(R_{uni};\frac{1}{\left|\Gamma'\right|}\sum_{\gamma\in \Gamma'}B_{q}^{\gamma\otimes n}\right)+H(G_{uni})\allowdisplaybreaks\notag\\
&=\frac{1}{n}\chi\left(R_{uni};\frac{1}{\left|\Gamma'\right|}\sum_{\gamma\in \Gamma'}B_{q}^{\gamma\otimes n}\right)+2 \log n \text{ .}\label{slgutbqnrn}\end{align} \vspace{0.2cm}

Let $\phi_{t^n}^{j,\gamma}$ be
the quantum state at the output of the wiretapper's channel
when the channel state is $t^n$, the output of the common randomness
is $\gamma$,  and $j$ has been sent.

We have
\begin{align}\allowdisplaybreaks[2]&
 \frac{1}{\left|\Gamma'\right|}\sum_{\gamma\in \Gamma'}\chi\left(R_{uni};Z_{t^n}^{\gamma}\right)
-\chi\left(R_{uni};\frac{1}{\left|\Gamma'\right|}\sum_{\gamma\in \Gamma'}Z_{t^n}^{\gamma}\right)\notag\\
&= \frac{1}{\left|\Gamma'\right|}\sum_{\gamma\in \Gamma'}S\left(\frac{1}{J_n}\sum_{j=1}^{J_n}\phi_{t^n}^{j,\gamma}\right)
-\frac{1}{\left|\Gamma'\right|}\frac{1}{J_n}\sum_{\gamma\in \Gamma'}\sum_{j=1}^{J_n}S\left(\phi_{t^n}^{j,\gamma}\right)\notag\\
&-S\left(\frac{1}{\left|\Gamma'\right|}\frac{1}{J_n}\sum_{\gamma\in \Gamma'}\sum_{j=1}^{J_n}\phi_{t^n}^{j,\gamma}\right)
+ \frac{1}{J_n}\sum_{j=1}^{J_n}S\left(\frac{1}{\left|\Gamma'\right|}\sum_{\gamma\in \Gamma'}\phi_{t^n}^{j,\gamma}\right)\text{ .}\label{slgutbqnrn2+}\end{align}

 Let $H^{\mathfrak{G}}$ be a $\left|\Gamma'\right|$-dimensional Hilbert space,
spanned by an orthonormal basis $\{|i\rangle : i = 1, \cdots, \left|\Gamma'\right|\}$. 
Let $H^{\mathfrak{J}}$ be a $J_n$-dimensional Hilbert space, spanned by an orthonormal basis 
$\{|j\rangle : j = 1, \cdots, J_n\}$. 
Similar to (\ref{stqtclpztr}), we define
\[\varphi^{\mathfrak{J}\mathfrak{G}H^{n}}:=\frac{1}{J_n}\frac{1}{\left|\Gamma'\right|}\sum_{j=1}^{J_n}\sum_{\gamma\in \Gamma'}
|j\rangle\langle j|\otimes|i\rangle\langle i|\otimes
\phi_{t^n}^{j,\gamma}\text{ .}\]

 By strong subadditivity of von Neumann entropy, it holds that $S(\varphi^{\mathfrak{J}H^{n}}) + S(\varphi^{\mathfrak{G}H^{n}})$
$\geq$ $S(\varphi^{H^{n}})+S(\varphi^{\mathfrak{J}\mathfrak{G}H^{n}})$, therefore
 \begin{equation}\frac{1}{\left|\Gamma'\right|}\sum_{\gamma\in \Gamma'}\chi\left(R_{uni};Z_{t^n}^{\gamma}\right) 
- \chi\left(R_{uni};\frac{1}{\left|\Gamma'\right|}\sum_{\gamma\in \Gamma'}Z_{t^n}^{\gamma}\right)
\geq 0\text{ .}\label{slgutbqnrn2}\end{equation} \vspace{0.2cm}

By (\ref{slgutbqnrn}) and (\ref{slgutbqnrn2}), we have
\[\chi(R_{uni};B_{q})
-\frac{1}{n}\chi(R_{uni};Z_{t^n}) + 2 \log n \geq
\frac{1}{\left|\Gamma'\right|}\sum_{\gamma\in \Gamma'}\frac{1}{n}
\left(\chi(R_{uni};B_{q}^{\gamma\otimes n})-\chi(R_{uni};Z_{t^n}^{\gamma})\right) \text{ .}\]

Thus for every $B_q \in Conv((B_s)_{s\in \theta})$ and every $t^n\in \theta^n$ we have
\begin{equation} \frac{1}{n}\log J_n\leq \frac{1}{1-\varepsilon}\frac{1}{n}
\left(\chi(R_{uni};B_{q}^{\otimes n})-\chi(R_{uni};Z_{t^n})+\epsilon_{n}+ 2 \frac{1}{n}\log n\right)
\text{  .}\label{tbf1nljnlfin}\end{equation}

Similar to the proof of Theorem \ref{loatbfis}, we have 
$\frac{1}{n}\Bigl(\inf_{B_q \in Conv((B_t)_{t\in \theta})}\chi(R_{uni};B_{q}^{\otimes n})$
$-\max_{t^n\in \theta^n}\chi(R_{uni};Z_{t^n})\Bigr)$ $\leq$
$ \frac{1}{n}	\max_{\Lambda_n}$
$\Bigl(\inf_{B_q \in Conv((B_t)_{t\in \theta})}$ $\chi(p_U;B_q^{\otimes n})$ $-\max_{t^n\in \theta^n}\chi(p_U;Z_{t^n})\Bigr)$.
The converse has been shown.
 (\ref{hatceque3}) and (\ref{tbf1nljnlfin}) prove Theorem \ref{commperm}.

 \qed
\end{proof}

\begin{corollary} Let $\{(W_t,{V}_t): t \in \theta\}$
 be an arbitrarily
varying classical-quantum wiretap channel.

1) Let $\mathsf{X}$ and $\mathsf{Y}$ be finite sets.
If $I(X,Y)>0$ holds for  a
random variable $(X,Y)$ which is distributed to a joint probability distribution
$p\in P(\mathsf{X},\mathsf{Y})$,
then the
 $(X,Y)$
  correlation assisted  secrecy capacity of $\{(W_t,{V}_t): t \in \theta\}$
is equal to \[\lim_{n\rightarrow \infty} \frac{1}{n}	\max_{\Lambda_n}
\Bigl(\inf_{B_q \in Conv((B_t)_{t\in \theta})}\chi(p_U;B_q^{\otimes n})-\max_{t^n\in \theta^n}\chi(p_U;Z_{t^n})\Bigr)\text{ .}\]

2)If the
arbitrarily varying  classical-quantum channel $\{W_t : t \in \theta\}$
is not symmetrizable, then  the deterministic secrecy capacity of $\{(W_t,{V}_t): t \in \theta\}$
is equal to \[ \lim_{n\rightarrow \infty} \frac{1}{n}\max_{\Lambda_n}
	\Bigl(\inf_{B_q \in Conv((B_t)_{t\in \theta})}\chi(p_U;B_q^{\otimes n})-\max_{t^n\in \theta^n}\chi(p_U;Z_{t^n})\Bigr)\text{ .}\]
\label{cocacommrdet}\end{corollary}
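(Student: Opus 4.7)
The plan is to deduce both parts of the corollary from Theorem \ref{commperm} by showing that the $(X,Y)$-correlation assisted secrecy capacity and the deterministic secrecy capacity (under the stated non-symmetrizability condition) both coincide with the common randomness assisted secrecy capacity $C_s(\{(W_t,{V}_t): t \in \theta\};cr)$. The converses will be immediate in both cases, since neither a deterministic code nor a correlation-assisted code can outperform a common randomness assisted code for the same channel (the randomness is a strictly stronger resource and includes the other two as special cases); the respective converse arguments therefore reduce to that of Theorem \ref{commperm}, replacing $\frac{1}{|\Gamma'|}\sum_{\gamma\in\Gamma'}$ by the corresponding averages induced by the correlation marginals or by the trivial single-codeword measure.

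For part \textit{1}, the main step will be to simulate common randomness from the $(X,Y)$-correlation. Since $I(X;Y)>0$, standard results (see \cite{Bo/No}) show that by taking a suitably long $n$-block of i.i.d.\ copies of $(X,Y)$ the sender and receiver can extract, with vanishing error, an essentially uniformly distributed shared bit sequence of length $\sim n I(X;Y)$, which is a large enough index set to index the common randomness assisted code provided by Theorem \ref{commperm}. One then composes the extraction protocol with the common randomness assisted code of Theorem \ref{commperm}; the crucial point is that the eavesdropper, though potentially having access to the outcomes of the random experiment, cannot influence it, while the jammer is oblivious to the extracted index, so the security criterion (\ref{b40}) of the original common randomness assisted code is inherited. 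Since the rate loss due to extraction is negligible at the scale of the capacity formula, achievability follows.

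For part \textit{2}, the plan is to invoke the Ahlswede Dichotomy for arbitrarily varying classical-quantum wiretap channels from \cite{Bo/Ca/De}: the deterministic secrecy capacity either vanishes or coincides with the common randomness assisted secrecy capacity. Hence it suffices to rule out the first alternative under non-symmetrizability of $\{W_t:t\in\theta\}$. Here one appeals to the result that, whenever $\{W_t:t\in\theta\}$ is non-symmetrizable in the sense of Definition \ref{symmet}, the arbitrarily varying classical-quantum channel admits positive deterministic capacity (cf.\ \cite{Ahl/Bj/Bo/No}); in particular the shared randomness used by the common randomness assisted code of Theorem \ref{commperm} can be simulated deterministically at a negligible rate cost via a small derandomization block, in the style of Ahlswede's elimination technique. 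Combining this derandomization with the common randomness assisted code and carefully checking that the stronger security criterion (\ref{b40}) is maintained uniformly in the jammer's choice of $t^n$ finishes the argument.

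The main obstacle I anticipate lies in the security-preserving nature of both the correlation-to-randomness conversion in part \textit{1} and the Ahlswede-type derandomization in part \textit{2}. In both steps, the simulated or eliminated randomness must be chosen in a way that continues to satisfy the strong Holevo security bound uniformly over \emph{all} jammer states $t^n\in\theta^n$, not merely on average; this requires a union-bound/selection argument analogous to Lemma \ref{lcfeqicsit}, adapted so that the selected subset of random indices simultaneously preserves small decoding error for the legitimate user, small Holevo information for the eavesdropper, and the rate guaranteed by Theorem \ref{commperm}. Once this simultaneous selection is in place, both equalities in the corollary follow by combining the achievability and converse with Theorem \ref{commperm}.
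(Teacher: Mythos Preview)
Your proposal is correct and follows essentially the same route as the paper: both parts are reduced to Theorem~\ref{commperm} together with the results of \cite{Bo/Ca/De}. For part~1 the paper simply invokes \cite{Bo/Ca/De} in one line, whereas you spell out the underlying mechanism (use the correlation with $I(X;Y)>0$ to select among the polynomially many codes obtained after Ahlswede-type reduction); for part~2 the paper carries out explicitly the two-block construction you describe---a short deterministic prefix over $\{W_t\}$ (which is possible precisely because $\{W_t\}$ is non-symmetrizable) transmitting the index $i\in\{1,\dots,n^3\}$, followed by the $i$-th code from the reduced common-randomness family---and checks that both error and the strong security bound survive, exactly the ``simultaneous selection'' concern you flag. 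One technical point worth keeping in mind when you write the details of part~2: the deterministic prefix code must be built for the \emph{original} channel $\{W_t:t\in\theta\}$, not for the pre-processed channel $\{W_t\circ T_U:t\in\theta\}$ arising from the auxiliary variable $U$ in the capacity formula, since the latter can be symmetrizable even when the former is not.
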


\begin{proof}

1) follows  immediately from Theorem \ref{commperm}
and the results of \cite{Bo/Ca/De}.\vspace{0.2cm}

To show 2)
 we use a technique similar to the proof of Theorem 3.1 in 
\cite{Bo/Ca/De}: We build a two-part code word
which consists of a non-secure code word and a common randomness assisted
secure code word. The first part is used to create the common randomness
for the sender and the legitimate receiver. The second part is a common randomness assisted secure
code word  transmitting the message to the legitimate receiver.

We consider the Markov chain
 $U\rightarrow A \rightarrow \{B_q^{\otimes n},Z_{t^n}:q,t_n\}$, where
we define the classical channel $U\rightarrow A$ by $T_U$.
Let
\[J_n =  \lfloor  2^{n\inf_{B_q \in Conv((B_s)_{s\in \theta})}\chi(p_U;B_q)
-\max_{t^n\in \theta^n}\chi(p_U;Z_{t^n})-n\delta} \rfloor  \text{ .}\]
By  Theorem \ref{commperm}, for any positive $\epsilon$
  if   $n$ is sufficiently large, 
there is  an   $(n, J_n)$ 
 code $\bigl(E^n, \{D_j^n : j = 1,\cdots J_n\}\bigr)$ 
for the  arbitrarily
varying classical-quantum wiretap channel $\{(W_t\circ T_U,{V}_t\circ T_U): t \in \theta\}$
 such that 
\[\frac{1}{n!}\sum_{\pi\in\Pi_n}\frac{1}{J_n} \sum_{j=1}^{J_n}  \sum_{a^n \in \mathsf{A}^n}
E^n(a^n|j)\mathrm{tr}(W_{t^n}(\pi^{-1}(a^n))P_{\pi}^{\dagger}D_j^n P_{\pi})\geq 1-\epsilon \]
and 
\[\frac{1}{n!}\sum_{\pi\in\Pi_n}\max_{t^n\in\theta^n}
\chi\left(R_{uni};Z_{t^n,\pi}\right)\leq \epsilon\text{ .}\]
By  Theorem 3.1.2 in
\cite{Bo/Ca/De},  for any positive $\lambda$  if   $n$ is sufficiently large, 
there is  an   $(n, J_n)$ common randomness  assisted
 code $\left\{\mathcal{C}_1,\mathcal{C}_2,\cdots,\mathcal{C}_{n^3}\right\}$
for the  arbitrarily
varying classical-quantum wiretap channel $\{(W_t\circ T_U,{V}_t\circ T_U): t \in \theta\}$
 such that
\[ \max_{t^n\in\theta^n}\frac{1}{n^3}\sum_{i=1}^{n^3}P_{e}(\mathcal{C}_{i},t^n)<
\lambda\text{ ,}\] and
\[\max_{t^n\in\theta^n} \frac{1}{n^3}\sum_{i=1}^{n^3}
\chi\left(R_{uni},Z_{\mathcal{C}_{i},t^n}\right) < \lambda\text{
.}\] 
Similar to the proof of  Theorem 3.1.1 in 
\cite{Bo/Ca/De},
for any positive $\vartheta$
if  $\{W_t : t \in \theta\}$
is not symmetrizable and
$n$ is sufficiently large, there is a code $\biggl(\Bigl(c^{\mu(n)}_i\Bigr)_{i\in\{1,\cdots,n^3\}},\{D_i^{\mu(n)}:
i\in\{1,\cdots,n^3\}\}\biggr)$ with deterministic encoder of length $\mu(n)$, where $2^{\mu(n)}=o(n)$
for the  arbitrarily
varying classical-quantum wiretap channel $\{(W_t,{V}_t): t \in \theta\}$
such that\[1- \frac{1}{n^3} \sum_{i=1}^{n^3}
\mathrm{tr}(W_{t^n}(c^{\mu(n)}_i)D_i^{\mu(n)})\leq \vartheta\text{ .}\]
We now can construct a code
$\mathcal{C}^{det} $ $=$ $\biggl(E^{\mu(n)+n},\Bigl\{D_{j}^{\mu(n)+n} :
j=1,\cdots,J_n\Bigr\}\biggr)$, where for $a^{\mu(n)
+n} = (a^{\mu(n)},a^n)\in{\mathsf{A}}^{\mu(n)+n}$ \[E^{\mu(n)+n}(a^{\mu(n)
+n}|j)=\begin{cases}
  \frac{1}{n^3}E^{n}_{i}(a^{n}|j) \text{ if } a^{\mu(n)} = c^{\mu(n)}_i\\
    0 \text{ else}             \end{cases}\text{ ,}
\] and \[D_{j}^{\mu(n)+n} := \sum_{i=1}^{n^3} D_i^{\mu(n)}\otimes D_{i,j}^{n} \text{ .}
\]
Similar to the proof  of Theorem 3.1.1 in 
\cite{Bo/Ca/De}, for any positive $\lambda$  if   $n$ is sufficiently large,  we have
\[ \max_{t^{\mu(n) +n}\in\theta^{\mu(n) +n}}P_{e}(\mathcal{C}^{det},t^{\mu(n) +n})< \lambda\text{ ,}\]
\[\max_{t^{\mu(n) +n}\in\theta^{\mu(n) +n}}
\chi\left(R_{uni},Z_{\mathcal{C}^{det},t^{\mu(n) +n}}\right) < \lambda\text{
.}\] \qed
\end{proof}

\begin{remark} For the proof of Corollary \ref{cocacommrdet}, 2), it is important to assume
that
$\biggl(\Bigl(c^{\mu(n)}_i\Bigr)_{i\in\{1,\cdots,n^3\}},\{D_i^{\mu(n)}:
i\in\{1,\cdots,n^3\}\}\biggr)$
is a code for the channel $\{(W_t,{V}_t): t \in \theta\}$ and not for $\{(W_t\circ T_U,{V}_t\circ T_U): t \in \theta\}$, since it may
happen that $\{W_t\circ T_U: t \in \theta\}$ is symmetrizable although $\{W_t: t \in \theta\}$ is not
symmetrizable, as the following example shows:

We assume that  $\{{W}_t: t \in \theta\}$ :$P(\mathsf{A})\rightarrow \mathcal{S}(H)$
is not symmetrizable, but there is a subset $\mathsf{A}'\subset \mathsf{A}$ such that
 $\{{W}_t: t \in \theta\}$ limited on $\mathsf{A}'$ is symmetrizable.  We choose
a $T_U$ such that  for every  $u\in \mathsf{U}$ there is $a\in \mathsf{A}'$
such that $T_U(a\mid u) = 1$, and 
$T_U(a\mid u) = 0$ for all $a\in \mathsf{A}\setminus \mathsf{A}'$ and $u\in \mathsf{U}$.
It is clear that $\{{W}_t\circ T_U: t \in \theta\}$ is symmetrizable
(cf. also \cite{Wi/No/Bo2} for an example for classical channels).
\end{remark}


\section{Investigation of Secrecy Capacity's Continuity}\label{iosccits}

In this section we show that
the secrecy capacity of
an arbitrarily
varying classical-quantum wiretap channel under
common randomness assisted quantum coding
is  continuous 
in the following sense:

\begin{corollary}
For an arbitrarily
varying classical-quantum wiretap channel 
$\{(W_t,{V}_t): t \in \theta\}$, where
$W_{t}$  $:$
$\mathsf{P}(\mathsf{A}) \rightarrow \mathcal{S}(H)$ and ${V}_t$ $:$ $\mathsf{P}(\mathsf{A})
\rightarrow \mathcal{S}(H')$
 and
a positive $\delta$, let
$\mathsf{C}_{\delta}$ be the set of all
arbitrarily
varying classical-quantum wiretap channels
 $\{({W'}_t,{V'}_t): t \in \theta\}$,
where
${W'}_{t}$  $:$
$\mathsf{P}(\mathsf{A}) \rightarrow \mathcal{S}(H)$ and ${V'}_t$ $:$ $\mathsf{P}(\mathsf{A})
\rightarrow \mathcal{S}(H')$,
 such
that
\[\max_{ a\in \mathsf{A}} \|W_t(a)- {W'}_t(a)\|_{1} <  \delta\]
and
\[\max_{ a\in \mathsf{A}} \|V_t(a)- {V'}_t(a)\|_{1} <  \delta\]
for all $t \in \theta$.

For any positive $\epsilon$ there is a positive $\delta$
such that for all  $\{({W'}_t,{V'}_t): t \in \theta\}$
$\in$ $\mathsf{C}_{\delta}$ we have
	\begin{equation} |C_s(\{(W_t,{V}_t): t \in \theta\};cr)
	-C_s(\{(({W'}_t,{V'}_t): t \in \theta\};cr)| \leq \epsilon
	\text{ .}\end{equation}\label{eetelctit}
\end{corollary}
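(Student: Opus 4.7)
The plan is to apply the capacity formula of Theorem \ref{commperm} to both $\{(W_t,V_t):t\in\theta\}$ and $\{(W'_t,V'_t):t\in\theta\}$ and to control the pointwise difference of the expressions under the $\lim_n \frac{1}{n}\max_{\Lambda_n}$ by a quantity that vanishes as $\delta\to 0$. First I will translate channel closeness into output closeness: from $\max_{a}\|W_t(a)-W'_t(a)\|_1<\delta$ for all $t\in\theta$, a telescoping argument with the triangle inequality yields $\|W_{t^n}(a^n)-W'_{t^n}(a^n)\|_1 \le n\delta$ for every $a^n\in\mathsf{A}^n$ and $t^n\in\theta^n$, and the analogous estimate holds for $V_{t^n}$ versus $V'_{t^n}$ and for any convex combination $\overline{W}_q=\sum_t q(t)W_t$ (whose closeness is again $\delta$, uniformly in $q$, since convex combinations contract trace distance). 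Consequently, every output ensemble element $\Phi^{(q)}_u=\sum_{a^n}p_{A^n|U}(a^n|u)\,\overline{W}_q^{\otimes n}(a^n)$ and its wiretap counterpart $Z_{t^n,u}=\sum_{a^n}p_{A^n|U}(a^n|u)V_{t^n}(a^n)$ differ by at most $n\delta$ in trace norm from their primed versions, uniformly in $u$, $q$, and $t^n$.

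Next I will apply a continuity estimate for the Holevo quantity to pass from state-perturbation bounds to bounds on $|\chi(p_U;B_q^{\otimes n})-\chi(p_U;B'^{\otimes n}_q)|$ and $|\chi(p_U;Z_{t^n})-\chi(p_U;Z'_{t^n})|$. The Fannes--Audenaert inequality (Lemma \ref{eq_9}) handles the single-letter case directly, giving $|\chi(p;B_q)-\chi(p;B'_q)|\le 2\delta\log(\dim H-1)+2h(\delta)$ and analogously on the wiretap side. For the multi-letter Holevo quantities appearing in the formula, I will use the sharper Alicki--Fannes--Winter estimate applied to the classical--quantum state $\sum_u p_U(u)|u\rangle\langle u|\otimes \Phi^{(q)}_u$, whose classical register $U$ has dimension bounded independently of $n$. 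Because the resulting perturbation bound is uniform in $q$ and in $t^n$, it survives the $\inf_{B_q\in \mathrm{Conv}((B_t))}$ and $\max_{t^n\in\theta^n}$ in the formula, and after dividing by $n$ and passing to $\max_{\Lambda_n}$ yields a bound $g(\delta)$ on the difference of the two finite-$n$ expressions with $g(\delta)\to 0$ as $\delta\to 0$, uniformly in $n$. Taking the $n\to\infty$ limit then gives $|C_s(\{(W_t,V_t)\};cr)-C_s(\{(W'_t,V'_t)\};cr)|\le g(\delta)$, and choosing $\delta$ small enough makes the right-hand side at most $\epsilon$.

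The main obstacle is precisely this uniform-in-$n$ control of the continuity bound. A direct application of Fannes--Audenaert on $H^{\otimes n}$ introduces a dimension factor $\log((\dim H)^n)=n\log\dim H$ together with the trace-distance bound $n\delta$, so after division by $n$ one is left with a term proportional to $n\delta\log\dim H$ that is useless for fixed $\delta$ in the limit $n\to\infty$. The resolution is to replace Fannes--Audenaert by an estimate whose coefficient of the perturbation scales with the cardinality of the classical register $|\mathsf{U}|$ rather than with $(\dim H)^n$, i.e.\ the Alicki--Fannes--Winter bound for conditional entropy $S(U|B)$, exploiting the classical--quantum structure of the ensemble in exactly the same spirit as the continuity analyses of \cite{Bo/No2} for arbitrarily varying quantum channels and \cite{Wi/No/Bo} for classical arbitrarily varying wiretap channels. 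Once this sharper bound is in place, the remaining steps (termwise estimate, uniformity over $\Lambda_n$, $q$, $t^n$, and the limit $n\to\infty$) are straightforward, completing the continuity claim.
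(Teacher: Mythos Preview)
Your overall plan is right, and you correctly identify the obstacle: a single application of Fannes--Audenaert on $H^{\otimes n}$ produces a factor $n\log\dim H$ multiplied by the $n\delta$ trace-distance bound, which is useless after dividing by $n$. But your proposed fix has a gap. You claim that applying Alicki--Fannes to the cq-state $\sum_u p_U(u)|u\rangle\langle u|\otimes\Phi_u^{(q)}$ works because ``the classical register $U$ has dimension bounded independently of $n$''. This is false. In the regularized formula $\lim_n\frac{1}{n}\max_{\Lambda_n}(\cdots)$ of Theorem~\ref{commperm}, the optimization $\max_{\Lambda_n}$ runs over \emph{all} finite auxiliary sets $\mathsf{U}$ together with arbitrary $p_{A^n\mid U}$; there is no single-letter reduction here, and Carath\'eodory-type arguments only bound $|\mathsf{U}|$ by something exponential in $n$. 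So writing $\chi=I(U;B)=S(U)-S(U\mid B)$ and applying Alicki--Fannes on the $U$ system gives $4(n\delta)\log|\mathsf{U}|+2h(n\delta)$, and after dividing by $n$ you are left with $4\delta\log|\mathsf{U}|$, which you cannot control uniformly in $n$.

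The paper's remedy is different and avoids $|\mathsf{U}|$ altogether: it uses the Leung--Smith telescoping trick \cite{Le/Sm} on the \emph{output} tensor factors. One writes $S\bigl(\sum_{a^n}p(a^n)V_{t^n}(a^n)\bigr)-S\bigl(\sum_{a^n}p(a^n)V'_{t^n}(a^n)\bigr)$ as a sum of $n$ differences in which only the $k$-th tensor factor is changed from $V'_{t_k}$ to $V_{t_k}$. For each such pair the joint trace distance is at most $\delta$ (not $n\delta$) and the marginals on the remaining $n-1$ factors coincide, so the entropy difference equals a difference of conditional entropies $S(k\text{-th}\mid\text{rest})$. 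Now Alicki--Fannes (Lemma~\ref{AFLswhacs}) applies with the \emph{single-factor} dimension $d_E=\dim H'$, giving a bound $4\delta\log(d_E-1)+2h(\delta)$ per step. Summing over $k$ yields $n\bigl(4\delta\log(d_E-1)+2h(\delta)\bigr)$, and after dividing by $n$ one gets a bound depending only on $\delta$, uniformly in $n$, $t^n$, and the ensemble in $\Lambda_n$. The same telescoping handles the legitimate-receiver term $\chi(p_U;B_q^{\otimes n})$. In short: you telescoped only to bound the trace distance and then applied continuity once; the correct move is to apply Alicki--Fannes \emph{inside} the telescope, step by step, so that the relevant dimension is that of a single output system rather than $|\mathsf{U}|$ or $(\dim H)^n$.
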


\begin{proof}
By Corollary
\ref{cocacommrdet},
the secrecy capacity of
$\{(W_t,{V}_t): t \in \theta\}$ is
	\[\lim_{n\rightarrow \infty} \frac{1}{n}
\max_{\Lambda_n}\Bigl(\inf_{B_q \in Conv((B_t)_{t\in \theta})}\chi(p_U;B_q^{\otimes n})-
 \max_{t^n\in \theta^n}\chi(p_U;Z_{t^n})\Bigr)
	\text{ ,}\] 
and for every $\{({W'}_t,{V'}_t): t \in \theta\}$
$\in$ $\mathsf{C}_{\delta}$ the secrecy capacity of
 $\{({W'}_t,{V'}_t): t \in \theta\}$ is
	\[ \lim_{n\rightarrow \infty} \frac{1}{n}
\max_{\Lambda_n}\Bigl(\inf_{{B'}_q \in Conv(({B'}_t)_{t\in \theta})}\chi(p_U;{B'}_q^{\otimes n})
-\max_{t^n\in \theta^n}\chi(p_U;{Z'}_{t^n})\Bigr)
	\text{ ,}\] 
where ${B'}_t$ is the resulting  quantum state at the output of
${W'}_t$ and ${Z'}_t$ is the resulting  quantum state  at
the output of ${V'}_t$.\vspace{0.2cm}

To analyze $\vert\chi(p;Z_{t^n})-\chi(p;{Z'}_{t^n})\vert$,
we use the technique introduced in
\cite{Le/Sm} and apply the following lemma given in \cite{Al/Fa}. 

\begin{lemma}[Alicki-Fannes Inequality] Suppose we have
a composite system $\mathfrak{PQ}$ with components
 $\mathfrak{P}$ and $\mathfrak{Q}$. Let $G^\mathfrak{P}$ and $G^\mathfrak{Q}$ be
 Hilbert space of $\mathfrak{P}$ and $\mathfrak{Q}$, respectively.  
Suppose we have two
bipartite
quantum states $\phi^\mathfrak{PQ}$ and
 $\sigma^\mathfrak{PQ}$ in $\mathcal{S}(G^\mathfrak{PQ})$ such that
$\|\phi^\mathfrak{PQ}-\sigma^\mathfrak{PQ}\|_{1} = \epsilon <1$,  it holds that
\begin{equation}S(\mathfrak{P}\mid\mathfrak{Q})_{\rho}- S(\mathfrak{P}\mid\mathfrak{Q})_{\sigma}
\leq 4 \epsilon \log(d-1) - 2h(\epsilon)\text{ ,}\end{equation}
where $d$ is the dimension of $G^\mathfrak{P}$ and $h(\epsilon)$ is defined as in Lemma \ref{eq_9}.\label{AFLswhacs}
\end{lemma}

 In contrast to \cite{Al/Fa},  we consider here
classical-quantum channels instead of quantum-quantum channels.

We fix an $n\in\mathbb{N}$ and a $t^n$ $= $ $(t_1,\cdots t_n)$ $\in\theta^n$.
For any $a^n\in \mathsf{A}^n$ we have
\begin{align*}&\left|S\left({V}_{t^n}(a^n)\right)-S\left({V'}_{t^n}(a^n)\right)\right|\\
&= \biggl|\sum_{k=1}^{n} S\left({V}_{(t_1,\cdots t_{k-1})}\otimes{V'}_{(t_{k},\cdots t_n)}(a^n )\right)
-S\left({V}_{(t_1,\cdots t_k)}\otimes{V'}_{(t_{k+1},\cdots t_n)}(a^n)\right)\biggr|\\
&\leq \sum_{k=1}^{n} \biggl|S\left({V}_{(t_1,\cdots t_{k-1})}\otimes {V'}_{(t_{k},\cdots t_n)}(a^n)\right)
-S\left({V}_{(t_1,\cdots t_k)}\otimes{V'}_{(t_{k+1},\cdots t_n)}(a^n)\right)\biggr|\text{ .}
\end{align*}\vspace{0.2cm}

For a $k\in\{1,\cdots, n\}$ and $a^n$ $= $ $(a_1,\cdots a_n)$ $\in \mathsf{A}^n$ by Lemma \ref{AFLswhacs}
we have
\begin{align*}&\biggl|S\left({V}_{(t_1,\cdots t_{k+1})}\otimes {V'}_{(t_{k},\cdots t_n)}(a^n)\right)
-S\left({V}_{(t_1,\cdots t_{k+1})}\otimes{V'}_{(t_{k+1},\cdots t_n)}(a^n)\right)\biggr|\\
&=\biggl|S\left({V}_{(t_1,\cdots t_k)}\otimes {V'}_{(t_{k},\cdots t_n)}(a^n)\right)
-S\left({V}_{(t_1,\cdots t_{k-1})}\otimes {V'}_{(t_{k+1},\cdots t_n)}((a_1,\cdots a_{k-1}, a_{k+1},\cdots a_n))\right)\\
&-S\left({V}_{(t_1,\cdots t_k)}\otimes{V'}_{(t_{k+1},\cdots t_n)}(a^n)\right)
+ S\left({V}_{(t_1,\cdots t_{k-1})}\otimes {V'}_{(t_{k+1},\cdots t_n)}((a_1,\cdots a_{k-1}, a_{k+1},\cdots a_n))\right)\biggr|\\
&= \biggl| S\left( {V'}_{t_{k}}(a_k)\mid  {V}_{(t_1,\cdots t_{k-1})}\otimes {V'}_{(t_{k+1},\cdots t_n)}((a_1,\cdots a_{k-1}, a_{k+1},\cdots a_n))  \right)\\
&-S\left( {V}_{t_{k}}(a_k)\mid  {V}_{(t_1,\cdots t_{k-1})}\otimes {V'}_{(t_{k+1},\cdots t_n)}((a_1,\cdots a_{k-1}, a_{k+1},\cdots a_n))  \right) \biggr|\\
&\leq  4 \delta \log(d_E-1) - 2\cdot h(\delta) \text{ ,}
\end{align*}
where $d_E$ is the dimension of $H^\mathfrak{E}$.\vspace{0.2cm}

Thus, 
\begin{equation}\left|S\left({V}_{t^n}(a^n)\right)-
S\left({V'}_{t^n}(a^n)\right)\right|\leq  4 n\delta \log(d_E-1) - 2n\cdot h(\delta) \text{ .}
\label{lswtnanr}\end{equation}

For any probability distribution $p\in \mathsf{P}(\mathsf{A})$, $n\in\mathbb{N}$, and $t^n\in\theta^n$, we have
\begin{align}&
\vert \chi(p;Z_{t^n}) - \chi(p;{Z'}_{t^n}) \vert \notag\\
&=\Bigl\vert S(\sum_{a}p(a)V_{t^n}(a))- \sum_{a}p(a)S({V}_{t^n}(a))  \notag\\
&- S(\sum_{a}p(a){V'}_{t^n}(a)) + S(\sum_{a}p(a){V'}_{t^n}(a)) \Bigr\vert \notag\\
&\leq\Bigl\vert S(\sum_{a}p(a){V}_{t^n}(a))- S(\sum_{a}p(a){V'}_{t^n}(a)) \Bigr\vert \notag\\
&+ \Bigl\vert \sum_{a}p(a)S({V'}_{t^n}(a))  - \sum_{a}p(a)S({V'}_{t^n}(a)) \Bigr\vert \notag\\
&\leq 8 n\delta \log(d_E-1) - 4n\cdot h(\delta)\text{ .}\label{lbndln14n}
\end{align}

\vspace{0.2cm}

We fix a  probability distribution $q$ on $\theta$, a probability distribution $p\in \mathsf{P}(\mathsf{A})$,
  and an $n\in\mathbb{N}$. By Lemma \ref{eq_9} 
we have

\begin{align}&
\vert \chi(p;B_q) - \chi(p;{B'}_q) \vert \notag\\
&=\Bigl\vert \sum_{t}q(t)S(\sum_{a}p(a)W_t(a))- \sum_{t}\sum_{a}q(t)p(a)S(W_t(a))  \notag\\
&- \sum_{t}q(t)S(\sum_{a}p(a){W'}_t(a)) + S(\sum_{t}\sum_{a}q(t)p(a){W'}_t(a)) \Bigr\vert \notag\\
&\leq\Bigl\vert \sum_{t}q(t)S(\sum_{a}p(a)W_t(a))- \sum_{t}q(t)S(\sum_{a}p(a){W'}_t(a)) \Bigr\vert \notag\\
&+ \Bigl\vert \sum_{t}\sum_{a}q(t)p(a)S(W_t(a))  - S(\sum_{t}\sum_{a}q(t)p(a){W'}_t(a)) \Bigr\vert \notag\\
&\leq 8 \delta \log(d_B-1) - 4\cdot h(\delta)\text{ ,}
\end{align}\vspace{0.2cm}
where $d_B$ is the dimension of $H^\mathfrak{B}$.\vspace{0.2cm}

Thus, for any probability distribution $q$ on $\theta$, 
$n\in\mathbb{N}$,  $p\in \mathsf{P}(\mathsf{A})$, and $t^n\in\theta^n$, we have for
all $\{({W'}_t,{V'}_t): t \in \theta\}$
$\in$ $\mathsf{C}_{\delta}$ 
\begin{align}&
\Bigl\vert  (\chi(p;B_q)-\frac{1}{n}\chi(p;Z_{t^n}))
-  (\chi(p;{B'}_q)-\frac{1}{n}\chi(p;{Z'}_{t^n}))\Bigr\vert\notag\\
&\leq 
8 \delta \log(d_B-1) + 8 \delta \log(d_E-1)-
 8\cdot h(\delta)
	\text{ .}\end{align}
	
For any positive $\epsilon$ we can find a positive $\delta$
such that $8 \delta \log(d_B-1)$ $ + $ $8 \delta \log(d_E-1)$
 $-$ $8 \cdot h(\delta)$ 
$\leq$ $\epsilon$.

Thus for all $n\in\mathbb{N}$ and any positive $\epsilon$ we can find a positive $\delta$
such that for
all $\{({W'}_t,{V'}_t): t \in \theta\}$
$\in$ $\mathsf{C}_{\delta}$ 
	\begin{align}&\Bigl\vert
(\max_{p}\inf_{B_q \in Conv((B_t)_{t\in \theta})}\chi(p;B_q)- \max_{t^n\in \theta^n}\chi(p;Z_{t^n}))\notag\\
&	- (\max_{p}\inf_{{B'}_q \in Conv(({B'}_t)_{t\in \theta})}\chi(p;{B'}_q)
-\frac{1}{n}\max_{t^n\in \theta^n}\chi(p;{Z'}_{t^n}))\Bigr\vert\notag\\
&\leq 	\epsilon\text{ .}\label{bv1nmbibq1}\end{align}
	
	(\ref{bv1nmbibq1}) shows Corollary \ref{eetelctit}. \qed
\end{proof}

 \begin{corollary}
The deterministic secrecy capacity of
an arbitrarily
varying classical-quantum wiretap channel
is in general not continuous.
\label{tscaav}\end{corollary}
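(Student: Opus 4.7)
The plan is to combine the Ahlswede Dichotomy for the arbitrarily varying classical-quantum wiretap channel, established in \cite{Bo/Ca/De} and invoked in Corollary \ref{cocacommrdet}, item 2, with the continuity of the common randomness assisted secrecy capacity provided by Corollary \ref{eetelctit}, and then to exhibit a specific channel at which symmetrizability breaks down under arbitrarily small perturbations.

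First I would recall that the Ahlswede Dichotomy forces $C_s(\{(W_t,V_t): t\in\theta\})$ to be either zero (precisely when $\{W_t : t\in\theta\}$ is symmetrizable in the sense of Definition \ref{symmet}) or equal to the common randomness assisted capacity $C_s(\{(W_t,V_t): t\in\theta\};cr)$ given by Theorem \ref{commperm}. Hence it suffices to find a symmetrizable arbitrarily varying classical-quantum wiretap channel $\{(W_t,V_t): t\in\theta\}$ whose common randomness assisted secrecy capacity is strictly positive: its deterministic secrecy capacity is zero, while any arbitrarily $\|\cdot\|_1$-close but non-symmetrizable perturbation $\{(W_t',V_t'): t\in\theta\}$ has deterministic secrecy capacity equal to the CR-assisted value of the perturbed channel, which by Corollary \ref{eetelctit} lies within $\varepsilon$ of $C_s(\{(W_t,V_t): t\in\theta\};cr)$.

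Next I would construct such an example. Working with $|\theta|=2$ and a binary input alphabet reduces the defining equation of Definition \ref{symmet} to a single linear identity in the symmetrizer $\tau(\cdot\mid a)$; I would pick $W_1, W_2$ so that this identity admits a stochastic solution, and take $V_1=V_2$ to be a sufficiently depolarizing classical-quantum channel, so that $\chi(p;V_{t^n})$ is very small uniformly in $t^n$ while $\chi(p;B_q)$ remains bounded away from that value. The formula of Theorem \ref{commperm} then yields $C_s(\cdot;cr) > 0$ at the base channel. A generic small perturbation $W_t\mapsto W_t'$ destroys the symmetrizability identity, since symmetrizability cuts out a proper real-algebraic subvariety in the channel parameter space, while leaving $V_t$ essentially unchanged.

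The conclusion is then immediate: along a sequence $\delta_n\downarrow 0$ of such perturbations one obtains a sequence of channels in $\mathsf{C}_{\delta_n}$ whose deterministic secrecy capacities are bounded below by $C_s(\{(W_t,V_t): t\in\theta\};cr) - o(1) > 0$, while the deterministic secrecy capacity at the unperturbed base channel equals zero, which is the desired discontinuity. The main technical obstacle is arranging the example so that the three conditions (symmetrizability at the base point, non-symmetrizability of arbitrarily close perturbations, and strict positivity of $C_s(\cdot;cr)$) hold simultaneously; the first two are secured by placing the base channel on a codimension-one stratum of the symmetrizability variety so that a generic perturbation direction exits it immediately, and the third by keeping $V_t$ close enough to a completely depolarizing channel that the Holevo subtraction in Theorem \ref{commperm} cannot cancel the legitimate channel term.
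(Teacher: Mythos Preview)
Your strategy is exactly the paper's: use the Ahlswede Dichotomy from \cite{Bo/Ca/De} together with the continuity of the CR-assisted capacity (Corollary~\ref{eetelctit}) to reduce the problem to exhibiting a symmetrizable base channel with $C_s(\cdot;cr)>0$ that admits arbitrarily close non-symmetrizable neighbours. The paper carries this out by an explicit one-parameter family $\{(W_t^\lambda,V_t^\lambda):t\in\{1,2\}\}_{\lambda\in[0,1]}$ with binary input, verifying directly that $\lambda=0$ is symmetrizable with $C_s(\cdot;cr)\ge\tfrac12$ while every $\lambda>0$ is non-symmetrizable.

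The one point in your sketch that is not yet a proof is the appeal to genericity: the set of symmetrizable $\{W_t\}$ is the projection (over the auxiliary $\tau$) of an algebraic set subject to stochasticity constraints, hence semi-algebraic rather than algebraic, and it is not automatic that it has empty interior or that your chosen base point sits on a codimension-one stratum with a transverse exit direction. The paper avoids this by direct computation on its explicit family; if you want to keep the genericity viewpoint, you must still exhibit at least one concrete curve $\lambda\mapsto W_t^\lambda$ through your base point along which the symmetrizing equations become inconsistent for all $\lambda\neq 0$, which is essentially what the paper does.
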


\begin{proof}

We show Corollary \ref{tscaav} by giving an example.\vspace{0.15cm}

Let $\theta:=\{1,2\}$. Let $\mathsf{A}$ $=$ $\{0,1\}$.
Let ${H}^{\mathfrak{B}}$ $=$ $\mathbb{C}^{5}$.
Let $\{|0\rangle^{\mathfrak{B}}, |1\rangle^{\mathfrak{B}}, 
|2\rangle^{\mathfrak{B}}, |3\rangle^{\mathfrak{B}}, |4\rangle^{\mathfrak{B}}\}$ be a set of orthonormal vectors
on ${H}^{\mathfrak{B}}$. Let $\lambda$ be $\in [0,1]$.

For $r\in[0,1]$, let $P_{r}$ be the probability distribution on $\mathsf{A}$
such that $P_{r}(0)=r$ and $P_{r}(1)=1-r$.
 We define a  channel $W_{1}^{\lambda}$ $:\mathsf{P}(\mathsf{A})$ $\rightarrow$ $\mathcal{S}({H}^{\mathfrak{B}})$
by 
\[W_{1}^{\lambda} (P_{r})= (1-\lambda) r |0\rangle\langle 0|^{\mathfrak{B}} 
+ (1-\lambda) (1-r)|1\rangle\langle 1|^{\mathfrak{B}}
+\lambda |3\rangle\langle 3|^{\mathfrak{B}}\text{ ,}\]
and a  channel $W_{2}^{\lambda}$ $:\mathsf{P}(\mathsf{A})$ $\rightarrow$ $\mathcal{S}({H}^{\mathfrak{B}})$
by 
\[W_{2}^{\lambda} (P_{r})= (1-\lambda) r |1\rangle\langle 1|^{\mathfrak{B}} 
+ (1-\lambda) (1-r)|2\rangle\langle 2|^{\mathfrak{B}}
+\lambda |4\rangle\langle 4|^{\mathfrak{B}}\text{ .}\]

In other words:
\[W_{1}^{\lambda} (0)= (1-\lambda)  |0\rangle\langle 0|^{\mathfrak{B}} 
+\lambda |3\rangle\langle 3|^{\mathfrak{B}}\text{ ,}\]
\[W_{1}^{\lambda} (1)=  (1-\lambda) |1\rangle\langle 1|^{\mathfrak{B}}
+\lambda |3\rangle\langle 3|^{\mathfrak{B}}\text{ ,}\]

\[W_{2}^{\lambda} (0)= (1-\lambda)  |1\rangle\langle 1|^{\mathfrak{B}} 
+\lambda |4\rangle\langle 4|^{\mathfrak{B}}\text{ ,}\]
\[W_{2}^{\lambda} (1)=  (1-\lambda) |2\rangle\langle 2|^{\mathfrak{B}}
+\lambda |4\rangle\langle 4|^{\mathfrak{B}}\text{ .}\]\vspace{0.2cm}

Let ${H}^{\mathfrak{E}}$ $=$ $\mathbb{C}^{5}$.
Let $\{|0\rangle^{\mathfrak{E}}, |1\rangle^{\mathfrak{E}}, 
|2\rangle^{\mathfrak{E}}, |3\rangle^{\mathfrak{E}}, |4\rangle^{\mathfrak{E}}$ be a set of orthonormal vectors
on ${H}^{\mathfrak{E}}$.

 We define a  channel $V_{1}^{\lambda}$ $:\mathsf{P}(\mathsf{A})$ $\rightarrow$ $\mathcal{S}({H}^{\mathfrak{E}})$
by 
\[V_{1}^{\lambda} (P_{r})= \lambda r |0\rangle\langle 0|^{\mathfrak{E}} 
+ \lambda (1-r)|1\rangle\langle 1|^{\mathfrak{E}}
+ (1-\lambda)|3\rangle\langle 3|^{\mathfrak{E}}\text{ ,}\]
and a  channel $V_{2}^{\lambda}$ $:\mathsf{P}(\mathsf{A})$ $\rightarrow$ $\mathcal{S}({H}^{\mathfrak{E}})$
by 
\[V_{2}^{\lambda} (P_{r})=  \lambda r |1\rangle\langle 1|^{\mathfrak{E}} 
+  \lambda (1-r)|2\rangle\langle 2|^{\mathfrak{E}}
+(1-\lambda) |4\rangle\langle 4|^{\mathfrak{E}}\text{ .}\]

In other words:
\[V_{1}^{\lambda} (0)= \lambda  |0\rangle\langle 0|^{\mathfrak{E}} 
+ (1-\lambda)|3\rangle\langle 3|^{\mathfrak{E}}\text{ ,}\]
\[V_{1}^{\lambda} (1)= \lambda |1\rangle\langle 1|^{\mathfrak{E}}
+ (1-\lambda)|3\rangle\langle 3|^{\mathfrak{E}}\text{ ,}\]

\[V_{2}^{\lambda} (0)= \lambda  |1\rangle\langle 1|^{\mathfrak{E}} 
+ (1-\lambda)|4\rangle\langle 4|^{\mathfrak{E}}\text{ ,}\]
\[V_{2}^{\lambda} (1)= \lambda |2\rangle\langle 2|^{\mathfrak{E}}
+ (1-\lambda)|4\rangle\langle 4|^{\mathfrak{E}}\text{ .}\]\vspace{0.2cm}

For every $a\in \mathsf{A}$ and $t\in\theta$ we have 
\begin{align*}&\|W_{t}^{0}(a)- W_{t}^{\lambda}(a)\|_{1}\\
&=\|\lambda |t+a-1\rangle\langle t+a-1|^{\mathfrak{B}} - \lambda|t+2\rangle\langle t+2|^{\mathfrak{B}}\|_{1}\\
& = 2\lambda \end{align*} and
\begin{align*}&\|V_{t}^{0}(a)- V_{t}^{\lambda}(a)\|_{1}\\
&=\|-\lambda |t+a-1\rangle\langle t+a-1|^{\mathfrak{E}} + \lambda|t+2\rangle\langle t+2|^{\mathfrak{E}}\|_{1}\\
& = 2\lambda \text{ .}\end{align*}

$\{(W_t^{\lambda},{V}_t^{\lambda}): t \in \theta\}$
defines an arbitrarily
varying classical-quantum wiretap channel
for every $\lambda\in[0,1]$.\vspace{0.2cm} 

At first, we  consider $\{(W_t^{0},{V}_t^{0}): t \in \theta\}$.\vspace{0.2cm}

\it i) The deterministic
secrecy capacity of $\{(W_t^{0},{V}_t^{0}): t \in \theta\}$
 is equal to zero.
  \rm\vspace{0.2cm}

We set
\begin{align*}&\tau(1\mid 0) = 0\text{ ; } ~~\tau(2\mid 0) = 1\text{ ;}\\
&\tau(1\mid 1) = 1\text{ ; } ~~\tau(2\mid 1) = 0\text{ .}\end{align*}

It holds that

\[\sum_{t\in\theta}\tau(t\mid 0)W_{t}^{0}({1})=|1\rangle\langle 1|^{\mathfrak{E}} 
=\sum_{t\in\theta}\tau(t\mid {1})W_{t}^{0}(0)\text{ ,}\]
and of course for every  $a\in \mathsf{A}$ 
\[\sum_{t\in\theta}\tau(t\mid a)W_{t}^{0}({a})=\sum_{t\in\theta}\tau(t\mid {a})W_{t}^{0}(a)\text{ .}\]
$\{(W_t^{0}): t \in \theta\}$ is therefore symmetrizable.
By \cite{Bo/Ca/De}, 
we have

\begin{equation} C_s(\{(W_t^{0},{V}_t^{0}): t \in \theta\})=0\text{ .}\label{foracswtvtcr1}\end{equation}\vspace{0.2cm}

\it ii) The
secrecy capacity of $\{(W_t^{0},{V}_t^{0}): t \in \theta\}$
under common randomness assisted quantum
coding
is positive.  \rm\vspace{0.2cm}

We denote by $p'\in \mathsf{P}(\mathsf{A})$ the distribution on $\mathsf{A}$ such that $p'(1)=p'(2)=\frac{1}{2}$.
Let $q\in[0,1]$.  We define $Q(1)=q$,   $Q(2)=1-q$. We have
\begin{align*}
&\chi\left(p',\{W_Q^{0}(a): a\in {\mathsf{A}}\}\right)\\
&=-\frac{1}{2}q\log \frac{1}{2}q + \frac{1}{2}(1-q)\log\frac{1}{2}(1-q)
-\frac{1}{2}\log\frac{1}{2}\\
&+q\log q + (1-q)\log(1-q)\text{ .}\end{align*}
When we differentiate  this term    by $q$,
we obtain
\begin{align*}&\frac{1}{\log e}\biggl(-\frac{1}{2}\log \frac{1}{2}q - \frac{1}{2}+ 
\frac{1}{2}\log\frac{1}{2}(1-q)+\frac{1}{2}
+\log q +1- \log(1-q)-1\biggr)\\
&=\frac{1}{2\log e}\left(\log q - \log(1-q)\right)\text{ .}\end{align*}
$\log q - \log(1-q)$ is equal to zero if and only if $q=\frac{1}{2}$. By further calculation,
one can show
that  $\chi\left(p',\{W_Q^{0}(a): a\in {\mathsf{A}}\}\right)$ achieves its minimum when  $q=\frac{1}{2}$.  This
minimum is equal to $-\frac{1}{2}\log\frac{1}{4}+\frac{1}{2}\log\frac{1}{2}$ $=$ $\frac{1}{2}$ $>0$.
Thus, \[\max_p\min_{q} \chi\left(p,B_q^{0}\right)\geq\frac{1}{2}\text{ .}\]

For  all $t\in \theta$ it holds that
${V}_{t}^{0} (0) = {V}_{t}^{0} (1)$; 
therefore for  all $t^n\in \theta^n$ and any $p^n\in \mathsf{P}(\mathsf{A}^n)$  we have
\begin{align*}&\chi(p;Z_{t^n}^{0})\\
&= S({V}_{t^n}^{0} (p^n) ) - \sum_{a^n\in \mathsf{A}^n} p^n(a^n)S({V}_{t^n}^{0} (a^n) ) \\
&= S({V}_{t^n}^{0} (0^n) ) - \sum_{a^n\in \mathsf{A}^n} p^n(a^n)S({V}_{t^n}^{0} (0^n) ) \\
&=0\text{ .}\end{align*} Thus, 

\begin{equation} C_s(\{(W_t^{0},{V}_t^{0}): t \in \theta\},cr) \geq  \frac{1}{2}-0 >0\text{ .}\label{foracswtvtcr}\end{equation}
 \vspace{0.2cm}

Now we consider $\{(W_t^{\lambda},{V}_t^{\lambda}): t \in \theta\}$
when $\lambda\not= 0$. \vspace{0.2cm}

\it iii) When $\lambda\not= 0$,
the deterministic secrecy capacity of
$\{(W_t^{\lambda},{V}_t^{\lambda}): t \in \theta\}$ is equal to its
secrecy capacity of 
under common randomness assisted quantum
coding.\rm \vspace{0.2cm}

We suppose that for any $a,a'\in \mathsf{A}$ there are two  distributions $\tau(\cdot\mid a)$ and $\tau(\cdot\mid a')$
 on $\theta$ such that
\begin{align}&\sum_{t\in\theta} \tau(t\mid a')\cdot W_{t}^{\lambda} (a) 
= \sum_{t\in\theta} \tau(t\mid a) \cdot W_{t}^{\lambda} (a')\notag\\
& \Rightarrow (1-\lambda) \sum_{t\in\theta} \tau(t\mid a') |t+a-1\rangle\langle t+a-1|^{\mathfrak{B}} +
\lambda  \tau(1\mid a')|3\rangle\langle 3|^{\mathfrak{E}} +   \lambda  \tau(2\mid a')|4\rangle\langle 4|^{\mathfrak{E}}\notag\\
&=  (1-\lambda) \sum_{t\in\theta} \tau(t\mid a)|t+a'-1\rangle\langle t+a'-1|^{\mathfrak{B}} +
\lambda  \tau(1\mid a)|3\rangle\langle 3|^{\mathfrak{E}} +   \lambda  \tau(2\mid a)|4\rangle\langle 4|^{\mathfrak{E}}
 \text{ .}\label{em4lr42ap}\end{align}
Since $|t+a-1\rangle\langle t+a-1|^{\mathfrak{B}} $ $\in$
$\Bigl\{|0\rangle\langle 0|^{\mathfrak{E}},$ $|1\rangle\langle 1|^{\mathfrak{E}},$ 
$|2\rangle\langle 2|^{\mathfrak{E}}\Bigr\}$ for all $t$ and $a$, if $\lambda\not= 0$,  (\ref{em4lr42ap}) 
 implies that 
\[ \tau(t\mid a')= \tau(t\mid a)\]
for all $t\in\theta$.  This means we have a
 distribution $\acute{p}$ on $\theta$ such that 
$\acute{p}(t) = \tau(t\mid a)$ for all $a \in \mathsf{A}$.

But there is 
clearly no such distribution $\acute{p}$ such that
$\sum_{t\in\theta}\acute{p}(t) W_{t}^{\lambda}(0)$ $=$ $\sum_{t\in\theta}\acute{p}(t) W_{t}^{\lambda}(1)$,
because then we would have
\begin{align*}&\acute{p}(1) |0\rangle\langle 0|^{\mathfrak{B}} + \acute{p}(2) |1\rangle\langle 1|^{\mathfrak{B}} \\
&= \acute{p}(1) |1\rangle\langle 1|^{\mathfrak{B}} + \acute{p}(2) |2\rangle\langle 2|^{\mathfrak{B}} \text{ .}\end{align*}
This would mean $\acute{p}(1)=\acute{p}(2)=0$, which obviously cannot be true.
Thus, $(W_{t}^{\lambda})_{t\in\theta}$ is not symmetric.

By \cite{Bo/Ca/De}, if $\lambda\not= 0$

\begin{equation} C_s(\{(W_t^{\lambda},{V}_t^{\lambda}): t \in \theta\}) 
= C_s(\{(W_t^{\lambda},{V}_t^{\lambda}): t \in \theta\},cr)\text{ .}\label{foracswtvtcr3}\end{equation}
\vspace{0.2cm}

When $\lambda \searrow 0$ for every $a\in \mathsf{A}$ and $t\in\theta$ we have
$\|W_{t}^{0}(a)- W_{t}^{\lambda}(a)\|_{1}$ $=$ 
$\|V_{t}^{0}(a)- V_{t}^{\lambda}(a)\|_{1} = 2\lambda$
$\searrow 0$.

By Corollary \ref{eetelctit},
the secrecy capacity of $\{(W_t^{\lambda},{V}_t^{\lambda}): t \in \theta\}$
under common randomness assisted quantum
coding is continues. Thus
for any positive $\varepsilon$ there is a  $\delta$, such that for all $\lambda\in ]0,\delta[$, we have

\begin{equation}  C_s(\{(W_t^{\lambda},{V}_t^{\lambda}): t \in \theta\}) \geq
C_s(\{(W_t^{0},{V}_t^{0}): t \in \theta\},cr)- \varepsilon \geq \frac{1}{2}-\varepsilon\text{ .}\label{foracswtvtcr4}\end{equation}

In other words,
when $\lambda\not= 0$ tends to
zero, the deterministic secrecy capacity of
$\{(W_t^{\lambda},{V}_t^{\lambda}): t \in \theta\}$
tends to the  secrecy capacity of $\{(W_t^{0},{V}_t^{0}): t \in \theta\}$
under common randomness assisted quantum
coding, which is positive, but
the deterministic secrecy capacity of
$\{(W_t^{0},{V}_t^{0}): t \in \theta\}$ is equal to zero.
Hence, the deterministic secrecy capacity of
$\{(W_t^{\lambda},{V}_t^{\lambda}): t \in \theta\}$
is  not continues at zero. \qed
\end{proof}

Corollary \ref{tscaav} shows that small
errors in the description of an arbitrarily varying classical-quantum
 wiretap channel may have severe consequences on the secrecy
capacity. Corollary \ref{eetelctit}  shows that 
resources are very helpful to protect these consequences.

\section{Conclusion}\label{conclav}
In this paper, we
 deliver the  formula for
the secrecy capacities under common
randomness assisted coding of arbitrarily varying classical-quantum
wiretap  channels. In  our
previous paper 
\cite{Bo/Ca/De}, we established the Ahlswede Dichotomy for arbitrarily varying classical-quantum
wiretap  channels:  Either the deterministic secrecy capacity of an
arbitrarily varying classical-quantum wiretap channel is zero or it
equals its randomness assisted secrecy
 capacity, depending on the status whether the legitimate
receiver's channel is symmetrizable or not.
  When we combine the results of these two works
we can now completely characterize  the secrecy capacity formulas for
arbitrarily varying classical-quantum
wiretap  channels (cf. Corollary  \ref{cocacommrdet}).

 As an application of these results, we
turn to the general question: When is
secure message transmission through
arbitrarily varying classical-quantum
wiretap  channels continuous?
Our results show the discontinuity in general
 and
demonstrate the importance of shared randomness: it
stabilizes the secure message transmission through
arbitrarily varying classical-quantum
wiretap  channels.

\section*{Acknowledgment}
Support by the Bundesministerium f\"ur Bildung und Forschung (BMBF)
via Grant 16KIS0118K and  16KIS0117K is gratefully acknowledged.

\end{document}